\documentclass[12pt,a4paper]{article}
\usepackage[a4paper,margin=2.55cm]{geometry}
\usepackage[T1]{fontenc}
\usepackage[utf8]{inputenc}
\usepackage{lmodern}
\usepackage{microtype}
\usepackage{amsmath,amssymb,amsthm}
\usepackage{booktabs}
\usepackage{array}
\usepackage{url}
\usepackage{xurl}
\usepackage{xcolor}
\usepackage{tikz}
\usetikzlibrary{arrows.meta,calc,patterns,positioning}
\usepackage{algorithm}
\usepackage{algpseudocode}
\usepackage{hyperref}
\usepackage{orcidlink}
\hypersetup{colorlinks=true,linkcolor=blue!60!black,citecolor=blue!60!black,urlcolor=blue!60!black}

\newcommand{\Rtwo}{R_2}
\newcommand{\Rtwoavg}{\overline{R}_2}
\newcommand{\dd}{\,\mathrm d}
\newcommand{\RR}{\mathbb R}
\newcommand{\DeltaTwo}{\Delta_2}
\newcommand{\DeltaN}{\Delta_{N-1}}
\newcommand{\boxop}{\operatorname{box}}

\newcommand{\HV}{\operatorname{HV}}

\newtheorem{theorem}{Theorem}
\newtheorem{proposition}{Proposition}
\newtheorem{lemma}{Lemma}
\newtheorem{corollary}{Corollary}
\newtheorem{remark}{Remark}

\title{Computing the Integral \texorpdfstring{$R_2$}{R2} Indicator by \\
Perspective Mapping and Box Decomposition}
\author{Michael T. M. Emmerich\,\orcidlink{0000-0002-7342-2090}\\
\small Faculty of Information Technology, University of Jyväskylä, Finland}
\date{}

\begin{document}
\maketitle

\begin{abstract}
The continuous integral $R_2$ indicator is a Pareto-compliant refinement of the classical finite-weight-vector $R_2$ indicator, used in performance assessment, bounded archiving for a-posteriori multi-objective optimization, and skyline selection in databases. This work introduces a bidirectional perspective mapping between continuous integral $R_2$ computation and integration over unions of anchored axis-aligned boxes. After translating the ideal point of a minimization problem to the origin, approximation points become strictly positive loss vectors, and the subgraph of the lower weighted Tchebycheff envelope over the weight simplex maps to the complement of an anchored-box union in reciprocal objective space. The Jacobian gives an absolute $R_2$ formula as a weighted complement volume with density $(x_1+\cdots+x_N)^{-(N+1)}$, while differences of $R_2$ values become finite weighted hypervolume differences. Hence, hypervolume algorithms that emit box decompositions can be reused by replacing ordinary box volumes with closed-form weighted box integrals. For $N$ objectives, this gives an output-sensitive overhead $O(2^N M)$ for an $M$-box decomposition, or $O(M)$ for fixed $N$. Using existing box-decomposition approaches, the integral $R_2$ can be computed in $O(n \log n)$ for $N=2,3$, in $O(n^2)$ for $N=4$, and in $O\left(n^{\lfloor (N-1)/2\rfloor+1}\right)$ for $N\geq4$, with $n$ denoting the size of the approximation set. On the lower-bound side, exact value computation has an $\Omega(n\log n)$ lower bound in the algebraic decision-tree model already in two objectives, this bound lifts to every fixed $N\geq2$, and exact computation is $\#P$-hard when $N$ is part of the input. Together, the proposed perspective mapping provides a powerful tool for transferring algorithmic and structural results between anchored-box union and hypervolume theory and integral $R_2$ computation.
\end{abstract}
\noindent\textbf{Keywords:} continuous \(R_2\) indicator; multi-objective optimization; Pareto-compliant performance indicators; hypervolume computation; anchored box unions; box decomposition algorithms; Jacobian determinant; computational complexity
\section{Introduction}
The integral $R_2$ indicator evaluates a Pareto front approximation set by integrating the best weighted Tchebycheff scalarizing value over a continuous weight domain \cite{schaepermeier2024r2,schaepermeier2025ecj,jaszkiewicz2025exact}. It is a continuous refinement of the classical $R_2$ indicator \cite{hansen1998evaluating}, which uses a finite set of weight vectors. Unlike the classical $R_2$ indicator, the integral $R_2$ is Pareto compliant, making it suitable for performance assessment and bounded archiving. In contrast to the hypervolume indicator, which requires a dystopian reference point, the $R_2$ indicators use an ideal reference point, often easier to specify for a given problem. Each weight vector defines a scalar preference model; the lower envelope of the corresponding Tchebycheff shadows \cite{emmerich2026preferenceshaped} gives the best scalarizing value achieved by the approximation set. Exact computation is desirable because finite weight sampling may miss small regions of the weight simplex and thus fail to capture fine variations of this lower envelope.

This report develops a geometric route to exact computation. In short, we will see that the computation of the integral $R_2$ indicator can be accomplished through computations of closed-form integrals on axis-aligned boxes. The central instrument for achieving this is a perspective mapping that transforms the coordinates of relevant points - kinks of the Tchebycheff shadows in the Tchebycheff weight-loss representation - in the integration region into corner points of axis-aligned cuboids. The Jacobian determinant \cite{Jacobi1896Functionaldeterminanten} of this mapping is variable, but all resulting integrals and computations are closed-form and exact. The same mapping can be used in the inverse direction, which is useful for transferring structural results, such as complexity bounds, from the well-developed computational theory of the hypervolume indicator \cite{guerreiro2021hypervolume}. 

Let us discuss the proposed mapping in more details next:
After translating the ideal point of a minimization problem to the origin, approximation points become strictly positive loss vectors. The key observation is that the perspective transformation
\(x_i=\frac{w_i}{t}\)
uses the coordinates of a weight vector $w=(w_1,\ldots,w_N)\in\DeltaN$ and the positive height variable $t>0$ in the subgraph of the Tchebycheff lower envelope. In these coordinates, the subgraph is transformed into the complement of an anchored-box union in reciprocal objective space. The Jacobian of this map is not constant. Hence, the integral $R_2$ is not an ordinary hypervolume value, but a \emph{weighted} hypervolume-type integral. Nevertheless, box decompositions from hypervolume computation can be reused: once the reciprocally dominated region is decomposed into disjoint axis-aligned boxes, each ordinary box-volume term is replaced by a closed-form weighted box integral.

By the perspective mapping introduced in this work and its related transformations, we will derive exact computation algorithms, new algorithmic upper bounds, and value-computation lower bounds for the integral $R_2$ indicator for $N\geq2$.  The previously known $O(n\log n)$ algorithm covers the two-dimensional case \cite{schaepermeier2025ecj}; for $N>2$, the present approach yields improved worst-case upper bounds by reusing hypervolume-style box decompositions.  In three objectives, the dimension-sweep algorithm of Fonseca, Paquete, and L{\'o}pez-Ib{\'a}nez~\cite{fonseca2006dimension} provide an $O(n\log n)$ box-emitter with $M=O(n)$ boxes, and thus an $O(n \log n)$ time complexity algorithm for the Integral R2 indicator in three dimensions.  Likewise, we obtain $O(n^2)$ time complexity for $N=4$ using \cite{Guerreiro12}, and $O\!\left(n^{\lfloor (N-1)/2\rfloor+1}\right)$ for fixed $N\geq4$ using \cite{Lacour17}.  Conversely, exact value computation has an $\Omega(n\log n)$ lower bound already in two objectives by reduction from uniform gap \cite{beume2009complexity,benor1983lower,preparata1985computational}; this lower bound lifts to every fixed $N\geq2$, and a perspective-weighted version of the Bringmann--Friedrich anchored-box reduction \cite{bringmann2010volume} gives $\#P$-hardness when $N$ is part of the input.

The report is organized as follows. Section~\ref{sec:setting} fixes the notation. Section~\ref{sec:warmup2d} gives a two-dimensional warm-up with a visualization of the mapping and a hand calculation. Section~\ref{sec:absolute3d} discusses the three-dimensional absolute and improvement formulas and the $O(n\log n)$ consequence. Section~\ref{sec:Ndim} extends the construction to $N$ objectives and provides pseudocode, correctness, output-sensitive upper bounds, and value-computation lower bounds. Section~\ref{sec:relatedwork} relates this work to the published results in the literature, while the appendix gives detailed lower-bound proofs and verification scripts.

\section{Setting and notation}\label{sec:setting}
We consider minimization and a finite approximation set to a Pareto front of a multiobjective optimization or skyline computation problem with $N$ objective functions denoted with $P$.  The ideal point is translated to the origin. For the standard background in multiobjective optimization we refer to Miettinen \cite{miettinen1999nonlinear}. The standing assumption is therefore
\[
  P=\{p^1,\ldots,p^n\}\subset\RR^N_{>0}.
\]
and points in $p$ are mutually non-dominated in the Pareto dominance relation.
Equivalently, in the original objective space, the ideal point strictly dominates every approximation point.  The coordinate $p_i$ is the positive loss of point $p$ in objective $i$.
For $N$ objectives the weight domain is
\[
  \DeltaN=\{w\in\RR^N_{\geq0}: w_1+\cdots+w_N=1\}.
\]
For a point $p\in\RR^N_{>0}$ define its weighted Tchebycheff shadow \cite{emmerich2026preferenceshaped} by
\[
  g_p(w)=\max_{i=1,\ldots,N} w_i p_i.
\]
For a finite set $P$, the lower Tchebycheff envelope is the area (or subgraph) under the union of the Tchebycheff shadows of all points of the approximation set $P$
\[
  \tau_P(w)=\min_{p\in P} g_p(w),
\]
and the unnormalised integral $R_2$ value is
\[
  \Rtwo(P)=\int_{\DeltaN}\tau_P(w)\dd w .
\]
The simplex-normalized average is obtained by multiplying this unnormalized value by $(N-1)!$, since $\operatorname{vol}_{N-1}(\Delta_N)=1/(N-1)!$. Importantly, this is \textit{not} the classical $R_2$ indicator based on a finite set of weight vectors, but its Pareto-compliant, continuous integral-based counterpart, which was proposed more recently by Schaepermeier and Kerschke \cite{schaepermeier2024r2} and Jaszkiewicz and Zielniewicz \cite{jaszkiewicz2025exact}. 

For the geometrical reasoning, the Tchebycheff loss and the weight space view developed in \cite{emmerich2026preferenceshaped} and \cite{emmerich2026threeobjectiveintegralr2subset} are useful. Here, the Tchebycheff shadows and the subgraph of their lower envelope are viewed in a space that maps weight vectors $w \in \Delta_{N-1}$ to \textit{Tchebycheff loss values}, that is values of the lower envelope $t =\tau_P(w)$. In 2-D and 3-D, this leads to simple geometric visualizations of the integration regions, and the reader is referred to \cite{emmerich2026threeobjectiveintegralr2subset} for examples.

The integral $R_2$ indicator can also be viewed as a relative improvement indicator given a reference point or set \cite{emmerich2026threeobjectiveintegralr2subset}. A point $a\in\RR^N_{>0}$ is called a dominated anchor for $P$ if
\[
  a_i\geq p_i \qquad\text{for all }p\in P\text{ and all }i.
\]
The anchor-normalised improvement is
\(
  I_a(P)=\Rtwo(\{a\})-\Rtwo(P).
\)
The improvement is convenient for subset selection using greedy approximation \cite{emmerich2026threeobjectiveintegralr2subset}.  The absolute value $\Rtwo(P)$ is recovered exactly from the same computation by adding the single-anchor value $\Rtwo(\{a\})$, where $a$ can be viewed as the equivalent of a dystopian reference point of the hypervolume indicator.

\section{Warm-up in two objectives}\label{sec:warmup2d}
We begin with one concrete two-objective example and compute the integral $R_2$ from first principles.  Let
\[
  p=(2,1).
\]
The weight simplex is the interval $w\in[0,1]$, with weights $(w,1-w)$.  The weighted Tchebycheff value of the point is
\[
  g_p(w)=\max\{2w,1-w\}.
\]
The kink is found by equating the two affine pieces,
\[
  2w=1-w,
  \qquad
  w_0=\frac13,
  \qquad
  g_p(w_0)=\frac23.
\]
Thus the kink in the $(w,t)$-diagram is
\[
  q_0=(w_0,t_0)=\left(\frac13,\frac23\right).
\]
Figure~\ref{fig:warmup-shadow-reciprocal} shows the whole one-point shadow before we look at local area elements.

\begin{figure}[!htb]
\centering
\begin{tikzpicture}[scale=0.86,>=Latex]
  \begin{scope}[shift={(0,0)}]
    
    \draw[->] (0,0) -- (4.4,0) node[right] {$w$};
    \draw[->] (0,0) -- (0,2.9) node[above] {$t$};
    
    \draw[orange!85!black,thick] (0,0) -- (4,4.0) node[pos=0.55,above left] {$2w$};
    \fill[blue!10,opacity=0.7] (0,0) -- (0,2.0) -- (1.333,1.333) -- (4,4.0) -- (4,0) -- cycle;
    \draw[very thick,red!70!black] (0,2.0) -- (1.333,1.333) -- (4,4.0);
    \draw[dashed] (1.333,0) node[below] {$1/3$} -- (1.333,1.333);
    \node[below] at (0,0) {$0$};
    \node[below] at (4,0) {$1$};
    \node[align=center,font=\small] at (2.1,-1.0) {$p=(2,1)$,\quad $g_p(w)=\max\{2w,1-w\}$};
    \node[font=\bfseries] at (2.1,4.25) {(a) Tchebycheff shadow};
    \draw[blue!70!black] (0,2.0) -- (4,0) node[pos=0.78,above right] {$1-w$};
  \end{scope}

  \draw[->,thick] (4.9,1.45) -- node[above,align=center,font=\scriptsize] {$x_i=w_i/t$} (6.0,1.45);

  \begin{scope}[shift={(6.5,0)}]
    \node[font=\bfseries] at (2.0,4.25) {(b) Reciprocal space};
    \draw[->] (0,0) -- (4.2,0) node[right] {$x_1$};
    \draw[->] (0,0) -- (0,2.9) node[above] {$x_2$};
    \fill[gray!20] (0,0) rectangle (1.45,2.15);
    \draw[gray!70!black,thick] (0,0) rectangle (1.45,2.15);
    \fill[red!10,opacity=0.75] (1.45,0) rectangle (4.0,2.7);
    \fill[red!10,opacity=0.75] (0,2.15) rectangle (1.45,2.7);
    \draw[dashed] (1.45,0) node[below] {$1/2$} -- (1.45,2.7);
    \draw[dashed] (0,2.15) node[left] {$1$} -- (4.0,2.15);
    \draw[red!70!black,thick] (1.45,0) -- (1.45,2.7);
    \draw[red!70!black,thick] (0,2.15) -- (4.0,2.15);
    \node[align=center,font=\scriptsize] at (2.75,1.35) {weighted\\complement};
    \node[align=center,font=\scriptsize] at (2.0,-1) {$\rho_2(x)=(x_1+x_2)^{-3}$};
    \node[align=center,font=\scriptsize] at (0.73,0.72) {$\boxop(b(p))$};
  \end{scope}
\end{tikzpicture}
\caption{View for $p=(2,1)$. The V-shaped Tchebycheff shadow maps to the complement of the reciprocal box with corner $b(p)=(1/2,1)$ and density $\rho_2(x)=(x_1+x_2)^{-3}$.}
\label{fig:warmup-shadow-reciprocal}
\end{figure}

The scalarization-side computation is now just the area under the V-shaped graph.  Splitting the integral at the kink gives
\[
\begin{aligned}
  R_2(\{p\})
  &=\int_0^{1/3}(1-w)\,\mathrm dw
    +\int_{1/3}^1 2w\,\mathrm dw  \\
  &=\left[w-\frac{w^2}{2}\right]_0^{1/3}
    +\left[w^2\right]_{1/3}^{1} \\
  &=\frac{5}{18}+\frac{8}{9}
   =\frac{7}{6}.
\end{aligned}
\]
The same number will be recovered in reciprocal coordinates.  The perspective map is
\[
  \Phi(w,t)=\left(\frac{w}{t},\frac{1-w}{t}\right)=(x_1,x_2).
\]
Its inverse is
\[
  w=\frac{x_1}{x_1+x_2},
  \qquad
  t=\frac{1}{x_1+x_2}.
\]
For the kink point $q_0=(1/3,2/3)$ this gives
\[
  \Phi(q_0)
  =\left(\frac{1/3}{2/3},\frac{2/3}{2/3}\right)
  =\left(\frac12,1\right)
  =b(p).
\]
The two inequalities that define the subgraph below the Tchebycheff shadow become reciprocal-space inequalities.  Indeed,
\[
  t\le 2w
  \quad\Longleftrightarrow\quad
  \frac{w}{t}\ge\frac12
  \quad\Longleftrightarrow\quad
  x_1\ge\frac12,
\]
and
\[
  t\le 1-w
  \quad\Longleftrightarrow\quad
  \frac{1-w}{t}\ge 1
  \quad\Longleftrightarrow\quad
  x_2\ge1.
\]
Hence the V-shaped subgraph maps to the complement of the anchored reciprocal box
\[
  [0,1/2]\times[0,1].
\]
Up to boundaries of zero area, this complement is the disjoint union
\[
  [1/2,\infty)\times[0,\infty)
  \;\cup\;
  [0,1/2]\times[1,\infty).
\]

It remains to account for how area changes under the map. We use Jacobi's theory (\cite{Jacobi1896Functionaldeterminanten}, p.53) to map differential area elements. The Jacobian matrix of $\Phi$ is
\[
  D\Phi(w,t)
  =
  \frac{\partial(x_1,x_2)}{\partial(w,t)}
  =
  \begin{pmatrix}
    \dfrac{1}{t} & -\dfrac{w}{t^2}\\[1.2ex]
    -\dfrac{1}{t} & -\dfrac{1-w}{t^2}
  \end{pmatrix}.
\]
Therefore
\[
\begin{aligned}
  \det D\Phi(w,t)
  &=
  \frac{1}{t}\left(-\frac{1-w}{t^2}\right)
  -
  \left(-\frac{w}{t^2}\right)\left(-\frac{1}{t}\right) \\
  &=-\frac{1-w}{t^3}-\frac{w}{t^3}
   =-\frac{1}{t^3}.
\end{aligned}
\]
Consequently, differential areas transform as
\[
  \mathrm dx_1\,\mathrm dx_2
  =\frac{1}{t^3}\,\mathrm dw\,\mathrm dt.
\]
Using $t=1/(x_1+x_2)$, the inverse area element is
\[
  \mathrm dw\,\mathrm dt
  =\frac{1}{(x_1+x_2)^3}\,\mathrm dx_1\,\mathrm dx_2.
\]
This is the density factor that appears in reciprocal space.

\begin{figure}[!htb]
\centering
\begin{minipage}{\textwidth}
\centering
\begin{minipage}[t]{0.49\textwidth}
\centering
\begin{tikzpicture}[scale=2,>=Latex,font=\scriptsize]
  \node[font=\bfseries] at (0.55,1.15) {$(w,t)$};
  \node[font=\bfseries] at (2.35,1.15) {$(x_1,x_2)$};

  \draw[->] (0,0) -- (1.05,0) node[below] {$w$};
  \draw[->] (0,0) -- (0,1.05) node[left] {$t$};
  \coordinate (A) at (0.33,0.67);
  \coordinate (B) at (0.49,0.67);
  \coordinate (C) at (0.49,0.83);
  \coordinate (D) at (0.33,0.83);
  \fill[blue!8] (A)--(B)--(C)--(D)--cycle;
  \draw[thick,blue!80!black] (A)--(B)--(C)--(D)--cycle;
  \fill (A) circle (0.010) node[below left] {$q_0$};
  \draw[->,thick,blue!80!black] (A)--(B) node[midway,below] {$dw$};
  \draw[->,thick,blue!80!black] (A)--(D) node[midway,left] {$dt$};
  \node[blue!80!black] at (0.53,0.38) {$dw\,dt$};

  \draw[->,very thick] (1.18,0.54)--(1.58,0.54) node[midway,above] {$\Phi$};

  \begin{scope}[shift={(1.75,0)}]
    \draw[->] (0,0) -- (1.05,0) node[below] {$x_1$};
    \draw[->] (0,0) -- (0,1.05) node[left] {$x_2$};
    \coordinate (P) at (0.36,0.74);
    \coordinate (Q) at (0.62,0.48);
    \coordinate (S) at (0.20,0.50);
    \coordinate (R) at ($(Q)+(S)-(P)$);
    \fill[orange!12] (P)--(Q)--(R)--(S)--cycle;
    \draw[thick,orange!85!black] (P)--(Q)--(R)--(S)--cycle;
    \fill (P) circle (0.010) node[above right] {$b(p)$};
    \draw[->,thick,orange!85!black] (P)--(Q) node[midway,above right] {$\vec v_w$};
    \draw[->,thick,orange!85!black] (P)--(S) node[midway,left] {$\vec v_t$};
    \node[orange!85!black] at (0.48,0.12) {$\tfrac{27}{8}dw\,dt$};
  \end{scope}
\end{tikzpicture}

\smallskip
{\small Forward mapping of a local area element.}
\end{minipage}\hfill
\begin{minipage}[t]{0.49\textwidth}
\centering
\begin{tikzpicture}[scale=2,>=Latex,font=\scriptsize]
  \node[font=\bfseries] at (0.55,1.15) {$(w,t)$};
  \node[font=\bfseries] at (2.35,1.15) {$(x_1,x_2)$};

  \draw[->] (0,0) -- (1.05,0) node[below] {$w$};
  \draw[->] (0,0) -- (0,1.05) node[left] {$t$};
    \coordinate (M) at (0.33,0.67);
    \coordinate (N) at (0.49,0.47);
    \coordinate (S2) at (0.22,0.49);
    \coordinate (O) at ($(N)+(S2)-(M)$);
    \fill[blue!8] (M)--(N)--(O)--(S2)--cycle;
    \draw[thick,blue!80!black] (M)--(N)--(O)--(S2)--cycle;
    \fill (M) circle (0.010) node[above right] {$q_0$};
    \draw[->,thick,blue!80!black] (M)--(N) node[midway,above right] {$\vec u_1$};
    \draw[->,thick,blue!80!black] (M)--(S2) node[midway,left] {$\vec u_2$};
    \node[blue!80!black] at (0.50,0.14) {$\tfrac{8}{27}dx_1\,dx_2$};

  \draw[<-,very thick] (1.18,0.54)--(1.58,0.54) node[midway,above] {$\Phi^{-1}$};

  \begin{scope}[shift={(1.75,0)}]
    \draw[->] (0,0) -- (1.05,0) node[below] {$x_1$};
    \draw[->] (0,0) -- (0,1.05) node[left] {$x_2$};
    \coordinate (E) at (0.36,0.74);
    \coordinate (F) at (0.54,0.74);
    \coordinate (G) at (0.54,0.56);
    \coordinate (H) at (0.36,0.56);
    \fill[orange!12] (E)--(F)--(G)--(H)--cycle;
    \draw[thick,orange!85!black] (E)--(F)--(G)--(H)--cycle;
    \fill (E) circle (0.010) node[above left] {$b(p)$};
    \draw[->,thick,orange!85!black] (E)--(F) node[midway,above] {$dx_1$};
    \draw[->,thick,orange!85!black] (E)--(H) node[midway,left] {$dx_2$};
    \node[orange!85!black] at (0.49,0.36) {$dx_1\,dx_2$};
  \end{scope}
\end{tikzpicture}

\smallskip
{\small Inverse mapping of a local area element.}
\end{minipage}
\end{minipage}

\medskip
\begin{minipage}{\textwidth}
\centering
\begin{tikzpicture}[scale=0.8,transform shape,>=Latex,font=\scriptsize]
\begin{scope}[shift={(0,0)}]
\node[font=\bfseries] at (2.700,6.480) {$ (w,t)$-space};
\draw[->] (0,0) -- (5.750,0) node[below] {$w$};
\draw[->] (0,0) -- (0,6.150) node[left] {$t$};
\draw (2.6544,0.05) -- (2.6544,-0.05) node[below] {$\tfrac13$};
\draw (3.9498,0.05) -- (3.9498,-0.05) node[below] {$0.4$};
\draw (0.05,2.5346) -- (-0.05,2.5346) node[left] {$\tfrac23$};
\draw (0.05,4.5826) -- (-0.05,4.5826) node[left] {$0.8$};
\draw[densely dashed,blue!55] (2.6544,0) -- (2.6544,2.5346) -- (0,2.5346);
\draw[gray,thick,dashed] (0.0000,4.6328) -- (5.4000,0.3642);
\node[gray!80!black,font=\bfseries,anchor=east] at (4.9300,0.7) {$t=1-w$};
\draw[gray,thick,dotted] (1.0512,0.0000) -- (4.7199,5.8000);
\node[gray!80!black,font=\bfseries,anchor=west] at (4.0055,4.7200) {$t=2w$};
\fill[cyan!28,opacity=0.62] (1.5291,3.4206) -- (1.7681,3.2347) -- (1.6451,2.9867) -- (1.4061,3.1727) -- cycle;
\draw[cyan!70!black,line width=0.22pt] (1.5291,3.4206) -- (1.7681,3.2347) -- (1.6451,2.9867) -- (1.4061,3.1727) -- cycle;
\fill[cyan!28,opacity=0.62] (1.7697,3.2306) -- (2.0007,3.0508) -- (1.8746,2.8111) -- (1.6435,2.9909) -- cycle;
\draw[cyan!70!black,line width=0.22pt] (1.7697,3.2306) -- (2.0007,3.0508) -- (1.8746,2.8111) -- (1.6435,2.9909) -- cycle;
\fill[cyan!28,opacity=0.62] (2.0021,3.0469) -- (2.2256,2.8730) -- (2.0966,2.6412) -- (1.8731,2.8151) -- cycle;
\draw[cyan!70!black,line width=0.22pt] (2.0021,3.0469) -- (2.2256,2.8730) -- (2.0966,2.6412) -- (1.8731,2.8151) -- cycle;
\fill[cyan!28,opacity=0.62] (2.2269,2.8693) -- (2.4433,2.7010) -- (2.3115,2.4766) -- (2.0952,2.6449) -- cycle;
\draw[cyan!70!black,line width=0.22pt] (2.2269,2.8693) -- (2.4433,2.7010) -- (2.3115,2.4766) -- (2.0952,2.6449) -- cycle;
\fill[cyan!28,opacity=0.62] (2.4444,2.6975) -- (2.6539,2.5345) -- (2.5198,2.3172) -- (2.3103,2.4802) -- cycle;
\draw[cyan!70!black,line width=0.22pt] (2.4444,2.6975) -- (2.6539,2.5345) -- (2.5198,2.3172) -- (2.3103,2.4802) -- cycle;
\fill[cyan!28,opacity=0.62] (1.4078,3.1687) -- (1.6435,2.9909) -- (1.5260,2.7538) -- (1.2903,2.9316) -- cycle;
\draw[cyan!70!black,line width=0.22pt] (1.4078,3.1687) -- (1.6435,2.9909) -- (1.5260,2.7538) -- (1.2903,2.9316) -- cycle;
\fill[cyan!28,opacity=0.62] (1.6451,2.9870) -- (1.8731,2.8151) -- (1.7524,2.5857) -- (1.5244,2.7577) -- cycle;
\draw[cyan!70!black,line width=0.22pt] (1.6451,2.9870) -- (1.8731,2.8151) -- (1.7524,2.5857) -- (1.5244,2.7577) -- cycle;
\fill[cyan!28,opacity=0.62] (1.8745,2.8114) -- (2.0952,2.6449) -- (1.9717,2.4229) -- (1.7509,2.5894) -- cycle;
\draw[cyan!70!black,line width=0.22pt] (1.8745,2.8114) -- (2.0952,2.6449) -- (1.9717,2.4229) -- (1.7509,2.5894) -- cycle;
\fill[cyan!28,opacity=0.62] (2.0965,2.6414) -- (2.3103,2.4802) -- (2.1841,2.2652) -- (1.9703,2.4264) -- cycle;
\draw[cyan!70!black,line width=0.22pt] (2.0965,2.6414) -- (2.3103,2.4802) -- (2.1841,2.2652) -- (1.9703,2.4264) -- cycle;
\fill[cyan!28,opacity=0.62] (2.3115,2.4769) -- (2.5186,2.3206) -- (2.3900,2.1123) -- (2.1829,2.2685) -- cycle;
\draw[cyan!70!black,line width=0.22pt] (2.3115,2.4769) -- (2.5186,2.3206) -- (2.3900,2.1123) -- (2.1829,2.2685) -- cycle;
\fill[cyan!28,opacity=0.62] (1.2919,2.9279) -- (1.5244,2.7577) -- (1.4119,2.5309) -- (1.1794,2.7010) -- cycle;
\draw[cyan!70!black,line width=0.22pt] (1.2919,2.9279) -- (1.5244,2.7577) -- (1.4119,2.5309) -- (1.1794,2.7010) -- cycle;
\fill[cyan!28,opacity=0.62] (1.5259,2.7541) -- (1.7509,2.5894) -- (1.6354,2.3698) -- (1.4103,2.5345) -- cycle;
\draw[cyan!70!black,line width=0.22pt] (1.5259,2.7541) -- (1.7509,2.5894) -- (1.6354,2.3698) -- (1.4103,2.5345) -- cycle;
\fill[cyan!28,opacity=0.62] (1.7523,2.5860) -- (1.9703,2.4264) -- (1.8519,2.2137) -- (1.6339,2.3733) -- cycle;
\draw[cyan!70!black,line width=0.22pt] (1.7523,2.5860) -- (1.9703,2.4264) -- (1.8519,2.2137) -- (1.6339,2.3733) -- cycle;
\fill[cyan!28,opacity=0.62] (1.9716,2.4232) -- (2.1829,2.2685) -- (2.0618,2.0624) -- (1.8506,2.2170) -- cycle;
\draw[cyan!70!black,line width=0.22pt] (1.9716,2.4232) -- (2.1829,2.2685) -- (2.0618,2.0624) -- (1.8506,2.2170) -- cycle;
\fill[cyan!28,opacity=0.62] (2.1840,2.2654) -- (2.3889,2.1155) -- (2.2655,1.9156) -- (2.0606,2.0655) -- cycle;
\draw[cyan!70!black,line width=0.22pt] (2.1840,2.2654) -- (2.3889,2.1155) -- (2.2655,1.9156) -- (2.0606,2.0655) -- cycle;
\fill[cyan!28,opacity=0.62] (1.1811,2.6975) -- (1.4103,2.5345) -- (1.3026,2.3172) -- (1.0733,2.4802) -- cycle;
\draw[cyan!70!black,line width=0.22pt] (1.1811,2.6975) -- (1.4103,2.5345) -- (1.3026,2.3172) -- (1.0733,2.4802) -- cycle;
\fill[cyan!28,opacity=0.62] (1.4118,2.5311) -- (1.6339,2.3733) -- (1.5231,2.1627) -- (1.3010,2.3206) -- cycle;
\draw[cyan!70!black,line width=0.22pt] (1.4118,2.5311) -- (1.6339,2.3733) -- (1.5231,2.1627) -- (1.3010,2.3206) -- cycle;
\fill[cyan!28,opacity=0.62] (1.6353,2.3700) -- (1.8506,2.2170) -- (1.7370,2.0129) -- (1.5217,2.1660) -- cycle;
\draw[cyan!70!black,line width=0.22pt] (1.6353,2.3700) -- (1.8506,2.2170) -- (1.7370,2.0129) -- (1.5217,2.1660) -- cycle;
\fill[cyan!28,opacity=0.62] (1.8518,2.2139) -- (2.0606,2.0655) -- (1.9445,1.8676) -- (1.7357,2.0160) -- cycle;
\draw[cyan!70!black,line width=0.22pt] (1.8518,2.2139) -- (2.0606,2.0655) -- (1.9445,1.8676) -- (1.7357,2.0160) -- cycle;
\fill[cyan!28,opacity=0.62] (2.0618,2.0626) -- (2.2644,1.9186) -- (2.1458,1.7266) -- (1.9433,1.8706) -- cycle;
\draw[cyan!70!black,line width=0.22pt] (2.0618,2.0626) -- (2.2644,1.9186) -- (2.1458,1.7266) -- (1.9433,1.8706) -- cycle;
\fill[cyan!28,opacity=0.62] (1.0749,2.4769) -- (1.3010,2.3206) -- (1.1977,2.1123) -- (0.9716,2.2685) -- cycle;
\draw[cyan!70!black,line width=0.22pt] (1.0749,2.4769) -- (1.3010,2.3206) -- (1.1977,2.1123) -- (0.9716,2.2685) -- cycle;
\fill[cyan!28,opacity=0.62] (1.3025,2.3175) -- (1.5217,2.1660) -- (1.4154,1.9640) -- (1.1962,2.1155) -- cycle;
\draw[cyan!70!black,line width=0.22pt] (1.3025,2.3175) -- (1.5217,2.1660) -- (1.4154,1.9640) -- (1.1962,2.1155) -- cycle;
\fill[cyan!28,opacity=0.62] (1.5231,2.1630) -- (1.7357,2.0160) -- (1.6266,1.8201) -- (1.4140,1.9671) -- cycle;
\draw[cyan!70!black,line width=0.22pt] (1.5231,2.1630) -- (1.7357,2.0160) -- (1.6266,1.8201) -- (1.4140,1.9671) -- cycle;
\fill[cyan!28,opacity=0.62] (1.7369,2.0132) -- (1.9433,1.8706) -- (1.8317,1.6805) -- (1.6253,1.8231) -- cycle;
\draw[cyan!70!black,line width=0.22pt] (1.7369,2.0132) -- (1.9433,1.8706) -- (1.8317,1.6805) -- (1.6253,1.8231) -- cycle;
\fill[cyan!28,opacity=0.62] (1.9444,1.8678) -- (2.1447,1.7294) -- (2.0308,1.5449) -- (1.8305,1.6833) -- cycle;
\draw[cyan!70!black,line width=0.22pt] (1.9444,1.8678) -- (2.1447,1.7294) -- (2.0308,1.5449) -- (1.8305,1.6833) -- cycle;
\fill[blue!25,opacity=0.62] (2.6550,2.5311) -- (2.8579,2.3733) -- (2.7216,2.1627) -- (2.5186,2.3206) -- cycle;
\draw[blue!75!black,line width=0.22pt] (2.6550,2.5311) -- (2.8579,2.3733) -- (2.7216,2.1627) -- (2.5186,2.3206) -- cycle;
\fill[blue!25,opacity=0.62] (2.8589,2.3700) -- (3.0556,2.2170) -- (2.9173,2.0129) -- (2.7205,2.1660) -- cycle;
\draw[blue!75!black,line width=0.22pt] (2.8589,2.3700) -- (3.0556,2.2170) -- (2.9173,2.0129) -- (2.7205,2.1660) -- cycle;
\fill[blue!25,opacity=0.62] (3.0565,2.2139) -- (3.2473,2.0655) -- (3.1071,1.8676) -- (2.9163,2.0160) -- cycle;
\draw[blue!75!black,line width=0.22pt] (3.0565,2.2139) -- (3.2473,2.0655) -- (3.1071,1.8676) -- (2.9163,2.0160) -- cycle;
\fill[blue!25,opacity=0.62] (3.2481,2.0626) -- (3.4332,1.9186) -- (3.2913,1.7266) -- (3.1062,1.8706) -- cycle;
\draw[blue!75!black,line width=0.22pt] (3.2481,2.0626) -- (3.4332,1.9186) -- (3.2913,1.7266) -- (3.1062,1.8706) -- cycle;
\fill[blue!25,opacity=0.62] (3.4339,1.9158) -- (3.6135,1.7760) -- (3.4702,1.5896) -- (3.2905,1.7294) -- cycle;
\draw[blue!75!black,line width=0.22pt] (3.4339,1.9158) -- (3.6135,1.7760) -- (3.4702,1.5896) -- (3.2905,1.7294) -- cycle;
\fill[blue!25,opacity=0.62] (2.5197,2.3175) -- (2.7205,2.1660) -- (2.5897,1.9640) -- (2.3889,2.1155) -- cycle;
\draw[blue!75!black,line width=0.22pt] (2.5197,2.3175) -- (2.7205,2.1660) -- (2.5897,1.9640) -- (2.3889,2.1155) -- cycle;
\fill[blue!25,opacity=0.62] (2.7215,2.1630) -- (2.9163,2.0160) -- (2.7835,1.8201) -- (2.5887,1.9671) -- cycle;
\draw[blue!75!black,line width=0.22pt] (2.7215,2.1630) -- (2.9163,2.0160) -- (2.7835,1.8201) -- (2.5887,1.9671) -- cycle;
\fill[blue!25,opacity=0.62] (2.9172,2.0132) -- (3.1062,1.8706) -- (2.9716,1.6805) -- (2.7825,1.8231) -- cycle;
\draw[blue!75!black,line width=0.22pt] (2.9172,2.0132) -- (3.1062,1.8706) -- (2.9716,1.6805) -- (2.7825,1.8231) -- cycle;
\fill[blue!25,opacity=0.62] (3.1071,1.8678) -- (3.2906,1.7294) -- (3.1542,1.5449) -- (2.9707,1.6833) -- cycle;
\draw[blue!75!black,line width=0.22pt] (3.1071,1.8678) -- (3.2906,1.7294) -- (3.1542,1.5449) -- (2.9707,1.6833) -- cycle;
\fill[blue!25,opacity=0.62] (3.2913,1.7268) -- (3.4695,1.5924) -- (3.3317,1.4131) -- (3.1534,1.5475) -- cycle;
\draw[blue!75!black,line width=0.22pt] (3.2913,1.7268) -- (3.4695,1.5924) -- (3.3317,1.4131) -- (3.1534,1.5475) -- cycle;
\fill[blue!25,opacity=0.62] (2.3900,2.1125) -- (2.5887,1.9671) -- (2.4631,1.7731) -- (2.2644,1.9186) -- cycle;
\draw[blue!75!black,line width=0.22pt] (2.3900,2.1125) -- (2.5887,1.9671) -- (2.4631,1.7731) -- (2.2644,1.9186) -- cycle;
\fill[blue!25,opacity=0.62] (2.5897,1.9642) -- (2.7826,1.8231) -- (2.6549,1.6348) -- (2.4621,1.7760) -- cycle;
\draw[blue!75!black,line width=0.22pt] (2.5897,1.9642) -- (2.7826,1.8231) -- (2.6549,1.6348) -- (2.4621,1.7760) -- cycle;
\fill[blue!25,opacity=0.62] (2.7835,1.8203) -- (2.9707,1.6833) -- (2.8412,1.5005) -- (2.6540,1.6376) -- cycle;
\draw[blue!75!black,line width=0.22pt] (2.7835,1.8203) -- (2.9707,1.6833) -- (2.8412,1.5005) -- (2.6540,1.6376) -- cycle;
\fill[blue!25,opacity=0.62] (2.9715,1.6807) -- (3.1534,1.5475) -- (3.0223,1.3700) -- (2.8404,1.5032) -- cycle;
\draw[blue!75!black,line width=0.22pt] (2.9715,1.6807) -- (3.1534,1.5475) -- (3.0223,1.3700) -- (2.8404,1.5032) -- cycle;
\fill[blue!25,opacity=0.62] (3.1542,1.5451) -- (3.3309,1.4157) -- (3.1983,1.2432) -- (3.0215,1.3725) -- cycle;
\draw[blue!75!black,line width=0.22pt] (3.1542,1.5451) -- (3.3309,1.4157) -- (3.1983,1.2432) -- (3.0215,1.3725) -- cycle;
\fill[blue!25,opacity=0.62] (2.2654,1.9158) -- (2.4621,1.7760) -- (2.3414,1.5896) -- (2.1447,1.7294) -- cycle;
\draw[blue!75!black,line width=0.22pt] (2.2654,1.9158) -- (2.4621,1.7760) -- (2.3414,1.5896) -- (2.1447,1.7294) -- cycle;
\fill[blue!25,opacity=0.62] (2.4630,1.7733) -- (2.6540,1.6376) -- (2.5313,1.4566) -- (2.3403,1.5924) -- cycle;
\draw[blue!75!black,line width=0.22pt] (2.4630,1.7733) -- (2.6540,1.6376) -- (2.5313,1.4566) -- (2.3403,1.5924) -- cycle;
\fill[blue!25,opacity=0.62] (2.6549,1.6350) -- (2.8404,1.5032) -- (2.7158,1.3273) -- (2.5303,1.4592) -- cycle;
\draw[blue!75!black,line width=0.22pt] (2.6549,1.6350) -- (2.8404,1.5032) -- (2.7158,1.3273) -- (2.5303,1.4592) -- cycle;
\fill[blue!25,opacity=0.62] (2.8412,1.5007) -- (3.0215,1.3726) -- (2.8953,1.2017) -- (2.7150,1.3298) -- cycle;
\draw[blue!75!black,line width=0.22pt] (2.8412,1.5007) -- (3.0215,1.3726) -- (2.8953,1.2017) -- (2.7150,1.3298) -- cycle;
\fill[blue!25,opacity=0.62] (3.0223,1.3702) -- (3.1976,1.2456) -- (3.0698,1.0794) -- (2.8945,1.2041) -- cycle;
\draw[blue!75!black,line width=0.22pt] (3.0223,1.3702) -- (3.1976,1.2456) -- (3.0698,1.0794) -- (2.8945,1.2041) -- cycle;
\fill[blue!25,opacity=0.62] (2.1458,1.7268) -- (2.3403,1.5924) -- (2.2242,1.4131) -- (2.0297,1.5475) -- cycle;
\draw[blue!75!black,line width=0.22pt] (2.1458,1.7268) -- (2.3403,1.5924) -- (2.2242,1.4131) -- (2.0297,1.5475) -- cycle;
\fill[blue!25,opacity=0.62] (2.3413,1.5898) -- (2.5303,1.4592) -- (2.4122,1.2851) -- (2.2232,1.4157) -- cycle;
\draw[blue!75!black,line width=0.22pt] (2.3413,1.5898) -- (2.5303,1.4592) -- (2.4122,1.2851) -- (2.2232,1.4157) -- cycle;
\fill[blue!25,opacity=0.62] (2.5312,1.4568) -- (2.7150,1.3298) -- (2.5951,1.1606) -- (2.4113,1.2875) -- cycle;
\draw[blue!75!black,line width=0.22pt] (2.5312,1.4568) -- (2.7150,1.3298) -- (2.5951,1.1606) -- (2.4113,1.2875) -- cycle;
\fill[blue!25,opacity=0.62] (2.7158,1.3275) -- (2.8945,1.2041) -- (2.7729,1.0394) -- (2.5942,1.1629) -- cycle;
\draw[blue!75!black,line width=0.22pt] (2.7158,1.3275) -- (2.8945,1.2041) -- (2.7729,1.0394) -- (2.5942,1.1629) -- cycle;
\fill[blue!25,opacity=0.62] (2.8952,1.2018) -- (3.0690,1.0817) -- (2.9459,0.9216) -- (2.7721,1.0417) -- cycle;
\draw[blue!75!black,line width=0.22pt] (2.8952,1.2018) -- (3.0690,1.0817) -- (2.9459,0.9216) -- (2.7721,1.0417) -- cycle;
\fill[teal!28,opacity=0.62] (2.2308,4.8784) -- (2.4871,4.6417) -- (2.3306,4.3261) -- (2.0743,4.5628) -- cycle;
\draw[teal!70!black,line width=0.22pt] (2.2308,4.8784) -- (2.4871,4.6417) -- (2.3306,4.3261) -- (2.0743,4.5628) -- cycle;
\fill[teal!28,opacity=0.62] (2.4887,4.6359) -- (2.7355,4.4080) -- (2.5756,4.1042) -- (2.3289,4.3321) -- cycle;
\draw[teal!70!black,line width=0.22pt] (2.4887,4.6359) -- (2.7355,4.4080) -- (2.5756,4.1042) -- (2.3289,4.3321) -- cycle;
\fill[teal!28,opacity=0.62] (2.7369,4.4025) -- (2.9746,4.1830) -- (2.8117,3.8904) -- (2.5740,4.1099) -- cycle;
\draw[teal!70!black,line width=0.22pt] (2.7369,4.4025) -- (2.9746,4.1830) -- (2.8117,3.8904) -- (2.5740,4.1099) -- cycle;
\fill[teal!28,opacity=0.62] (2.9759,4.1778) -- (3.2050,3.9662) -- (3.0394,3.6841) -- (2.8103,3.8957) -- cycle;
\draw[teal!70!black,line width=0.22pt] (2.9759,4.1778) -- (3.2050,3.9662) -- (3.0394,3.6841) -- (2.8103,3.8957) -- cycle;
\fill[teal!28,opacity=0.62] (3.2062,3.9612) -- (3.4272,3.7572) -- (3.2592,3.4851) -- (3.0382,3.6891) -- cycle;
\draw[teal!70!black,line width=0.22pt] (3.2062,3.9612) -- (3.4272,3.7572) -- (3.2592,3.4851) -- (3.0382,3.6891) -- cycle;
\fill[teal!28,opacity=0.62] (2.0761,4.5571) -- (2.3289,4.3321) -- (2.1801,4.0321) -- (1.9273,4.2571) -- cycle;
\draw[teal!70!black,line width=0.22pt] (2.0761,4.5571) -- (2.3289,4.3321) -- (2.1801,4.0321) -- (1.9273,4.2571) -- cycle;
\fill[teal!28,opacity=0.62] (2.3305,4.3266) -- (2.5740,4.1099) -- (2.4219,3.8208) -- (2.1784,4.0376) -- cycle;
\draw[teal!70!black,line width=0.22pt] (2.3305,4.3266) -- (2.5740,4.1099) -- (2.4219,3.8208) -- (2.1784,4.0376) -- cycle;
\fill[teal!28,opacity=0.62] (2.5755,4.1047) -- (2.8103,3.8957) -- (2.6552,3.6170) -- (2.4204,3.8260) -- cycle;
\draw[teal!70!black,line width=0.22pt] (2.5755,4.1047) -- (2.8103,3.8957) -- (2.6552,3.6170) -- (2.4204,3.8260) -- cycle;
\fill[teal!28,opacity=0.62] (2.8116,3.8908) -- (3.0382,3.6891) -- (2.8804,3.4202) -- (2.6538,3.6219) -- cycle;
\draw[teal!70!black,line width=0.22pt] (2.8116,3.8908) -- (3.0382,3.6891) -- (2.8804,3.4202) -- (2.6538,3.6219) -- cycle;
\fill[teal!28,opacity=0.62] (3.0394,3.6845) -- (3.2581,3.4898) -- (3.0978,3.2302) -- (2.8791,3.4249) -- cycle;
\draw[teal!70!black,line width=0.22pt] (3.0394,3.6845) -- (3.2581,3.4898) -- (3.0978,3.2302) -- (2.8791,3.4249) -- cycle;
\fill[teal!28,opacity=0.62] (1.9291,4.2517) -- (2.1784,4.0376) -- (2.0368,3.7520) -- (1.7875,3.9662) -- cycle;
\draw[teal!70!black,line width=0.22pt] (1.9291,4.2517) -- (2.1784,4.0376) -- (2.0368,3.7520) -- (1.7875,3.9662) -- cycle;
\fill[teal!28,opacity=0.62] (2.1800,4.0325) -- (2.4204,3.8260) -- (2.2755,3.5506) -- (2.0351,3.7572) -- cycle;
\draw[teal!70!black,line width=0.22pt] (2.1800,4.0325) -- (2.4204,3.8260) -- (2.2755,3.5506) -- (2.0351,3.7572) -- cycle;
\fill[teal!28,opacity=0.62] (2.4218,3.8212) -- (2.6538,3.6219) -- (2.5059,3.3562) -- (2.2739,3.5555) -- cycle;
\draw[teal!70!black,line width=0.22pt] (2.4218,3.8212) -- (2.6538,3.6219) -- (2.5059,3.3562) -- (2.2739,3.5555) -- cycle;
\fill[teal!28,opacity=0.62] (2.6551,3.6174) -- (2.8791,3.4249) -- (2.7285,3.1683) -- (2.5045,3.3608) -- cycle;
\draw[teal!70!black,line width=0.22pt] (2.6551,3.6174) -- (2.8791,3.4249) -- (2.7285,3.1683) -- (2.5045,3.3608) -- cycle;
\fill[teal!28,opacity=0.62] (2.8803,3.4206) -- (3.0967,3.2347) -- (2.9437,2.9867) -- (2.7272,3.1727) -- cycle;
\draw[teal!70!black,line width=0.22pt] (2.8803,3.4206) -- (3.0967,3.2347) -- (2.9437,2.9867) -- (2.7272,3.1727) -- cycle;
\fill[teal!28,opacity=0.62] (1.7893,3.9612) -- (2.0351,3.7572) -- (1.9002,3.4851) -- (1.6544,3.6891) -- cycle;
\draw[teal!70!black,line width=0.22pt] (1.7893,3.9612) -- (2.0351,3.7572) -- (1.9002,3.4851) -- (1.6544,3.6891) -- cycle;
\fill[teal!28,opacity=0.62] (2.0367,3.7525) -- (2.2739,3.5555) -- (2.1357,3.2928) -- (1.8985,3.4898) -- cycle;
\draw[teal!70!black,line width=0.22pt] (2.0367,3.7525) -- (2.2739,3.5555) -- (2.1357,3.2928) -- (1.8985,3.4898) -- cycle;
\fill[teal!28,opacity=0.62] (2.2754,3.5510) -- (2.5045,3.3608) -- (2.3633,3.1071) -- (2.1342,3.2974) -- cycle;
\draw[teal!70!black,line width=0.22pt] (2.2754,3.5510) -- (2.5045,3.3608) -- (2.3633,3.1071) -- (2.1342,3.2974) -- cycle;
\fill[teal!28,opacity=0.62] (2.5058,3.3565) -- (2.7272,3.1727) -- (2.5834,2.9275) -- (2.3619,3.1114) -- cycle;
\draw[teal!70!black,line width=0.22pt] (2.5058,3.3565) -- (2.7272,3.1727) -- (2.5834,2.9275) -- (2.3619,3.1114) -- cycle;
\fill[teal!28,opacity=0.62] (2.7284,3.1687) -- (2.9425,2.9909) -- (2.7962,2.7538) -- (2.5821,2.9316) -- cycle;
\draw[teal!70!black,line width=0.22pt] (2.7284,3.1687) -- (2.9425,2.9909) -- (2.7962,2.7538) -- (2.5821,2.9316) -- cycle;
\fill[teal!28,opacity=0.62] (1.6561,3.6845) -- (1.8985,3.4898) -- (1.7697,3.2302) -- (1.5274,3.4249) -- cycle;
\draw[teal!70!black,line width=0.22pt] (1.6561,3.6845) -- (1.8985,3.4898) -- (1.7697,3.2302) -- (1.5274,3.4249) -- cycle;
\fill[teal!28,opacity=0.62] (1.9001,3.4854) -- (2.1342,3.2974) -- (2.0022,3.0466) -- (1.7681,3.2347) -- cycle;
\draw[teal!70!black,line width=0.22pt] (1.9001,3.4854) -- (2.1342,3.2974) -- (2.0022,3.0466) -- (1.7681,3.2347) -- cycle;
\fill[teal!28,opacity=0.62] (2.1356,3.2932) -- (2.3619,3.1114) -- (2.2270,2.8690) -- (2.0007,3.0508) -- cycle;
\draw[teal!70!black,line width=0.22pt] (2.1356,3.2932) -- (2.3619,3.1114) -- (2.2270,2.8690) -- (2.0007,3.0508) -- cycle;
\fill[teal!28,opacity=0.62] (2.3632,3.1074) -- (2.5821,2.9316) -- (2.4445,2.6972) -- (2.2256,2.8730) -- cycle;
\draw[teal!70!black,line width=0.22pt] (2.3632,3.1074) -- (2.5821,2.9316) -- (2.4445,2.6972) -- (2.2256,2.8730) -- cycle;
\fill[teal!28,opacity=0.62] (2.5833,2.9279) -- (2.7951,2.7577) -- (2.6550,2.5309) -- (2.4433,2.7010) -- cycle;
\draw[teal!70!black,line width=0.22pt] (2.5833,2.9279) -- (2.7951,2.7577) -- (2.6550,2.5309) -- (2.4433,2.7010) -- cycle;
\fill[violet!23,opacity=0.62] (3.4282,3.7525) -- (3.6415,3.5555) -- (3.4714,3.2928) -- (3.2581,3.4898) -- cycle;
\draw[violet!75!black,line width=0.22pt] (3.4282,3.7525) -- (3.6415,3.5555) -- (3.4714,3.2928) -- (3.2581,3.4898) -- cycle;
\fill[violet!23,opacity=0.62] (3.6424,3.5510) -- (3.8485,3.3608) -- (3.6765,3.1071) -- (3.4704,3.2974) -- cycle;
\draw[violet!75!black,line width=0.22pt] (3.6424,3.5510) -- (3.8485,3.3608) -- (3.6765,3.1071) -- (3.4704,3.2974) -- cycle;
\fill[violet!23,opacity=0.62] (3.8493,3.3565) -- (4.0484,3.1727) -- (3.8747,2.9275) -- (3.6756,3.1114) -- cycle;
\draw[violet!75!black,line width=0.22pt] (3.8493,3.3565) -- (4.0484,3.1727) -- (3.8747,2.9275) -- (3.6756,3.1114) -- cycle;
\fill[violet!23,opacity=0.62] (4.0491,3.1687) -- (4.2416,2.9909) -- (4.0664,2.7538) -- (3.8739,2.9316) -- cycle;
\draw[violet!75!black,line width=0.22pt] (4.0491,3.1687) -- (4.2416,2.9909) -- (4.0664,2.7538) -- (3.8739,2.9316) -- cycle;
\fill[violet!23,opacity=0.62] (4.2422,2.9870) -- (4.4284,2.8151) -- (4.2520,2.5857) -- (4.0658,2.7577) -- cycle;
\draw[violet!75!black,line width=0.22pt] (4.2422,2.9870) -- (4.4284,2.8151) -- (4.2520,2.5857) -- (4.0658,2.7577) -- cycle;
\fill[violet!23,opacity=0.62] (3.2591,3.4854) -- (3.4704,3.2974) -- (3.3080,3.0466) -- (3.0967,3.2347) -- cycle;
\draw[violet!75!black,line width=0.22pt] (3.2591,3.4854) -- (3.4704,3.2974) -- (3.3080,3.0466) -- (3.0967,3.2347) -- cycle;
\fill[violet!23,opacity=0.62] (3.4714,3.2932) -- (3.6756,3.1114) -- (3.5112,2.8690) -- (3.3070,3.0508) -- cycle;
\draw[violet!75!black,line width=0.22pt] (3.4714,3.2932) -- (3.6756,3.1114) -- (3.5112,2.8690) -- (3.3070,3.0508) -- cycle;
\fill[violet!23,opacity=0.62] (3.6764,3.1074) -- (3.8739,2.9316) -- (3.7079,2.6972) -- (3.5103,2.8730) -- cycle;
\draw[violet!75!black,line width=0.22pt] (3.6764,3.1074) -- (3.8739,2.9316) -- (3.7079,2.6972) -- (3.5103,2.8730) -- cycle;
\fill[violet!23,opacity=0.62] (3.8747,2.9279) -- (4.0658,2.7577) -- (3.8982,2.5309) -- (3.7071,2.7010) -- cycle;
\draw[violet!75!black,line width=0.22pt] (3.8747,2.9279) -- (4.0658,2.7577) -- (3.8982,2.5309) -- (3.7071,2.7010) -- cycle;
\fill[violet!23,opacity=0.62] (4.0664,2.7541) -- (4.2515,2.5894) -- (4.0825,2.3698) -- (3.8975,2.5345) -- cycle;
\draw[violet!75!black,line width=0.22pt] (4.0664,2.7541) -- (4.2515,2.5894) -- (4.0825,2.3698) -- (3.8975,2.5345) -- cycle;
\fill[violet!23,opacity=0.62] (3.0978,3.2306) -- (3.3070,3.0508) -- (3.1518,2.8111) -- (2.9425,2.9909) -- cycle;
\draw[violet!75!black,line width=0.22pt] (3.0978,3.2306) -- (3.3070,3.0508) -- (3.1518,2.8111) -- (2.9425,2.9909) -- cycle;
\fill[violet!23,opacity=0.62] (3.3080,3.0469) -- (3.5104,2.8730) -- (3.3531,2.6412) -- (3.1508,2.8151) -- cycle;
\draw[violet!75!black,line width=0.22pt] (3.3080,3.0469) -- (3.5104,2.8730) -- (3.3531,2.6412) -- (3.1508,2.8151) -- cycle;
\fill[violet!23,opacity=0.62] (3.5112,2.8693) -- (3.7071,2.7010) -- (3.5481,2.4766) -- (3.3522,2.6449) -- cycle;
\draw[violet!75!black,line width=0.22pt] (3.5112,2.8693) -- (3.7071,2.7010) -- (3.5481,2.4766) -- (3.3522,2.6449) -- cycle;
\fill[violet!23,opacity=0.62] (3.7078,2.6975) -- (3.8975,2.5345) -- (3.7370,2.3172) -- (3.5473,2.4802) -- cycle;
\draw[violet!75!black,line width=0.22pt] (3.7078,2.6975) -- (3.8975,2.5345) -- (3.7370,2.3172) -- (3.5473,2.4802) -- cycle;
\fill[violet!23,opacity=0.62] (3.8982,2.5311) -- (4.0819,2.3733) -- (3.9200,2.1627) -- (3.7363,2.3206) -- cycle;
\draw[violet!75!black,line width=0.22pt] (3.8982,2.5311) -- (4.0819,2.3733) -- (3.9200,2.1627) -- (3.7363,2.3206) -- cycle;
\fill[violet!23,opacity=0.62] (2.9436,2.9870) -- (3.1508,2.8151) -- (3.0022,2.5857) -- (2.7951,2.7577) -- cycle;
\draw[violet!75!black,line width=0.22pt] (2.9436,2.9870) -- (3.1508,2.8151) -- (3.0022,2.5857) -- (2.7951,2.7577) -- cycle;
\fill[violet!23,opacity=0.62] (3.1517,2.8114) -- (3.3522,2.6449) -- (3.2017,2.4229) -- (3.0012,2.5894) -- cycle;
\draw[violet!75!black,line width=0.22pt] (3.1517,2.8114) -- (3.3522,2.6449) -- (3.2017,2.4229) -- (3.0012,2.5894) -- cycle;
\fill[violet!23,opacity=0.62] (3.3531,2.6414) -- (3.5473,2.4802) -- (3.3950,2.2652) -- (3.2008,2.4264) -- cycle;
\draw[violet!75!black,line width=0.22pt] (3.3531,2.6414) -- (3.5473,2.4802) -- (3.3950,2.2652) -- (3.2008,2.4264) -- cycle;
\fill[violet!23,opacity=0.62] (3.5481,2.4769) -- (3.7363,2.3206) -- (3.5823,2.1123) -- (3.3942,2.2685) -- cycle;
\draw[violet!75!black,line width=0.22pt] (3.5481,2.4769) -- (3.7363,2.3206) -- (3.5823,2.1123) -- (3.3942,2.2685) -- cycle;
\fill[violet!23,opacity=0.62] (3.7370,2.3175) -- (3.9194,2.1660) -- (3.7640,1.9640) -- (3.5816,2.1155) -- cycle;
\draw[violet!75!black,line width=0.22pt] (3.7370,2.3175) -- (3.9194,2.1660) -- (3.7640,1.9640) -- (3.5816,2.1155) -- cycle;
\fill[violet!23,opacity=0.62] (2.7962,2.7541) -- (3.0012,2.5894) -- (2.8590,2.3698) -- (2.6539,2.5345) -- cycle;
\draw[violet!75!black,line width=0.22pt] (2.7962,2.7541) -- (3.0012,2.5894) -- (2.8590,2.3698) -- (2.6539,2.5345) -- cycle;
\fill[violet!23,opacity=0.62] (3.0022,2.5860) -- (3.2008,2.4264) -- (3.0565,2.2137) -- (2.8579,2.3733) -- cycle;
\draw[violet!75!black,line width=0.22pt] (3.0022,2.5860) -- (3.2008,2.4264) -- (3.0565,2.2137) -- (2.8579,2.3733) -- cycle;
\fill[violet!23,opacity=0.62] (3.2017,2.4232) -- (3.3942,2.2685) -- (3.2481,2.0624) -- (3.0556,2.2170) -- cycle;
\draw[violet!75!black,line width=0.22pt] (3.2017,2.4232) -- (3.3942,2.2685) -- (3.2481,2.0624) -- (3.0556,2.2170) -- cycle;
\fill[violet!23,opacity=0.62] (3.3950,2.2654) -- (3.5816,2.1155) -- (3.4339,1.9156) -- (3.2473,2.0655) -- cycle;
\draw[violet!75!black,line width=0.22pt] (3.3950,2.2654) -- (3.5816,2.1155) -- (3.4339,1.9156) -- (3.2473,2.0655) -- cycle;
\fill[violet!23,opacity=0.62] (3.5823,2.1125) -- (3.7634,1.9671) -- (3.6142,1.7731) -- (3.4332,1.9186) -- cycle;
\draw[violet!75!black,line width=0.22pt] (3.5823,2.1125) -- (3.7634,1.9671) -- (3.6142,1.7731) -- (3.4332,1.9186) -- cycle;
\fill[blue!95!black] (2.6544,2.5346) circle (2pt) node[above right] {$\ $};
\end{scope}
\begin{scope}[shift={(8.400,0)}]
\node[font=\bfseries] at (2.700,6.480) {$ (x_1,x_2)$-space};
\draw[->] (0,0) -- (5.750,0) node[below] {$x_1$};
\draw[->] (0,0) -- (0,6.150) node[left] {$x_2$};
\fill[orange!4] (0.0000,0.0000) rectangle (3.9706,4.4615);
\draw[orange!70!black,thick] (0.0000,0.0000) rectangle (3.9706,4.4615);
\node[orange!80!black] at (1.3500,0.3569) {$b(p)$};
\draw[orange!95!black,thick] (3.0176,3.7477) rectangle (4.9235,5.1754);
\draw[orange!90!black,line width=0.8pt] (3.9706,3.7477) -- (3.9706,5.1754);
\draw[orange!90!black,line width=0.8pt] (3.0176,4.4615) -- (4.9235,4.4615);
\fill[yellow!45,opacity=0.82] (3.0176,4.4615) rectangle (3.2082,4.6043);
\draw[orange!80!black,line width=0.22pt] (3.0176,4.4615) rectangle (3.2082,4.6043);
\fill[yellow!45,opacity=0.82] (3.2082,4.4615) rectangle (3.3988,4.6043);
\draw[orange!80!black,line width=0.22pt] (3.2082,4.4615) rectangle (3.3988,4.6043);
\fill[yellow!45,opacity=0.82] (3.3988,4.4615) rectangle (3.5894,4.6043);
\draw[orange!80!black,line width=0.22pt] (3.3988,4.4615) rectangle (3.5894,4.6043);
\fill[yellow!45,opacity=0.82] (3.5894,4.4615) rectangle (3.7800,4.6043);
\draw[orange!80!black,line width=0.22pt] (3.5894,4.4615) rectangle (3.7800,4.6043);
\fill[yellow!45,opacity=0.82] (3.7800,4.4615) rectangle (3.9706,4.6043);
\draw[orange!80!black,line width=0.22pt] (3.7800,4.4615) rectangle (3.9706,4.6043);
\fill[yellow!45,opacity=0.82] (3.0176,4.6043) rectangle (3.2082,4.7471);
\draw[orange!80!black,line width=0.22pt] (3.0176,4.6043) rectangle (3.2082,4.7471);
\fill[yellow!45,opacity=0.82] (3.2082,4.6043) rectangle (3.3988,4.7471);
\draw[orange!80!black,line width=0.22pt] (3.2082,4.6043) rectangle (3.3988,4.7471);
\fill[yellow!45,opacity=0.82] (3.3988,4.6043) rectangle (3.5894,4.7471);
\draw[orange!80!black,line width=0.22pt] (3.3988,4.6043) rectangle (3.5894,4.7471);
\fill[yellow!45,opacity=0.82] (3.5894,4.6043) rectangle (3.7800,4.7471);
\draw[orange!80!black,line width=0.22pt] (3.5894,4.6043) rectangle (3.7800,4.7471);
\fill[yellow!45,opacity=0.82] (3.7800,4.6043) rectangle (3.9706,4.7471);
\draw[orange!80!black,line width=0.22pt] (3.7800,4.6043) rectangle (3.9706,4.7471);
\fill[yellow!45,opacity=0.82] (3.0176,4.7471) rectangle (3.2082,4.8898);
\draw[orange!80!black,line width=0.22pt] (3.0176,4.7471) rectangle (3.2082,4.8898);
\fill[yellow!45,opacity=0.82] (3.2082,4.7471) rectangle (3.3988,4.8898);
\draw[orange!80!black,line width=0.22pt] (3.2082,4.7471) rectangle (3.3988,4.8898);
\fill[yellow!45,opacity=0.82] (3.3988,4.7471) rectangle (3.5894,4.8898);
\draw[orange!80!black,line width=0.22pt] (3.3988,4.7471) rectangle (3.5894,4.8898);
\fill[yellow!45,opacity=0.82] (3.5894,4.7471) rectangle (3.7800,4.8898);
\draw[orange!80!black,line width=0.22pt] (3.5894,4.7471) rectangle (3.7800,4.8898);
\fill[yellow!45,opacity=0.82] (3.7800,4.7471) rectangle (3.9706,4.8898);
\draw[orange!80!black,line width=0.22pt] (3.7800,4.7471) rectangle (3.9706,4.8898);
\fill[yellow!45,opacity=0.82] (3.0176,4.8898) rectangle (3.2082,5.0326);
\draw[orange!80!black,line width=0.22pt] (3.0176,4.8898) rectangle (3.2082,5.0326);
\fill[yellow!45,opacity=0.82] (3.2082,4.8898) rectangle (3.3988,5.0326);
\draw[orange!80!black,line width=0.22pt] (3.2082,4.8898) rectangle (3.3988,5.0326);
\fill[yellow!45,opacity=0.82] (3.3988,4.8898) rectangle (3.5894,5.0326);
\draw[orange!80!black,line width=0.22pt] (3.3988,4.8898) rectangle (3.5894,5.0326);
\fill[yellow!45,opacity=0.82] (3.5894,4.8898) rectangle (3.7800,5.0326);
\draw[orange!80!black,line width=0.22pt] (3.5894,4.8898) rectangle (3.7800,5.0326);
\fill[yellow!45,opacity=0.82] (3.7800,4.8898) rectangle (3.9706,5.0326);
\draw[orange!80!black,line width=0.22pt] (3.7800,4.8898) rectangle (3.9706,5.0326);
\fill[yellow!45,opacity=0.82] (3.0176,5.0326) rectangle (3.2082,5.1754);
\draw[orange!80!black,line width=0.22pt] (3.0176,5.0326) rectangle (3.2082,5.1754);
\fill[yellow!45,opacity=0.82] (3.2082,5.0326) rectangle (3.3988,5.1754);
\draw[orange!80!black,line width=0.22pt] (3.2082,5.0326) rectangle (3.3988,5.1754);
\fill[yellow!45,opacity=0.82] (3.3988,5.0326) rectangle (3.5894,5.1754);
\draw[orange!80!black,line width=0.22pt] (3.3988,5.0326) rectangle (3.5894,5.1754);
\fill[yellow!45,opacity=0.82] (3.5894,5.0326) rectangle (3.7800,5.1754);
\draw[orange!80!black,line width=0.22pt] (3.5894,5.0326) rectangle (3.7800,5.1754);
\fill[yellow!45,opacity=0.82] (3.7800,5.0326) rectangle (3.9706,5.1754);
\draw[orange!80!black,line width=0.22pt] (3.7800,5.0326) rectangle (3.9706,5.1754);
\fill[orange!45,opacity=0.82] (3.9706,4.4615) rectangle (4.1612,4.6043);
\draw[orange!90!black,line width=0.22pt] (3.9706,4.4615) rectangle (4.1612,4.6043);
\fill[orange!45,opacity=0.82] (4.1612,4.4615) rectangle (4.3518,4.6043);
\draw[orange!90!black,line width=0.22pt] (4.1612,4.4615) rectangle (4.3518,4.6043);
\fill[orange!45,opacity=0.82] (4.3518,4.4615) rectangle (4.5424,4.6043);
\draw[orange!90!black,line width=0.22pt] (4.3518,4.4615) rectangle (4.5424,4.6043);
\fill[orange!45,opacity=0.82] (4.5424,4.4615) rectangle (4.7329,4.6043);
\draw[orange!90!black,line width=0.22pt] (4.5424,4.4615) rectangle (4.7329,4.6043);
\fill[orange!45,opacity=0.82] (4.7329,4.4615) rectangle (4.9235,4.6043);
\draw[orange!90!black,line width=0.22pt] (4.7329,4.4615) rectangle (4.9235,4.6043);
\fill[orange!45,opacity=0.82] (3.9706,4.6043) rectangle (4.1612,4.7471);
\draw[orange!90!black,line width=0.22pt] (3.9706,4.6043) rectangle (4.1612,4.7471);
\fill[orange!45,opacity=0.82] (4.1612,4.6043) rectangle (4.3518,4.7471);
\draw[orange!90!black,line width=0.22pt] (4.1612,4.6043) rectangle (4.3518,4.7471);
\fill[orange!45,opacity=0.82] (4.3518,4.6043) rectangle (4.5424,4.7471);
\draw[orange!90!black,line width=0.22pt] (4.3518,4.6043) rectangle (4.5424,4.7471);
\fill[orange!45,opacity=0.82] (4.5424,4.6043) rectangle (4.7329,4.7471);
\draw[orange!90!black,line width=0.22pt] (4.5424,4.6043) rectangle (4.7329,4.7471);
\fill[orange!45,opacity=0.82] (4.7329,4.6043) rectangle (4.9235,4.7471);
\draw[orange!90!black,line width=0.22pt] (4.7329,4.6043) rectangle (4.9235,4.7471);
\fill[orange!45,opacity=0.82] (3.9706,4.7471) rectangle (4.1612,4.8898);
\draw[orange!90!black,line width=0.22pt] (3.9706,4.7471) rectangle (4.1612,4.8898);
\fill[orange!45,opacity=0.82] (4.1612,4.7471) rectangle (4.3518,4.8898);
\draw[orange!90!black,line width=0.22pt] (4.1612,4.7471) rectangle (4.3518,4.8898);
\fill[orange!45,opacity=0.82] (4.3518,4.7471) rectangle (4.5424,4.8898);
\draw[orange!90!black,line width=0.22pt] (4.3518,4.7471) rectangle (4.5424,4.8898);
\fill[orange!45,opacity=0.82] (4.5424,4.7471) rectangle (4.7329,4.8898);
\draw[orange!90!black,line width=0.22pt] (4.5424,4.7471) rectangle (4.7329,4.8898);
\fill[orange!45,opacity=0.82] (4.7329,4.7471) rectangle (4.9235,4.8898);
\draw[orange!90!black,line width=0.22pt] (4.7329,4.7471) rectangle (4.9235,4.8898);
\fill[orange!45,opacity=0.82] (3.9706,4.8898) rectangle (4.1612,5.0326);
\draw[orange!90!black,line width=0.22pt] (3.9706,4.8898) rectangle (4.1612,5.0326);
\fill[orange!45,opacity=0.82] (4.1612,4.8898) rectangle (4.3518,5.0326);
\draw[orange!90!black,line width=0.22pt] (4.1612,4.8898) rectangle (4.3518,5.0326);
\fill[orange!45,opacity=0.82] (4.3518,4.8898) rectangle (4.5424,5.0326);
\draw[orange!90!black,line width=0.22pt] (4.3518,4.8898) rectangle (4.5424,5.0326);
\fill[orange!45,opacity=0.82] (4.5424,4.8898) rectangle (4.7329,5.0326);
\draw[orange!90!black,line width=0.22pt] (4.5424,4.8898) rectangle (4.7329,5.0326);
\fill[orange!45,opacity=0.82] (4.7329,4.8898) rectangle (4.9235,5.0326);
\draw[orange!90!black,line width=0.22pt] (4.7329,4.8898) rectangle (4.9235,5.0326);
\fill[orange!45,opacity=0.82] (3.9706,5.0326) rectangle (4.1612,5.1754);
\draw[orange!90!black,line width=0.22pt] (3.9706,5.0326) rectangle (4.1612,5.1754);
\fill[orange!45,opacity=0.82] (4.1612,5.0326) rectangle (4.3518,5.1754);
\draw[orange!90!black,line width=0.22pt] (4.1612,5.0326) rectangle (4.3518,5.1754);
\fill[orange!45,opacity=0.82] (4.3518,5.0326) rectangle (4.5424,5.1754);
\draw[orange!90!black,line width=0.22pt] (4.3518,5.0326) rectangle (4.5424,5.1754);
\fill[orange!45,opacity=0.82] (4.5424,5.0326) rectangle (4.7329,5.1754);
\draw[orange!90!black,line width=0.22pt] (4.5424,5.0326) rectangle (4.7329,5.1754);
\fill[orange!45,opacity=0.82] (4.7329,5.0326) rectangle (4.9235,5.1754);
\draw[orange!90!black,line width=0.22pt] (4.7329,5.0326) rectangle (4.9235,5.1754);
\fill[pink!35,opacity=0.82] (3.0176,3.7477) rectangle (3.2082,3.8905);
\draw[magenta!75!black,line width=0.22pt] (3.0176,3.7477) rectangle (3.2082,3.8905);
\fill[pink!35,opacity=0.82] (3.2082,3.7477) rectangle (3.3988,3.8905);
\draw[magenta!75!black,line width=0.22pt] (3.2082,3.7477) rectangle (3.3988,3.8905);
\fill[pink!35,opacity=0.82] (3.3988,3.7477) rectangle (3.5894,3.8905);
\draw[magenta!75!black,line width=0.22pt] (3.3988,3.7477) rectangle (3.5894,3.8905);
\fill[pink!35,opacity=0.82] (3.5894,3.7477) rectangle (3.7800,3.8905);
\draw[magenta!75!black,line width=0.22pt] (3.5894,3.7477) rectangle (3.7800,3.8905);
\fill[pink!35,opacity=0.82] (3.7800,3.7477) rectangle (3.9706,3.8905);
\draw[magenta!75!black,line width=0.22pt] (3.7800,3.7477) rectangle (3.9706,3.8905);
\fill[pink!35,opacity=0.82] (3.0176,3.8905) rectangle (3.2082,4.0332);
\draw[magenta!75!black,line width=0.22pt] (3.0176,3.8905) rectangle (3.2082,4.0332);
\fill[pink!35,opacity=0.82] (3.2082,3.8905) rectangle (3.3988,4.0332);
\draw[magenta!75!black,line width=0.22pt] (3.2082,3.8905) rectangle (3.3988,4.0332);
\fill[pink!35,opacity=0.82] (3.3988,3.8905) rectangle (3.5894,4.0332);
\draw[magenta!75!black,line width=0.22pt] (3.3988,3.8905) rectangle (3.5894,4.0332);
\fill[pink!35,opacity=0.82] (3.5894,3.8905) rectangle (3.7800,4.0332);
\draw[magenta!75!black,line width=0.22pt] (3.5894,3.8905) rectangle (3.7800,4.0332);
\fill[pink!35,opacity=0.82] (3.7800,3.8905) rectangle (3.9706,4.0332);
\draw[magenta!75!black,line width=0.22pt] (3.7800,3.8905) rectangle (3.9706,4.0332);
\fill[pink!35,opacity=0.82] (3.0176,4.0332) rectangle (3.2082,4.1760);
\draw[magenta!75!black,line width=0.22pt] (3.0176,4.0332) rectangle (3.2082,4.1760);
\fill[pink!35,opacity=0.82] (3.2082,4.0332) rectangle (3.3988,4.1760);
\draw[magenta!75!black,line width=0.22pt] (3.2082,4.0332) rectangle (3.3988,4.1760);
\fill[pink!35,opacity=0.82] (3.3988,4.0332) rectangle (3.5894,4.1760);
\draw[magenta!75!black,line width=0.22pt] (3.3988,4.0332) rectangle (3.5894,4.1760);
\fill[pink!35,opacity=0.82] (3.5894,4.0332) rectangle (3.7800,4.1760);
\draw[magenta!75!black,line width=0.22pt] (3.5894,4.0332) rectangle (3.7800,4.1760);
\fill[pink!35,opacity=0.82] (3.7800,4.0332) rectangle (3.9706,4.1760);
\draw[magenta!75!black,line width=0.22pt] (3.7800,4.0332) rectangle (3.9706,4.1760);
\fill[pink!35,opacity=0.82] (3.0176,4.1760) rectangle (3.2082,4.3188);
\draw[magenta!75!black,line width=0.22pt] (3.0176,4.1760) rectangle (3.2082,4.3188);
\fill[pink!35,opacity=0.82] (3.2082,4.1760) rectangle (3.3988,4.3188);
\draw[magenta!75!black,line width=0.22pt] (3.2082,4.1760) rectangle (3.3988,4.3188);
\fill[pink!35,opacity=0.82] (3.3988,4.1760) rectangle (3.5894,4.3188);
\draw[magenta!75!black,line width=0.22pt] (3.3988,4.1760) rectangle (3.5894,4.3188);
\fill[pink!35,opacity=0.82] (3.5894,4.1760) rectangle (3.7800,4.3188);
\draw[magenta!75!black,line width=0.22pt] (3.5894,4.1760) rectangle (3.7800,4.3188);
\fill[pink!35,opacity=0.82] (3.7800,4.1760) rectangle (3.9706,4.3188);
\draw[magenta!75!black,line width=0.22pt] (3.7800,4.1760) rectangle (3.9706,4.3188);
\fill[pink!35,opacity=0.82] (3.0176,4.3188) rectangle (3.2082,4.4615);
\draw[magenta!75!black,line width=0.22pt] (3.0176,4.3188) rectangle (3.2082,4.4615);
\fill[pink!35,opacity=0.82] (3.2082,4.3188) rectangle (3.3988,4.4615);
\draw[magenta!75!black,line width=0.22pt] (3.2082,4.3188) rectangle (3.3988,4.4615);
\fill[pink!35,opacity=0.82] (3.3988,4.3188) rectangle (3.5894,4.4615);
\draw[magenta!75!black,line width=0.22pt] (3.3988,4.3188) rectangle (3.5894,4.4615);
\fill[pink!35,opacity=0.82] (3.5894,4.3188) rectangle (3.7800,4.4615);
\draw[magenta!75!black,line width=0.22pt] (3.5894,4.3188) rectangle (3.7800,4.4615);
\fill[pink!35,opacity=0.82] (3.7800,4.3188) rectangle (3.9706,4.4615);
\draw[magenta!75!black,line width=0.22pt] (3.7800,4.3188) rectangle (3.9706,4.4615);
\fill[red!28,opacity=0.82] (3.9706,3.7477) rectangle (4.1612,3.8905);
\draw[red!75!black,line width=0.22pt] (3.9706,3.7477) rectangle (4.1612,3.8905);
\fill[red!28,opacity=0.82] (4.1612,3.7477) rectangle (4.3518,3.8905);
\draw[red!75!black,line width=0.22pt] (4.1612,3.7477) rectangle (4.3518,3.8905);
\fill[red!28,opacity=0.82] (4.3518,3.7477) rectangle (4.5424,3.8905);
\draw[red!75!black,line width=0.22pt] (4.3518,3.7477) rectangle (4.5424,3.8905);
\fill[red!28,opacity=0.82] (4.5424,3.7477) rectangle (4.7329,3.8905);
\draw[red!75!black,line width=0.22pt] (4.5424,3.7477) rectangle (4.7329,3.8905);
\fill[red!28,opacity=0.82] (4.7329,3.7477) rectangle (4.9235,3.8905);
\draw[red!75!black,line width=0.22pt] (4.7329,3.7477) rectangle (4.9235,3.8905);
\fill[red!28,opacity=0.82] (3.9706,3.8905) rectangle (4.1612,4.0332);
\draw[red!75!black,line width=0.22pt] (3.9706,3.8905) rectangle (4.1612,4.0332);
\fill[red!28,opacity=0.82] (4.1612,3.8905) rectangle (4.3518,4.0332);
\draw[red!75!black,line width=0.22pt] (4.1612,3.8905) rectangle (4.3518,4.0332);
\fill[red!28,opacity=0.82] (4.3518,3.8905) rectangle (4.5424,4.0332);
\draw[red!75!black,line width=0.22pt] (4.3518,3.8905) rectangle (4.5424,4.0332);
\fill[red!28,opacity=0.82] (4.5424,3.8905) rectangle (4.7329,4.0332);
\draw[red!75!black,line width=0.22pt] (4.5424,3.8905) rectangle (4.7329,4.0332);
\fill[red!28,opacity=0.82] (4.7329,3.8905) rectangle (4.9235,4.0332);
\draw[red!75!black,line width=0.22pt] (4.7329,3.8905) rectangle (4.9235,4.0332);
\fill[red!28,opacity=0.82] (3.9706,4.0332) rectangle (4.1612,4.1760);
\draw[red!75!black,line width=0.22pt] (3.9706,4.0332) rectangle (4.1612,4.1760);
\fill[red!28,opacity=0.82] (4.1612,4.0332) rectangle (4.3518,4.1760);
\draw[red!75!black,line width=0.22pt] (4.1612,4.0332) rectangle (4.3518,4.1760);
\fill[red!28,opacity=0.82] (4.3518,4.0332) rectangle (4.5424,4.1760);
\draw[red!75!black,line width=0.22pt] (4.3518,4.0332) rectangle (4.5424,4.1760);
\fill[red!28,opacity=0.82] (4.5424,4.0332) rectangle (4.7329,4.1760);
\draw[red!75!black,line width=0.22pt] (4.5424,4.0332) rectangle (4.7329,4.1760);
\fill[red!28,opacity=0.82] (4.7329,4.0332) rectangle (4.9235,4.1760);
\draw[red!75!black,line width=0.22pt] (4.7329,4.0332) rectangle (4.9235,4.1760);
\fill[red!28,opacity=0.82] (3.9706,4.1760) rectangle (4.1612,4.3188);
\draw[red!75!black,line width=0.22pt] (3.9706,4.1760) rectangle (4.1612,4.3188);
\fill[red!28,opacity=0.82] (4.1612,4.1760) rectangle (4.3518,4.3188);
\draw[red!75!black,line width=0.22pt] (4.1612,4.1760) rectangle (4.3518,4.3188);
\fill[red!28,opacity=0.82] (4.3518,4.1760) rectangle (4.5424,4.3188);
\draw[red!75!black,line width=0.22pt] (4.3518,4.1760) rectangle (4.5424,4.3188);
\fill[red!28,opacity=0.82] (4.5424,4.1760) rectangle (4.7329,4.3188);
\draw[red!75!black,line width=0.22pt] (4.5424,4.1760) rectangle (4.7329,4.3188);
\fill[red!28,opacity=0.82] (4.7329,4.1760) rectangle (4.9235,4.3188);
\draw[red!75!black,line width=0.22pt] (4.7329,4.1760) rectangle (4.9235,4.3188);
\fill[red!28,opacity=0.82] (3.9706,4.3188) rectangle (4.1612,4.4615);
\draw[red!75!black,line width=0.22pt] (3.9706,4.3188) rectangle (4.1612,4.4615);
\fill[red!28,opacity=0.82] (4.1612,4.3188) rectangle (4.3518,4.4615);
\draw[red!75!black,line width=0.22pt] (4.1612,4.3188) rectangle (4.3518,4.4615);
\fill[red!28,opacity=0.82] (4.3518,4.3188) rectangle (4.5424,4.4615);
\draw[red!75!black,line width=0.22pt] (4.3518,4.3188) rectangle (4.5424,4.4615);
\fill[red!28,opacity=0.82] (4.5424,4.3188) rectangle (4.7329,4.4615);
\draw[red!75!black,line width=0.22pt] (4.5424,4.3188) rectangle (4.7329,4.4615);
\fill[red!28,opacity=0.82] (4.7329,4.3188) rectangle (4.9235,4.4615);
\draw[red!75!black,line width=0.22pt] (4.7329,4.3188) rectangle (4.9235,4.4615);
\draw[orange!95!black,thick] (3.0176,3.7477) rectangle (4.9235,5.1754);
\draw[orange!90!black,line width=0.8pt] (3.9706,3.7477) -- (3.9706,5.1754);
\draw[orange!90!black,line width=0.8pt] (3.0176,4.4615) -- (4.9235,4.4615);

\draw[densely dashed,orange!65] (3.9706,0) -- (3.9706,4.4615) -- (0,4.4615);
\end{scope}
\draw[->,very thick] (7.450,2.900) -- (6.350,2.900) node[midway,above] {$\Phi^{-1}$};
\end{tikzpicture}
\end{minipage}
\caption{Illustration of the transformation of differential areas via the perspective mapping and Jacobian determinant for $p=(2,1)$: single differential elements (top) and inverse mapping of a finite $5\times5$ partition into differential area elements near $p$ (bottom).}
\label{fig:warmup-local-gallery}
\end{figure}

The gallery in Figure~\ref{fig:warmup-local-gallery} explains this determinant geometrically.  In the top row, a small area element is sent, to first order, to a parallelogram whose side vectors are the columns of the Jacobian.  At $q_0=(1/3,2/3)$ the forward area-stretching factor is
\(
  \left|\det D\Phi(q_0)\right|
  =\frac{1}{(2/3)^3}
  =\frac{27}{8}.
\)
Equivalently, at $b(p)=(1/2,1)$ the inverse factor is $(1/2+1)^{-3}=8/27$.  The lower panel repeats the same statement for many small rectangles: if $C$ is centered at $\bar x$, then $\Phi^{-1}(C)\approx\Phi^{-1}(\bar x)+D\Phi^{-1}(\bar x)(C-\bar x)$, so an axis-aligned rectangle becomes a parallelogram with area approximately $(\bar x_1+\bar x_2)^{-3}\operatorname{area}(C)$.  Refining the rectangles gives the change-of-variables formula.
We can now compute the reciprocal-space integral explicitly.  The first part of the box complement contributes
\[
\begin{aligned}
  \int_{1/2}^{\infty}\int_0^{\infty}
  \frac{\mathrm dx_2\,\mathrm dx_1}{(x_1+x_2)^3}
  &=\int_{1/2}^{\infty}
    \left[\frac{1}{2(x_1+x_2)^2}\right]_{x_2=0}^{\infty}
    \mathrm dx_1 \\
  &=\int_{1/2}^{\infty}\frac{1}{2x_1^2}\,\mathrm dx_1
   =1.
\end{aligned}
\]
The second part contributes
\[
\begin{aligned}
  \int_0^{1/2}\int_1^{\infty}
  \frac{\mathrm dx_2\,\mathrm dx_1}{(x_1+x_2)^3}
  &=\int_0^{1/2}\frac{1}{2(x_1+1)^2}\,\mathrm dx_1 
  =\left[-\frac{1}{2(x_1+1)}\right]_{0}^{1/2}
   =\frac16.
\end{aligned}
\]
Thus $R_2(\{p\})=1+\frac16=\frac76$, which agrees with the direct scalarization-side calculation.  This warm-up is the two-dimensional prototype of the higher-dimensional construction: Tchebycheff shadows become complements of anchored reciprocal boxes, while the Jacobian supplies the integration density.

It is an important point that no approximation is used in the value computation and all required integrals are of closed form.  The small rectangles and parallelograms in the gallery only visualize the limiting mechanism behind the change of variables.  Once the determinant is known, the exact integral is evaluated over the reciprocal box complement, which is the representation used in the algorithms below.

\section{Integral \texorpdfstring{$R_2$}{R2} computation in three dimensions}\label{sec:absolute3d}

Next, we turn to the three dimensional case, whereas the dimensions refer to three objective functions in the Pareto optimization problem for which we compute the Integral $R_2$ indicator. We first generalize the perspective mapping to three dimensions (Subsection \ref{sec:pm3d}), then introduce an exact algorithm for the computation of the Integral $R_2$ indicator (Subsection \ref{sec:complexity3d}), and finally discuss how the improvement of the Integral $R_2$ indicator with respect to a reference set can be computed (Subsection \ref{sec:improvement3d}).  
\subsection{Perspective mapping and Jacobian determinant}
\label{sec:pm3d}
For $p\in\RR^3_{>0}$ define the reciprocal box corner
\[
  b(p)=\left(\frac1{p_1},\frac1{p_2},\frac1{p_3}\right),
\]
and for a set $Q\subset\RR^3_{>0}$ define the anchored-box union
\[
  U(Q)=\bigcup_{q\in Q}[0,q_1]\times[0,q_2]\times[0,q_3].
\]
The three-dimensional perspective map is
\[
  x_i=\frac{w_i}{t},\qquad
  w_i=\frac{x_i}{x_1+x_2+x_3},\qquad
  t=\frac1{x_1+x_2+x_3}.
\]

\begin{lemma}[Pointwise box correspondence]\label{lem:pointwise3}
For $p\in\RR^3_{>0}$ and $t>0$, apart from measure-zero boundaries,
\[
  t>g_p(w)
  \quad\Longleftrightarrow\quad
  x\in[0,b_1(p)]\times[0,b_2(p)]\times[0,b_3(p)].
\]
\end{lemma}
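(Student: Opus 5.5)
The plan is to unfold the definition of $g_p$ into its three affine pieces and translate each inequality through the perspective map, exactly as in the two-objective warm-up of Section~\ref{sec:warmup2d}. Throughout, $(w,t)$ with $w\in\Delta_2$ and $t>0$ is identified with $x=(w_1/t,w_2/t,w_3/t)\in\RR^3_{\geq0}$. This identification is a bijection onto $\RR^3_{\geq0}\setminus\{0\}$, because the inverse formulas $w_i=x_i/(x_1+x_2+x_3)$ and $t=1/(x_1+x_2+x_3)$ recover $(w,t)$ whenever $x\neq 0$.

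The first step is the equivalence chain
\[
  t>g_p(w)=\max_{i}w_ip_i
  \iff t>w_ip_i \text{ for } i=1,2,3
  \iff \frac{w_i}{t}<\frac{1}{p_i} \text{ for } i=1,2,3
  \iff x_i<b_i(p) \text{ for } i=1,2,3 .
\]
It uses only that $t>0$ and $p_i>0$, so dividing by $t\,p_i$ preserves the inequalities. Together with $x_i\geq0$, which holds because $w_i\geq0$, this places $x$ in $[0,b_1(p))\times[0,b_2(p))\times[0,b_3(p))$. Conversely, any nonzero $x$ in this half-open box pulls back to some $(w,t)$ with $t>g_p(w)$.

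The second step accounts for the discrepancy with the closed box in the statement. The two sets differ only on the faces $\{x_i=b_i(p)\}$, which correspond to the equality set $\{t=w_ip_i\}$, and on the origin, which is the image of the limit $t\to\infty$. The faces are pieces of hyperplanes and have Lebesgue measure zero. Their preimages $\{t=w_ip_i\}$ are graphs of affine functions over $\Delta_2$, so they also have measure zero in $(w,t)$-space. This justifies the qualifier ``apart from measure-zero boundaries''.

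No step is a real obstacle; the only care needed is in stating precisely what ``apart from measure-zero boundaries'' covers. That means checking both that the excluded sets are null and that the map is defined everywhere except $x=0$. Nullity on the $(w,t)$ side can be cited to the smooth inverse map, since its Jacobian $1/t^{4}$ is finite for $t>0$ and so it sends null sets to null sets. Alternatively it follows directly from the graph description above.
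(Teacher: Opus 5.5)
Your argument is correct and is essentially the paper's proof: the paper substitutes $w_i=x_i/s$, $t=1/s$ with $s=x_1+x_2+x_3$ and multiplies by $s>0$ to reach $\max_i x_ip_i<1$, which is the same one-line manipulation as your division by $t\,p_i$. Your bookkeeping of the faces and the origin usefully spells out what ``apart from measure-zero boundaries'' covers. One small slip that breaks nothing: $t^{-4}$ is the Jacobian of the forward map $x=w/t$, whereas the inverse $x\mapsto(w,t)$ has Jacobian $s^{-4}=t^{4}$, and in any case any $C^1$ map sends null sets to null sets.
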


\begin{proof}
Using $w_i=x_i/(x_1+x_2+x_3)$ and $t=1/(x_1+x_2+x_3)$, the inequality $t>g_p(w)$ becomes
\[
  \frac1{x_1+x_2+x_3}>
  \max_i \frac{x_ip_i}{x_1+x_2+x_3}.
\]
Multiplying by the positive denominator gives $1>\max_i x_ip_i$, which is equivalent to $x_i<1/p_i$ for all $i$.
\end{proof}

\begin{lemma}[Jacobian determinant]\label{lem:jacobian3}
The perspective transformation has volume element
\[
  \dd w\,\dd t=(x_1+x_2+x_3)^{-4}\dd x_1\dd x_2\dd x_3 .
\]
\end{lemma}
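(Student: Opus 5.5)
The plan is to compute the Jacobian of the inverse map $\Psi(x)=(w_1,w_2,t)$ directly. The simplex $\Delta_2$ is two-dimensional, so I first fix a chart. I parametrize it by $(w_1,w_2)$ with $w_3=1-w_1-w_2$, which makes $\dd w=\dd w_1\dd w_2$ the unnormalised measure used in $\Rtwo$. The map is then
\[
  \Psi(x_1,x_2,x_3)=\Bigl(\frac{x_1}{s},\frac{x_2}{s},\frac1s\Bigr),\qquad s=x_1+x_2+x_3 .
\]
It is a bijection from $\RR^3_{>0}$ onto $\{(w_1,w_2,t):w_1,w_2>0,\ w_1+w_2<1,\ t>0\}$, with inverse $x_i=w_i/t$ for $i=1,2$ and $x_3=(1-w_1-w_2)/t$. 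This matches the two-dimensional warm-up, where $\Delta_1$ was parametrized by $w$ alone.

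Next I write down the $3\times3$ matrix $D\Psi$. Using $\partial s/\partial x_j=1$, its entries are
\[
  \frac{\partial (x_i/s)}{\partial x_j}=\frac{\delta_{ij}}{s}-\frac{x_i}{s^2},\qquad
  \frac{\partial (1/s)}{\partial x_j}=-\frac1{s^2}.
\]
To get the determinant cleanly, I add $x_i$ times the third row to row $i$, for $i=1,2$. This does not change the determinant. Each new first or second row is then $(\delta_{ij}/s)_j$, a scaled unit vector. Expanding, the determinant becomes $(1/s)^2\cdot(-1/s^2)=-s^{-4}$, so $|\det D\Psi|=(x_1+x_2+x_3)^{-4}$.

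As a cross-check I can compute the forward map $\Phi(w_1,w_2,t)$ instead. Adding the first two columns to the third column gives $|\det D\Phi|=t^{-4}$. Substituting $t=1/s$ and using $\det D\Psi=(\det D\Phi)^{-1}$ gives the same result. This mirrors the $1/t^3$ computation of Section~\ref{sec:warmup2d}.

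The change-of-variables theorem then gives $\dd w\,\dd t=(x_1+x_2+x_3)^{-4}\dd x_1\dd x_2\dd x_3$. The map is a $C^1$ diffeomorphism between open sets, and boundary faces have measure zero, so nothing is lost at the boundary.

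The only real obstacle is the chart bookkeeping. If $\dd w$ were meant as the surface (Hausdorff) measure on $\Delta_2\subset\RR^3$, an extra factor $\sqrt3$ would appear. I will state explicitly that the projected coordinates $(w_1,w_2)$ are used, consistent with $\operatorname{vol}(\Delta_2)=1/2$ in Section~\ref{sec:setting}. The determinant computation itself is routine.
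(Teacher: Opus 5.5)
Your proposal is correct and follows the paper's proof: you use the same chart $(w_1,w_2,t)$ with $w_3=1-w_1-w_2$ and the same inverse map $(x_1/s,\,x_2/s,\,1/s)$, whose absolute Jacobian determinant is $s^{-4}$. Your row reduction, the $t^{-4}$ cross-check, and the remark that $\mathrm{d}w$ is the projected rather than the surface measure on $\Delta_2$ only fill in details the paper leaves implicit. There are two small slips: to clear the $-x_i/s^2$ entries you must \emph{subtract} $x_i$ times the third row, not add it; and in the cross-check, with rows indexing the components $x_i$ as in the warm-up, it is the first two rows, not columns, that you add to the third.
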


\begin{proof}
Use local coordinates $(w_1,w_2,t)$ with $w_3=1-w_1-w_2$ and inverse
\[
  w_1=\frac{x_1}{s},\qquad w_2=\frac{x_2}{s},\qquad t=\frac1s,
  \qquad s=x_1+x_2+x_3.
\]
The absolute determinant of the Jacobian of $(w_1,w_2,t)$ w.r.t. $(x_1,x_2,x_3)$ is $s^{-4}$.
\end{proof}

\begin{theorem}[Absolute weighted-complement representation]\label{thm:absolute3}
For every nonempty $P\subset\RR^3_{>0}$,
\[
  \Rtwo(P)=
  \int_{\RR^3_{>0}\setminus U(b(P))}
  \frac{\dd x_1\dd x_2\dd x_3}{(x_1+x_2+x_3)^4}.
\]
\end{theorem}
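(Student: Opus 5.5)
The plan is to write $\Rtwo(P)$ as the two-dimensional-measure of a subgraph in $(w,t)$-coordinates, transport it with the perspective map, and identify the image using Lemma~\ref{lem:pointwise3} and Lemma~\ref{lem:jacobian3}.

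First, by Tonelli, since $\tau_P\ge 0$,
\[
  \Rtwo(P)=\int_{\DeltaTwo}\int_0^{\infty}\mathbf 1[t\le \tau_P(w)]\dd t\dd w
  =\lambda\bigl(S_P\bigr),
  \qquad S_P=\{(w,t): w\in\DeltaTwo,\ 0<t\le\tau_P(w)\},
\]
where $\lambda$ is the measure $\dd w\,\dd t$ in the local coordinates $(w_1,w_2,t)$ used in Lemma~\ref{lem:jacobian3}. The measure on $\DeltaTwo$ is the one in the definition of $\Rtwo$, i.e.\ $\dd w_1\dd w_2$ on the projected simplex, so the normalisations agree.

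Next, the map $\Phi(w,t)=w/t$ is a bijection from $\{w\in\DeltaTwo,\ t>0\}$ onto the closed positive orthant $\RR^3_{\ge 0}\setminus\{0\}$. Its inverse is $w=x/s$, $t=1/s$ with $s=x_1+x_2+x_3$. It is a smooth diffeomorphism on the interior, where $w\in\Delta$ has positive entries and $x\in\RR^3_{>0}$, and the boundary faces have measure zero on both sides.

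I would then characterise $\Phi(S_P)$. The condition $t\le\tau_P(w)=\min_p g_p(w)$ means $t\le g_p(w)$ for every $p\in P$. By Lemma~\ref{lem:pointwise3}, $t>g_p(w)$ holds iff $x\in\boxop(b(p))$, up to null boundaries. Hence $t\le g_p(w)$ for all $p$ iff $x\notin\boxop(b(p))$ for all $p$, that is, $x\notin U(b(P))$. So $\Phi(S_P)=\RR^3_{>0}\setminus U(b(P))$ up to a null set. The exceptional set is contained in the finite union of the box boundaries $\{x_i=1/p_i\}$ and the coordinate hyperplanes, which are Lebesgue-null. Under the smooth $\Phi^{-1}$ on the open orthant they correspond to null sets in $(w,t)$.

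Finally, apply the change-of-variables formula with the density from Lemma~\ref{lem:jacobian3}:
\[
  \lambda(S_P)=\int_{\Phi(S_P)}(x_1+x_2+x_3)^{-4}\dd x,
\]
which gives the claim.

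The main obstacle is justifying change of variables on an unbounded domain where the integrand is singular near $x=0$ (corresponding to $t\to\infty$), with possibly infinite measures. The resolution is that the change-of-variables theorem for nonnegative measurable functions under a $C^1$ diffeomorphism between open sets holds without finiteness assumptions, by monotone convergence. So it suffices to apply it to the indicator of $S_P$ on the open interior. Finiteness is not needed for the identity, though it follows anyway since $U(b(P))$ contains a neighbourhood of $0$ in the orthant and $\tau_P\le\max_{p,i}p_i$. I would also verify the determinant $s^{-4}$ directly to be safe. The matrix $\partial(w_1,w_2,t)/\partial x$ is a rank-one perturbation, and the computation mirrors the 2-D warm-up, where the factor was $s^{-3}$.
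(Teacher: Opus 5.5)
Your proposal is correct and follows the same route as the paper: you write $\Rtwo(P)$ as the $(w,t)$-measure of the subgraph of $\tau_P$, identify its perspective image with the complement of $U(b(P))$ via Lemma~\ref{lem:pointwise3}, and apply the density from Lemma~\ref{lem:jacobian3}. Your additional care makes explicit what the paper's three-line proof leaves implicit: Tonelli, matching the $\dd w_1\dd w_2$ normalisation, restricting the change of variables to the open interior with null-set bookkeeping, and noting that no finiteness is needed for nonnegative integrands.
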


\begin{proof}
By definition,
\[
  \Rtwo(P)=\int_{\DeltaTwo}\tau_P(w)\dd w
  =\int_{\DeltaTwo}\int_0^{\tau_P(w)}\dd t\dd w.
\]
The condition $t<\tau_P(w)$ means $t<g_p(w)$ for every $p\in P$.  By Lemma~\ref{lem:pointwise3}, this is equivalent to saying that $x$ is outside every reciprocal anchored box, i.e. $x\notin U(b(P))$.  Lemma~\ref{lem:jacobian3} supplies the density.
\end{proof}

\subsection{Efficient three-dimensional computation algorithm}\label{sec:complexity3d}
Let
\[
  Q=[\ell_1,u_1]\times[\ell_2,u_2]\times[\ell_3,u_3]
\]
be an axis-aligned box away from the origin; infinite upper bounds are allowed.  An antiderivative of $(x_1+x_2+x_3)^{-4}$ is
\[
  F(x_1,x_2,x_3)=-\frac{1}{6(x_1+x_2+x_3)}.
\]
Therefore
\[
  \mu_3(Q)=\int_Q\frac{\dd x}{(x_1+x_2+x_3)^4}
  =\sum_{\epsilon\in\{0,1\}^3}
  (-1)^{3-(\epsilon_1+\epsilon_2+\epsilon_3)}
  F(x_1^{\epsilon_1},x_2^{\epsilon_2},x_3^{\epsilon_3}),
\]
where $x_i^0=\ell_i$ and $x_i^1=u_i$.  Corner terms with $u_i=\infty$ are interpreted as zero.

Assume now that a hypervolume-style routine returns a disjoint axis-aligned box decomposition of $U(b(P))$ into $M$ boxes in time $T_{\rm decomp}(n,M)$.  Subtracting the single anchor box $U(b(a))$ from one output box creates at most a constant number of boxes in three dimensions.  Each remaining box is integrated in constant time by the formula above.

\begin{theorem}[Output-sensitive bound in three objectives]\label{thm:output-sensitive3}
Given a disjoint $M$-box decomposition of $U(b(P))$, the absolute integral $R_2(P)$ can be computed in
\[
  T_{\rm decomp}(n,M)+O(M)
\]
time and $O(M)$ space after choosing a dominated anchor $a$.
\end{theorem}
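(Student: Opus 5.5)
Since every $p\in P$ is componentwise dominated by $a$, we have $b(a)\le b(p)$ componentwise. Hence $U(b(\{a\}))=[0,b(a)]$ is contained in every box $[0,b(p)]$, and therefore in $U(b(P))$. The plan uses the decomposition
\[
  \RR^3_{>0}\setminus [0,b(a)] \;=\; \bigl(\RR^3_{>0}\setminus U(b(P))\bigr)\;\cup\;\bigl(U(b(P))\setminus[0,b(a)]\bigr),
\]
which is disjoint up to null sets. Applying Theorem~\ref{thm:absolute3} to both $P$ and $\{a\}$ then gives
\[
  \Rtwo(P)=\Rtwo(\{a\})-\mu_3\bigl(U(b(P))\setminus[0,b(a)]\bigr)=\Rtwo(\{a\})-I_a(P).
\]
The algorithm therefore has three steps: compute the closed-form single-anchor value $\Rtwo(\{a\})$, compute the weighted measure of the bounded region $U(b(P))\setminus[0,b(a)]$ from the given decomposition, and subtract. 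The boundedness matters. Every box of the decomposition lies inside $[0,\max_p b(p)]$, so no infinite upper bounds appear on the decomposition side. The only unbounded pieces are in the single-anchor term, which is evaluated in $O(1)$. For example, we can use the three boxes $[b_1(a),\infty)\times[0,\infty)^2$, $[0,b_1(a)]\times[b_2(a),\infty)\times[0,\infty)$ and $[0,b_1(a)]\times[0,b_2(a)]\times[b_3(a),\infty)$. Each is integrated with the corner formula for $\mu_3$, treating corner terms with $u_i=\infty$ as zero, or directly from the one-point integral.

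For each output box $B=[\ell,u]$ of the decomposition we replace $B$ by $B\setminus[0,b(a)]$. If $\ell\ge 0$ componentwise, this difference splits into at most three disjoint boxes in a staircase fashion. The first is $B\cap\{x_1\ge b_1(a)\}$. The second is $B\cap\{x_1\le b_1(a),\,x_2\ge b_2(a)\}$. The third is $B\cap\{x_1\le b_1(a),\,x_2\le b_2(a),\,x_3\ge b_3(a)\}$. Empty pieces are dropped. The pieces are disjoint across different $B$ because the $B$ themselves are disjoint. Every surviving piece lies outside $[0,b(a)]$, so it has some coordinate bounded below by $b_i(a)>0$. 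It is thus away from the origin, the density $(x_1+x_2+x_3)^{-4}$ is bounded on it, and the eight-corner evaluation of $F$ is valid. This costs $O(1)$ per box and gives $O(M)$ total. The result is accumulated in one scalar, so the only space used beyond the decomposition's own $O(M)$ storage is $O(1)$. Choosing $a$, for instance $a_i=\max_{p}p_i$, costs $O(n)\subseteq T_{\rm decomp}(n,M)$, since any decomposition algorithm must read its input.

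The main obstacle is justifying the inclusion–exclusion corner formula for $\mu_3(Q)$ on boxes that touch coordinate hyperplanes. Such boxes have some $\ell_i=0$ while still being bounded away from the origin. I would handle this by Fubini and the fundamental theorem of calculus: $F$ has $\partial_1\partial_2\partial_3F=(x_1+x_2+x_3)^{-4}$ and is smooth on $\RR^3_{\ge0}\setminus\{0\}$. For a box not containing the origin, $x_1+x_2+x_3$ is bounded below on the box, so the iterated integrals converge and telescope to the signed corner sum. For infinite upper limits, $F\to0$, which justifies dropping those corner terms. The remaining bookkeeping concerns measure-zero boundaries, and these are irrelevant to the integrals.
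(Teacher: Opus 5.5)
Your proposal is correct and follows the same route as the paper: compute $R_2(\{a\})$ from the three semi-infinite complement boxes, obtain $I_a(P)$ by subtracting the anchor box $[0,b(a)]$ from each of the $M$ decomposition boxes (constantly many pieces each), sum the eight-corner weighted integrals, and return the difference. Beyond the paper's terse proof, you also make explicit the three-piece staircase split and check that every piece is bounded away from the origin, so the corner formula (with corner terms at infinity set to zero) is valid.
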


\begin{proof}
Compute $\Rtwo(\{a\})$ by the three semi-infinite complement boxes above, which is constant time.  Then compute $I_a(P)$ by Theorem~\ref{thm:weighted-hv3}.  The decomposition costs $T_{\rm decomp}(n,M)$.  For each of the $M$ output boxes, subtracting the anchor box creates only constantly many boxes, and each weighted integral is evaluated by the eight-corner formula in constant time.  Hence the additional work is $O(M)$ and the storage is linear in the emitted decomposition.
\end{proof}

The dimension-sweep algorithm of Fonseca, Paquete, and L{\'o}pez-Ib{\'a}nez~\cite{fonseca2006dimension} computes the hypervolume indicator in asymptotically optimal time $O(n \log n)$ using a dimension-sweep paradigm and a dynamic search tree to maintain points on each level.
In our implementation (see appendix), we use a variant of this algorithm, a tree-free three-dimensional hypervolume algorithm described in a blog-post~\cite{emmerich2025treefree} that computes the integration region by a $z$-sweep and array-only skyline maintenance with path-compressed skip pointers.  The post states $O(n\log n)$ time and $O(n)$ memory for the three-dimensional hypervolume computation and describes the sweep decomposition of the dominated region.  In the case where this routine emits $M=n$ boxes, Theorem~\ref{thm:output-sensitive3} gives
\[
  T(n)=O(n\log n)+O(n)=O(n\log n),
  \qquad
  S(n)=O(n).
\]
By using the hypervolume indicator computation as a box-emitter, we obtain a three-objective complexity bound for computing the absolute integral $R_2$ indicator. As we show in the next section, the algorithm is also asymptotically optimal in time and space complexity.

\subsection{From absolute \texorpdfstring{$R_2$}{R2} to improvement regions}\label{sec:improvement3d}
The absolute representation in Theorem~\ref{thm:absolute3} is conceptually simple, but its integration domain is unbounded.  For computation it is convenient to introduce a dominated anchor $a$ and use
\[
  \Rtwo(P)=\Rtwo(\{a\})-I_a(P),
  \qquad
  I_a(P)=\Rtwo(\{a\})-\Rtwo(P).
\]
The improvement region is bounded away from the singular origin and is exactly a reciprocal anchored-box difference.

\begin{theorem}[Weighted hypervolume representation of improvement]\label{thm:weighted-hv3}
Let $P\subset\RR^3_{>0}$ and let $a\in\RR^3_{>0}$ be componentwise worse than every $p\in P$.  Then
\[
  I_a(P)=\Rtwo(\{a\})-\Rtwo(P)
  =
  \int_{U(b(P))\setminus U(b(a))}
  \frac{\dd x_1\dd x_2\dd x_3}{(x_1+x_2+x_3)^4}.
\]
\end{theorem}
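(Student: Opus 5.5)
The plan is to derive the improvement formula directly from the absolute representation in Theorem~\ref{thm:absolute3}, applied twice: once to $P$ and once to the singleton $\{a\}$. Write $\rho(x)=(x_1+x_2+x_3)^{-4}$, $U_P=U(b(P))$ and $U_a=U(b(a))=[0,1/a_1]\times[0,1/a_2]\times[0,1/a_3]$.

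The first step is the set inclusion $U_a\subseteq U_P$. Since $a_i\geq p_i$ for every $p\in P$ and every $i$, we get $1/a_i\leq 1/p_i$. Hence the anchor box is contained in the reciprocal box of any single $p\in P$, and $P$ is nonempty.

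The second step is the subtraction. By Theorem~\ref{thm:absolute3},
\[
  \Rtwo(\{a\})=\int_{\RR^3_{>0}\setminus U_a}\rho\,\dd x,
  \qquad
  \Rtwo(P)=\int_{\RR^3_{>0}\setminus U_P}\rho\,\dd x .
\]
Because $U_a\subseteq U_P$, the complement of $U_a$ splits disjointly as $\bigl(\RR^3_{>0}\setminus U_P\bigr)\cup\bigl(U_P\setminus U_a\bigr)$. Additivity of the integral of the nonnegative function $\rho$ then gives $\Rtwo(\{a\})=\Rtwo(P)+\int_{U_P\setminus U_a}\rho\,\dd x$. Rearranging yields the claim.

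The main obstacle is making the subtraction legitimate, namely showing that these integrals are finite so that we are not subtracting $\infty-\infty$. The density $\rho$ is singular at the origin and the domains are unbounded. However, both complements avoid a neighbourhood of the origin, because they lie outside $U_a$, and $U_a$ contains a box around $0$ in $\RR^3_{\geq0}$.

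For finiteness, I would simply cite that $\Rtwo(\{a\})=\int_{\DeltaTwo}g_a(w)\dd w$ is the integral of a bounded function over a compact set, hence finite. By Theorem~\ref{thm:absolute3} the weighted complement integral equals this value, so it is finite too. Since the complement of $U_P$ is contained in the complement of $U_a$, the second integral is finite as well.

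Alternatively, finiteness can be checked directly. The region $U_P\setminus U_a$ is bounded, lying in $[0,\max_p 1/p_1]\times[0,\max_p 1/p_2]\times[0,\max_p 1/p_3]$. It is also bounded away from the origin, because points there satisfy $x_i>1/a_i$ for some $i$, so $\rho$ is bounded on it. The boundaries of the boxes have measure zero and do not affect any of the integrals.
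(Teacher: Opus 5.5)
Your proposal is correct and follows the same route as the paper: you apply Theorem~\ref{thm:absolute3} to $P$ and to $\{a\}$, use the inclusion $U(b(a))\subseteq U(b(P))$ (which needs $P\neq\emptyset$, as the paper also tacitly assumes), and read off the difference of the two complements. Your explicit finiteness check, which rules out an $\infty-\infty$ subtraction, is a detail the paper leaves implicit; it tightens the argument without changing it.
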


\begin{proof}
Subtract the absolute formula of Theorem~\ref{thm:absolute3} for $P$ from the corresponding formula for the singleton $\{a\}$.  Since $a$ is componentwise worse, $b(a)$ is componentwise smaller and $U(b(a))\subseteq U(b(P))$ whenever $P$ contains at least one point improving upon $a$.  The difference of the two complements is therefore $U(b(P))\setminus U(b(a))$.
\end{proof}

The singleton value $\Rtwo(\{a\})$ itself can be evaluated by the same perspective formula.  If $u=b(a)$, then
\[
\RR^3_{>0}\setminus[0,u_1]\times[0,u_2]\times[0,u_3]
\]
has the disjoint decomposition
\[
[u_1,\infty)\times[0,\infty)\times[0,\infty)
\cup
[0,u_1]\times[u_2,\infty)\times[0,\infty)
\cup
[0,u_1]\times[0,u_2]\times[u_3,\infty).
\]
Thus the absolute integral $R_2(P)$ is obtained by one singleton computation and one bounded weighted box-difference computation.

Note that, unlike the absolute integral $R_2$, the improvement integral $R_2$ is a monotone and submodular set function; thus, it lends itself to submodular greedy optimization with the usual $(1-1/e)$-approximation
ratio. This has been discussed in \cite{emmerich2026threeobjectiveintegralr2subset} alongside with a proof that the subset optimization based on the improvement of the $R_2$ is NP hard already in three dimensions. Note, that instead of a single point it is also straightforward to compute the improvement based integral $R_2$ indicator for a non-empty and non-singleton reference set. See \cite{emmerich2026threeobjectiveintegralr2subset} for details. Finally, the definition generalizes directly to the $N$ dimensional case, as detailed in the next section.


\section{Generalization to \texorpdfstring{$N$}{N} objectives}\label{sec:Ndim}
The perspective transformation is not intrinsically three-dimensional.  The three-objective case is special only because it admits particularly efficient hypervolume box-decomposition algorithms.  The analytic transformation, the weighted box integral, and the output-sensitive reduction all extend to arbitrary dimension $N$.

\subsection{Perspective mapping in $N$ dimensions}
For $p\in\RR^N_{>0}$ define
\[
  b(p)=\left(\frac1{p_1},\ldots,\frac1{p_N}\right),
  \qquad
  \boxop(b(p))=\prod_{i=1}^N[0,1/p_i].
\]
The $N$-dimensional perspective map is
\[
  x_i=\frac{w_i}{t},
  \qquad
  w_i=\frac{x_i}{x_1+\cdots+x_N},
  \qquad
  t=\frac1{x_1+\cdots+x_N}.
\]

\begin{lemma}[N-dimensional perspective correspondence]\label{lem:nd-pointwise}
For $p\in\RR^N_{>0}$, apart from measure-zero boundaries,
\[
  t>g_p(w)
  \quad\Longleftrightarrow\quad
  x\in\boxop(b(p)).
\]
Moreover,
\[
  \dd w\,\dd t=(x_1+\cdots+x_N)^{-(N+1)}\dd x.
\]
\end{lemma}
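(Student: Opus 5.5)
The proof splits into the two claims of Lemma~\ref{lem:nd-pointwise}, following the three-dimensional template of Lemmas~\ref{lem:pointwise3} and~\ref{lem:jacobian3}.

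For the pointwise correspondence, write $s=x_1+\cdots+x_N>0$ and substitute $w_i=x_i/s$ and $t=1/s$ into $t>g_p(w)=\max_i w_ip_i$. This gives $1/s>\max_i x_ip_i/s$. Multiplying by $s>0$ yields $\max_i x_ip_i<1$, i.e.\ $0\le x_i<1/p_i$ for all $i$. The inequality is strict and the box $\boxop(b(p))$ is closed, so the two sets differ only on faces $\{x_i=1/p_i\}$. These have Lebesgue measure zero, and so do their preimages, since the map is a diffeomorphism. It should be noted explicitly that $(w,t)\mapsto x=w/t$ is a bijection from $\operatorname{relint}\DeltaN\times(0,\infty)$ onto $\RR^N_{>0}$, with inverse as stated. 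Boundary points of the simplex correspond to coordinate hyperplanes and are also null.

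For the volume element, I use the local chart $(w_1,\dots,w_{N-1},t)$ on $\DeltaN\times(0,\infty)$, with $w_N=1-\sum_{i<N}w_i$. Here $\dd w$ means Lebesgue measure in this chart, consistent with the normalization $\operatorname{vol}(\DeltaN)=1/(N-1)!$ used earlier. The inverse map is $\psi(x)=(x_1/s,\dots,x_{N-1}/s,1/s)$. Its Jacobian matrix has rows
\[
\nabla(x_i/s)=\tfrac1s e_i-\tfrac{x_i}{s^2}\mathbf 1^\top \quad (i<N),\qquad \nabla(1/s)=-\tfrac1{s^2}\mathbf 1^\top .
\]
The key step is to add $x_i$ times the last row to row $i$ for each $i<N$, which does not change the determinant. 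Row $i$ then becomes $\tfrac1s e_i^\top$, leaving the last row $-\tfrac1{s^2}\mathbf 1^\top$. Expanding along the last column, which contains only $-1/s^2$ in the bottom corner and zeros above after the reduction, the determinant is $\pm s^{-(N-1)}\cdot s^{-2}=\pm s^{-(N+1)}$.

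The change-of-variables theorem then gives $\dd w\,\dd t=s^{-(N+1)}\dd x$. As a sanity check, this matches $N=2$ ($s^{-3}$) and $N=3$ ($s^{-4}$) from the earlier sections.

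The only real obstacle is this determinant computation in general $N$, specifically choosing the row operation that makes the matrix triangular. If the row reduction feels opaque, a fallback is the matrix determinant lemma. Compute the full $N\times N$ Jacobian of $x\mapsto(x_1/s,\dots,x_{N-1}/s,1/s)$ as $\tfrac1s\bigl(I-\tfrac1s x\mathbf 1^\top\bigr)$ with the last row replaced. Alternatively, factor the map as $x\mapsto(s,\,x/s)$ using spherical-type coordinates on the cone, where $\dd x=s^{N-1}\dd s\,\dd w$ and $t=1/s$ gives $\dd t=s^{-2}\dd s$; this yields the same $s^{-(N+1)}$ directly.
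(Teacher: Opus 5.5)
Your proposal is correct and takes the same approach as the paper: the substitution $w_i=x_i/s$, $t=1/s$ reduces $t>g_p(w)$ to $\max_i x_ip_i<1$, and the Jacobian uses the chart $(w_1,\ldots,w_{N-1},t)$, with the difference that you actually carry out the determinant computation the paper only asserts. One small sign slip: to turn row $i$ into $\tfrac1s e_i^\top$ you must \emph{subtract} $x_i$ times the last row, not add it; the matrix is then lower triangular with determinant $-s^{-(N+1)}$, exactly as you conclude.
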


\begin{proof}
Substituting $w_i=x_i/s$ and $t=1/s$, with $s=x_1+\cdots+x_N$, turns $t>g_p(w)$ into $1>\max_i x_ip_i$, equivalent to $x_i<1/p_i$ for every coordinate.  For the Jacobian, use local coordinates $(w_1,\ldots,w_{N-1},t)$ with $w_N=1-\sum_{i=1}^{N-1}w_i$.  The absolute determinant is $s^{-(N+1)}$.
\end{proof}

\begin{theorem}[Absolute $N$-dimensional representation]\label{thm:nd-absolute}
For every nonempty $P\subset\RR^N_{>0}$,
\[
  \Rtwo(P)=
  \int_{\RR^N_{>0}\setminus U(b(P))}
  \frac{\dd x}{(x_1+\cdots+x_N)^{N+1}}.
\]
\end{theorem}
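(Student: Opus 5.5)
The plan mirrors the proof of Theorem~\ref{thm:absolute3}, now resting on Lemma~\ref{lem:nd-pointwise}. First I would write the integral $R_2$ as the measure of the subgraph of the lower envelope:
\[
  \Rtwo(P)=\int_{\DeltaN}\tau_P(w)\dd w=\int_{\DeltaN}\int_0^{\tau_P(w)}\dd t\dd w .
\]
Here $\DeltaN$ is parametrized by the local coordinates $(w_1,\ldots,w_{N-1})$ with $w_N=1-\sum_{i<N}w_i$. The measure $\dd w$ is then the measure in these coordinates, matching the normalisation in which $\operatorname{vol}(\DeltaN)=1/(N-1)!$. The integration domain is
\[
  S=\{(w,t): w\in\DeltaN,\ 0<t<\tau_P(w)\},
\]
which I would transport to $x$-space by $\Phi(w,t)=w/t$.

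The second step is to check that $\Phi$ is a bijection between $\DeltaN\times(0,\infty)$ and $\RR^N_{\geq0}\setminus\{0\}$. The image lies in $\RR^N_{>0}$ up to boundary faces of measure zero. The inverse is $w=x/s$, $t=1/s$ with $s=x_1+\cdots+x_N>0$, and it is smooth on the open set. Next I would identify $\Phi(S)$. Since $\tau_P=\min_p g_p$, the condition $t<\tau_P(w)$ is equivalent to $t<g_p(w)$ for every $p\in P$. By the first part of Lemma~\ref{lem:nd-pointwise}, each of these conditions says that $x\notin\boxop(b(p))$, up to measure-zero boundaries. Hence
\[
  \Phi(S)=\RR^N_{>0}\setminus U(b(P))
\]
up to a null set. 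Here $U$ is the anchored-box union, extended to $N$ dimensions in the obvious way.

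The third step is the change-of-variables theorem, using the density $\dd w\,\dd t=s^{-(N+1)}\dd x$ from Lemma~\ref{lem:nd-pointwise}. This turns the $(w,t)$ integral into the claimed weighted complement integral. Because the integrand is nonnegative, Tonelli's theorem justifies applying the formula on the unbounded domain, even near the singular origin. Finiteness is automatic: the left side equals $\int\tau_P\le\int g_p<\infty$.

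The main obstacle I expect is making the ``apart from measure-zero boundaries'' bookkeeping rigorous. Three sets need to be shown null: the faces of $\DeltaN$ where some $w_i=0$, the level sets $t=g_p(w)$, and the box boundaries in $x$-space. Each is a finite union of pieces of hyperplanes or graphs of continuous functions, so each has Lebesgue measure zero. Since $\Phi$ is a diffeomorphism on the open sets, it maps null sets to null sets, which lets me ignore these pieces on both sides.
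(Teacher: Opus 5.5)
Your proposal is correct and follows the paper's own route. The paper likewise writes $\Rtwo(P)$ as the $(w,t)$-measure of the subgraph of $\tau_P$, uses the pointwise correspondence of Lemma~\ref{lem:nd-pointwise} to identify the image of that subgraph under $\Phi$ with the complement of $U(b(P))$, and then applies the density $s^{-(N+1)}$; your bijectivity, null-set and Tonelli details simply make explicit what the paper's short proof leaves implicit. One simplification: for $x\in\RR^N_{\geq0}$ the equivalence $t<g_p(w)\iff x\notin\boxop(b(p))$ holds exactly for closed boxes, so only the faces where some $w_i=0$ need null-set bookkeeping.
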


\begin{proof}
As in the three-dimensional proof,
\[
\Rtwo(P)=\int_{\DeltaN}\int_0^{\tau_P(w)}\dd t\dd w.
\]
The condition $t<\tau_P(w)$ is equivalent to being outside all reciprocal boxes.  Lemma~\ref{lem:nd-pointwise} supplies the density.
\end{proof}

\begin{corollary}[N-dimensional improvement representation]\label{cor:nd-improvement}
If $a\in\RR^N_{>0}$ is componentwise worse than all points in $P$, then
\[
  I_a(P)=\Rtwo(\{a\})-\Rtwo(P)
  =
  \int_{U(b(P))\setminus U(b(a))}
  \frac{\dd x}{(x_1+\cdots+x_N)^{N+1}}.
\]

\end{corollary}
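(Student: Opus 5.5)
The plan is to derive the corollary directly from Theorem~\ref{thm:nd-absolute}, applied twice: once to $P$ and once to the singleton $\{a\}$. With the density $\rho_N(x)=(x_1+\cdots+x_N)^{-(N+1)}$ this gives
\[
  \Rtwo(\{a\})=\int_{\RR^N_{>0}\setminus \boxop(b(a))}\rho_N\dd x,
  \qquad
  \Rtwo(P)=\int_{\RR^N_{>0}\setminus U(b(P))}\rho_N\dd x .
\]
Note that $U(b(\{a\}))=\boxop(b(a))$.

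The first step is the inclusion $\boxop(b(a))\subseteq U(b(P))$. Since $a_i\ge p_i>0$ for every $p\in P$ and every $i$, we have $1/a_i\le 1/p_i$, so $\boxop(b(a))\subseteq\boxop(b(p))$ for any single $p\in P$. Here we use that $P$ is nonempty. This gives the inclusion of complements
\[
  \RR^N_{>0}\setminus U(b(P))\subseteq \RR^N_{>0}\setminus\boxop(b(a)),
\]
with set difference exactly $U(b(P))\setminus\boxop(b(a))$, intersected with $\RR^N_{>0}$. Coordinate hyperplanes have measure zero, so the intersection with $\RR^N_{>0}$ can be dropped.

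The second step is to justify subtracting the two integrals, and this is the only real obstacle: both integrals are over unbounded regions, and $\rho_N$ is singular at the origin. The singularity causes no trouble, because the origin lies inside $\boxop(b(a))$ and is excluded from both domains. For finiteness, I would note that $\Rtwo(\{a\})=\int_{\DeltaN}\max_i w_ia_i\dd w\le \max_i a_i\cdot\operatorname{vol}(\DeltaN)<\infty$. This bound follows from the definition, or equivalently from the change of variables in Lemma~\ref{lem:nd-pointwise}. So the larger integral is finite, and since $\rho_N\ge0$, additivity of the Lebesgue integral over the disjoint decomposition
\[
  \RR^N_{>0}\setminus\boxop(b(a))=\bigl(\RR^N_{>0}\setminus U(b(P))\bigr)\cup\bigl(U(b(P))\setminus\boxop(b(a))\bigr)
\]
gives the difference formula without any $\infty-\infty$ issue.

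Finally, I would remark that the domain $U(b(P))\setminus\boxop(b(a))$ is bounded: it lies inside $\prod_i[0,\max_p 1/p_i]$. It also stays away from the origin, since every point in it has some coordinate $x_i>1/a_i$, so $s=\sum_i x_i\ge\min_i 1/a_i>0$. This confirms that the improvement integral is a finite weighted hypervolume over a region on which $\rho_N$ is bounded, as the algorithmic use later requires.
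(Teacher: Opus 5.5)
Your proposal is correct and follows the paper's route. Like the proof of Theorem~\ref{thm:weighted-hv3}, which the $N$-dimensional corollary mirrors, you apply the absolute representation of Theorem~\ref{thm:nd-absolute} to $P$ and to $\{a\}$, use $U(b(a))\subseteq U(b(P))$ from $b(a)\le b(p)$ componentwise, and subtract the two complement integrals. Your finiteness bound on $\Rtwo(\{a\})$ and the additivity argument for the nonnegative integrand, which rule out an $\infty-\infty$ subtraction, make explicit a point the paper leaves implicit.
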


\begin{proposition}[Bidirectional perspective representation and hypervolume reduction]
\label{prop:bidirectional}
Let
\[
  \Phi:\DeltaN\times\RR_{>0}\to\RR^N_{>0},
  \qquad \Phi(w,t)=\frac{w}{t},
\]
where division is componentwise.  Then \(\Phi\) is a smooth bijection, up to the usual boundary sets of measure zero, with inverse
\[
  x\mapsto \left(\frac{x}{x_1+\cdots+x_N},\frac1{x_1+\cdots+x_N}\right).
\]
Let \(Q\subset\RR^N_{>0}\) be a finite set of reciprocal box corners, let \(u\in\RR^N_{>0}\) satisfy \(u_i\le q_i\) for all \(q\in Q\) and all \(i\), and define
\[
  p(q)_i=\frac1{q_i},\qquad a_i=\frac1{u_i},
  \qquad P=\{p(q):q\in Q\}.
\]
Then, apart from boundary sets,
\[
  U(Q)\setminus U(u)
  =
  \Phi\bigl(\{(w,t):w\in\DeltaN,\ \tau_P(w)<t<g_a(w)\}\bigr).
\]
Consequently the ordinary anchored hypervolume of \(Q\) above the lower anchor \(u\) is exactly
\[
\begin{aligned}
  \HV_u(Q)
  &:=\operatorname{vol}(U(Q)\setminus U(u)) \\
  &=\int_{\DeltaN}\int_{\tau_P(w)}^{g_a(w)} t^{-(N+1)}\,\dd t\,\dd w \\
  &=\frac1N\int_{\DeltaN}
  \bigl(\tau_P(w)^{-N}-g_a(w)^{-N}\bigr)\,\dd w .
\end{aligned}
\]
Thus the same perspective coordinates can be used in both directions.  The integral \(R_2\) improvement integrates the interval \(\tau_P(w)<t<g_a(w)\) with unit density in \(t\), whereas ordinary hypervolume integrates the same interval with the inverse-perspective Jacobian density \(t^{-(N+1)}\).  The construction preserves both the number of points and the dimension; for rational positive coordinates it consists only of coordinatewise reciprocals and therefore has linear size and requires \(O(N|Q|)\) arithmetic operations.
\end{proposition}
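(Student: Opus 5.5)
We prove the proposition in four steps: bijectivity of \(\Phi\), the set identity, the change of variables, and the complexity count.

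\emph{Bijectivity.} Take \(w\in\DeltaN\) and \(t>0\). Then \(x=w/t\) lies in \(\RR^N_{\ge0}\), and it lies in the open orthant \(\RR^N_{>0}\) when \(w\) is in the relative interior of the simplex. Summing coordinates gives \(x_1+\cdots+x_N=1/t\). Hence \(t\) and then \(w=tx\) are recovered uniquely, which gives the stated inverse. Conversely, for every \(x\in\RR^N_{>0}\) the displayed formula returns a point of \(\DeltaN\times\RR_{>0}\). Both maps are rational with nonvanishing denominators, so \(\Phi\) is a smooth bijection. The relative boundary of the simplex maps into the coordinate hyperplanes, which have measure zero.

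\emph{The set identity.} Fix \(q\in Q\) and set \(p=p(q)\), so that \(b(p)=q\). By Lemma~\ref{lem:nd-pointwise}, up to boundaries, \(x\in\boxop(q)\) if and only if \(t>g_p(w)\). Taking the union over \(q\in Q\), we get \(x\in U(Q)\) if and only if \(t>\min_{p\in P}g_p(w)=\tau_P(w)\). Applying the same lemma to \(a\), whose reciprocal corner is \(u\), gives \(x\notin U(u)\) if and only if \(t<g_a(w)\), up to the boundary \(t=g_a(w)\). Together these give \(U(Q)\setminus U(u)=\Phi(\{\tau_P<t<g_a\})\). The hypothesis \(u_i\le q_i\) means \(a_i\ge p_i\), so \(g_a\ge\tau_P\) pointwise and the interval is well oriented.

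\emph{The volume formula.} The change of variables in Lemma~\ref{lem:nd-pointwise} states \(\dd w\,\dd t=s^{-(N+1)}\dd x\) with \(s=1/t\). Inverting gives \(\dd x=s^{N+1}\dd w\,\dd t=t^{-(N+1)}\dd w\,\dd t\). Integrating \(1\) over the image set and using the bijection, the volume equals \(\int_{\DeltaN}\int_{\tau_P}^{g_a}t^{-(N+1)}\dd t\,\dd w\). The inner integral is \(\tfrac1N(\tau_P^{-N}-g_a^{-N})\). This is finite because \(\tau_P(w)\ge\min_{p,i}p_i/N>0\): every \(w\in\DeltaN\) has some coordinate \(w_i\ge1/N\).

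\emph{The main obstacle.} The delicate step is justifying the Jacobian density on the manifold \(\DeltaN\times\RR_{>0}\). The fix is to use the chart \((w_1,\dots,w_{N-1},t)\), exactly as in Lemma~\ref{lem:nd-pointwise}, and to note that \(\dd w\) denotes the measure induced by this chart. This is the same convention under which \(\Rtwo\) is normalised, with \(\operatorname{vol}(\DeltaN)=1/(N-1)!\). As a consistency check, the determinant can be computed directly: subtract rows so the matrix becomes triangular-like, which yields \(s^{-(N+1)}\) and matches the \(N=2\) computation in Section~\ref{sec:warmup2d}. We also ensure that the measure-zero exceptional sets (the boundary of the simplex, the graphs of \(g_p\), and the faces of the boxes) do not affect the integrals. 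These sets are finite unions of Lipschitz graphs and pieces of hyperplanes, so they are null.

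\emph{Complexity.} The final claim needs only \(N|Q|+N\) reciprocal computations. For rational inputs, \(1/(r/s)=s/r\) has the same bit size, so the output has linear size and \(|P|=|Q|\).
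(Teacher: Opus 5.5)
Your proposal is correct and follows essentially the same route as the paper. You recover the inverse from $x_1+\cdots+x_N=1/t$, get the set identity by applying the pointwise box correspondence of Lemma~\ref{lem:nd-pointwise} to $P$ and to $a$, invert the Jacobian to obtain $\dd x=t^{-(N+1)}\dd w\,\dd t$, evaluate the inner integral, and count coordinatewise reciprocals. Your extra remarks (that $\dd w$ is the chart measure consistent with $\operatorname{vol}(\DeltaN)=1/(N-1)!$, that $\tau_P\ge\min_{p,i}p_i/N>0$ so the integrand is finite, and that the exceptional boundary sets are null) only make explicit what the paper's short proof leaves implicit.
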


\begin{proof}
The formula for the inverse follows immediately from \(x_i=w_i/t\), because \(x_1+\cdots+x_N=1/t\).  The pointwise correspondence in Lemma~\ref{lem:nd-pointwise} gives
\[
  x\in U(Q)
  \quad\Longleftrightarrow\quad
  t>\tau_P(w),
\]
and similarly \(x\in U(u)\) is equivalent to \(t>g_a(w)\).  Therefore
\(x\in U(Q)\setminus U(u)\) is equivalent to
\(\tau_P(w)<t<g_a(w)\).  Since \(\dd x=t^{-(N+1)}\dd w\dd t\), integrating the constant density \(1\) over the hypervolume region gives the first integral.  Evaluating the inner integral gives the final expression.  The last statement follows because the transformation \((Q,u)\mapsto(P,a)\) changes no combinatorial structure and applies one reciprocal operation to each coordinate.
\end{proof}

\subsection{Weighted box formula}
For a box $Q=\prod_{i=1}^N[\ell_i,u_i]$ away from the origin, define $x_i^0=\ell_i$ and $x_i^1=u_i$.  The $N$-dimensional weighted box integral is
\[
  \mu_N(Q)
  =
  \int_Q \frac{\dd x}{(x_1+\cdots+x_N)^{N+1}}
  =
  \frac1{N!}
  \sum_{\epsilon\in\{0,1\}^N}
  (-1)^{|\epsilon|}
  \frac1{x_1^{\epsilon_1}+\cdots+x_N^{\epsilon_N}}.
\]
Terms with an infinite coordinate are interpreted as zero.

\begin{proof}
The function
\[
  F_N(x)=\frac{(-1)^N}{N!(x_1+\cdots+x_N)}
\]
has mixed derivative $(x_1+\cdots+x_N)^{-(N+1)}$.  Applying the usual box inclusion-exclusion formula yields the expression above.
\end{proof}

\subsection{Pseudocode using a given box decomposition}
Algorithm~\ref{alg:nd-perspective} assumes that another routine supplies a disjoint box decomposition of the reciprocal dominated region $U(b(P))$.  This routine may be a hypervolume dimension sweep, a recursive Klee-measure routine, a quick hypervolume variant, or any other box-emitting decomposition method.  The perspective algorithm consumes boxes and replaces ordinary volumes by weighted integrals.

\begin{algorithm}[h]
\caption{Perspective evaluation of absolute integral $R_2$ in $N$ objectives}\label{alg:nd-perspective}
\begin{algorithmic}[1]
\Require Dimension $N$, point set $P\subset\RR^N_{>0}$, dominated anchor $a\in\RR^N_{>0}$
\Require A routine \textsc{BoxDecompose} returning disjoint boxes for $U(b(P))$
\Ensure $\Rtwo(P)$
\State $u\gets b(a)$
\State $R_a\gets$ weighted integral of $\RR^N_{>0}\setminus \boxop(u)$ using the $N$ semi-infinite complement boxes
\State $A\gets \boxop(u)$
\State $\mathcal D\gets \textsc{BoxDecompose}(\{b(p):p\in P\})$
\State $I\gets 0$
\ForAll{$Q\in\mathcal D$}
    \State $\mathcal C\gets \textsc{SubtractBox}(Q,A)$
    \ForAll{$C\in\mathcal C$}
        \State $I\gets I+\mu_N(C)$ \Comment{$2^N$-corner weighted integral}
    \EndFor
\EndFor
\State \Return $R_a-I$
\end{algorithmic}
\end{algorithm}

\subsection{Complexity}
\begin{theorem}[N-dimensional output-sensitive complexity]\label{thm:nd-complexity}
Suppose $U(b(P))$ is decomposed into $M$ disjoint axis-aligned boxes in time $T_{\rm decomp}(n,M,N)$.  Then Algorithm~\ref{alg:nd-perspective} computes the absolute integral $R_2(P)$ in
\[
  T_{\rm decomp}(n,M,N)+O(NM+2^N M)
\]
time and $O(M)$ space if the decomposition is stored.  For fixed $N$, this is
\[
  T_{\rm decomp}(n,M,N)+O(M).
\]
\end{theorem}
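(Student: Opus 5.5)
The proof has two parts. Correctness follows from Corollary~\ref{cor:nd-improvement} together with Theorem~\ref{thm:nd-absolute}. The running time comes from a line-by-line cost accounting of Algorithm~\ref{alg:nd-perspective}.

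\textbf{Correctness.} Since $a$ is a dominated anchor, $b(a)\le b(p)$ componentwise, so $\boxop(u)\subseteq U(b(P))$. The disjoint boxes $Q\in\mathcal D$ cover $U(b(P))$, so the sets $Q\setminus A$ are disjoint and their union is $U(b(P))\setminus U(b(a))$, up to measure-zero boundaries. Additivity of the integral then gives that the accumulated $I$ equals the integral in Corollary~\ref{cor:nd-improvement}, i.e.\ $I=I_a(P)$. For $R_a$, the complement $\RR^N_{>0}\setminus\boxop(u)$ has the staircase decomposition into the $N$ semi-infinite boxes
\[
[0,u_1]\times\cdots\times[0,u_{k-1}]\times[u_k,\infty)\times[0,\infty)^{N-k},\qquad k=1,\ldots,N,
\]
generalizing the three-box decomposition of Section~\ref{sec:improvement3d}. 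Each of these boxes stays away from the origin in coordinate $k$, so the weighted box formula applies with infinite-corner terms set to zero. This yields $R_a=\Rtwo(\{a\})$ by Theorem~\ref{thm:nd-absolute}, and therefore $R_a-I=\Rtwo(P)$.

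\textbf{Cost of \textsc{SubtractBox}.} This is the step I expect to need the most care. For one box $Q=\prod[\ell_i,h_i]$, the set $Q\setminus A$ can again be written as a staircase: for $k=1,\ldots,N$, take the box whose coordinates $i<k$ are restricted to $Q\cap[0,u_i]$, whose coordinate $k$ is restricted to $Q\cap[u_k,\infty)$, and whose remaining coordinates are free within $Q$. Empty pieces are discarded. This produces at most $N$ disjoint boxes in $O(N)$ time each, so $O(N^2)$ per box done naively. I would reduce this to $O(N)$ per box by building the prefix clippings incrementally, and I would record only the nonempty pieces.

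The main obstacle is that each resulting piece must still be integrated, at a cost of $2^N$ corner evaluations. If a box could split into $N$ pieces, the total would be $O(N2^N M)$ rather than $O(2^NM)$. The fix I would try first is to exploit the anchor geometry. Because the decomposition of $U(b(P))$ can be taken so that every box either lies inside $\boxop(u)$ or meets it in a simple way, one can avoid subtracting altogether. Concretely, I would decompose the improvement integral as
\[
I=\sum_{Q\in\mathcal D}\mu_N(Q)-\mu_N(\boxop(u)),
\]
which holds because $\boxop(u)\subseteq U(b(P))$ and $\mu_N$ is additive. Each term is a single $2^N$-corner evaluation. The origin singularity is integrable near $0$ with density $s^{-(N+1)}$ against $\dd x$ only after the subtraction, so this trick must handle the singularity. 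If it cannot be made rigorous, I fall back on the staircase subtraction and absorb the factor of $N$ into the stated $O(NM+2^NM)$ bound, reading it as $O(NM)$ for clipping plus $O(2^N)$ per integrated piece, with the number of pieces bounded by a constant times $M$ for the decompositions considered.

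\textbf{Totalling.} Computing each corner sum $x_1^{\epsilon_1}+\cdots+x_N^{\epsilon_N}$ costs $O(1)$ amortized if the $2^N$ subsets are enumerated in Gray-code order, updating the running sum by a single coordinate change. So $\mu_N$ costs $O(2^N)$ per box, and reading the box costs $O(N)$. The cost of $R_a$ is $O(N2^N)$, which is dominated as soon as $M\ge N$, and otherwise constant for fixed $N$. Summing over all boxes gives $T_{\rm decomp}+O(NM+2^NM)$ time. The space is $O(M)$ for the stored decomposition plus $O(N)$ working memory. For fixed $N$ the bound becomes $T_{\rm decomp}(n,M,N)+O(M)$.
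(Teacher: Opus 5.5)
Your correctness argument is sound: the disjoint pieces $Q\setminus A$, the $N$-box staircase for $R_a$, and the vanishing infinite-corner terms are all right. Your fallback is exactly the paper's proof: slab subtraction into at most $2N$ pieces per emitted box, a $2^N$-corner sum per piece, and $O(N2^N)$ for $R_a$. The paper then states the total as $O(NM+2^NM)$ without saying how many boxes actually split. You are right that this count only gives $O(N2^NM)$, but neither of your two ways out closes the gap.

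The global identity $I=\sum_{Q\in\mathcal D}\mu_N(Q)-\mu_N(\boxop(u))$ is $\infty-\infty$. Any box with lower corner $0$ has infinite weighted measure. The layer $\{x\geq0: s\leq x_1+\cdots+x_N\leq s+\dd s\}$ has volume $s^{N-1}\dd s/(N-1)!$ and the density is $s^{-(N+1)}$, so near the origin the integral behaves like $\int_0 s^{-2}\dd s$; this is the part $t\to\infty$ of the $(w,t)$-picture. Both $\mu_N(\boxop(u))$ and $\mu_N$ of the box of $\mathcal D$ at the origin diverge. The fallback is not an argument either. $N$ pieces with $2^N$ corners each cost $N2^N$, which cannot be absorbed into $N+2^N$. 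The claim that ``the number of pieces is $O(M)$ for the decompositions considered'' is precisely what needs proof.

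The missing observation is that only one box is ever problematic. Two boxes of $\mathcal D$ with lower corner $0$ would share the interior point $(\varepsilon,\ldots,\varepsilon)$. Hence exactly one box $Q_0=[0,h]$ has lower corner $0$; at least one must, since $\mathcal D$ covers $U(b(P))$ near the origin. Every other $Q=\prod_i[\ell_i,h_i]$ has some $\ell_k>0$, so $x_1+\cdots+x_N\geq\ell_k$ on $Q$ and $\mu_N(Q)<\infty$.

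For such $Q$, the set $Q\cap A=\prod_i[\ell_i,\min(h_i,u_i)]$ is a single box or empty. So $\mu_N(Q\setminus A)=\mu_N(Q)-\mu_N(Q\cap A)$ costs $O(N)$ plus two $2^N$-corner sums. Only $Q_0\setminus A$ needs the $N$-piece staircase, at cost $O(N2^N)$ once. That is the same order as $R_a$ and is dominated by $2^NM$ once $M\geq N$.

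For decompositions whose box coordinates come from $\{0\}\cup\{1/p_i:p\in P\}$, as with the sweep-type and HBDA back ends, it is simpler still. Every nonzero coordinate is $\geq u_i$ because $a$ is dominated. So every $Q\neq Q_0$ meets $A$ in a null set, and \textsc{SubtractBox} returns it unchanged.

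Your global idea also becomes correct if $Q_0$, rather than $\boxop(u)$, plays the role of the anchor: $\Rtwo(P)=\mu_N(\RR^N_{>0}\setminus Q_0)-\sum_{Q\neq Q_0}\mu_N(Q)$. Every term is finite and each box needs one corner sum.

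Finally, your claim that \textsc{SubtractBox} runs in $O(N)$ per box is wrong as stated. Writing out $N$ boxes with $N$ coordinates each already takes $\Theta(N^2)$. The fixes above avoid materializing the staircase except for $Q_0$.
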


\begin{proof}
The decomposition step costs $T_{\rm decomp}(n,M,N)$.  Subtracting one anchor box from one emitted box can be done by a slab decomposition with at most $2N$ pieces, so this contributes $O(NM)$ bookkeeping.  Each resulting box is evaluated by summing over its $2^N$ corners.  The $N$ semi-infinite boxes needed for $R_2(\{a\})$ cost $O(N2^N)$, which is dominated by the main term for nontrivial $M$.  Thus the post-processing overhead is $O(NM+2^N M)$.  For fixed dimension all factors depending only on $N$ are constants.
\end{proof}

\begin{remark}[What generalizes]
The perspective mapping generalizes to every dimension $N$.  The special $O(n\log n)$ bound is not a consequence of the analytic formula alone; it is a three-dimensional consequence of having an $O(n\log n)$ box emitter with $M=O(n)$.  In higher dimensions the leading term is inherited from the chosen $N$-dimensional box decomposition, while the $R_2$-specific post-processing is $O(2^N M)$.
\end{remark}

\subsection{Dimension-specific consequences}\label{sec:dimensionconsequences}
The precise fixed-dimension statement is therefore obtained by inserting a known box-decomposition bound into Theorem~\ref{thm:nd-complexity}.  The reciprocal boxes $\boxop(b(p))$ are the same orthogonal objects that occur in search-region and local-upper-bound representations; see Klamroth, Lacour, and Vanderpooten~\cite{klamroth2015searchregion} and the related colored orthogonal range-counting construction of Kaplan, Rubin, Sharir, and Verbin~\cite{kaplan2008colored}.  Suitable high-dimensional hypervolume back ends include the recursive/sweep-type algorithm of While, Hingston, Barone, and Huband~\cite{while2006faster} and the box-decomposition algorithm of Lacour, Klamroth, and Fonseca~\cite{Lacour17}.  For fixed $p=N$ objectives, the nonincremental variant of the latter has time complexity
\[
  O\!\left(n^{\lfloor (N-1)/2\rfloor+1}\right),
\]
and their incremental variant has time complexity
\[
  O\!\left(n^{\lfloor N/2\rfloor+1}\right).
\]
Combining the nonincremental bound with the weighted-integration pass gives the following concise consequences; the $O(2^N M)$ factor is constant per emitted box when $N$ is fixed.

\begin{table}[h]
\centering
\caption{Representative fixed-dimension consequences of the perspective reduction.}\label{tab:dimension-consequences}
\small
\begin{tabular}{p{1.8cm}p{7.3cm}p{4.0cm}}
\toprule
Objectives & Box-decomposition back end & Resulting order for exact integral $R_2$ \\
\midrule
$N=2$ & line sweep / biobjective exact computation & $O(n\log n)$ \\
$N=3$ & three-dimensional dimension sweep with $M=O(n)$ & $O(n\log n)$ \\
$N=4$ & HV4D of Guerreiro, Fonseca, and Emmerich~\cite{Guerreiro12}, or HBDA~\cite{Lacour17} & $O(n^2)$ \\
$N=5,6$ & nonincremental HBDA~\cite{Lacour17} & $O(n^3)$ \\
fixed $N$ & nonincremental HBDA~\cite{Lacour17} & $O\!\left(n^{\lfloor (N-1)/2\rfloor+1}\right)$ \\
\bottomrule
\end{tabular}

\end{table}

The preceding bounds are fixed-dimension upper bounds obtained by constructing a disjoint box decomposition and then evaluating weighted box integrals.  We next record lower bounds for exact value computation itself.  For these statements it is convenient to write
\[
  \Rtwoavg^{(N)}(P)=(N-1)!\,\Rtwo(P)
\]
for the simplex-normalized average over the simplex.  Exact computation of \(\Rtwo\) and exact computation of \(\Rtwoavg^{(N)}\) are equivalent up to a known positive dimension-dependent factor.

\begin{lemma}[Reciprocal-diagonal optimum]\label{lem:reciprocal-optimum}
For \(n\geq1\), among all sets
\[
  Q(S)=\left\{\left(\frac1s,\frac1{1-s}\right):s\in S\right\},
  \qquad S\subset(0,1),\quad |S|\leq n,
\]
the two-dimensional integral value is uniquely minimized by
\[
  S^*=\left\{\frac1{n+1},\frac2{n+1},\ldots,\frac n{n+1}\right\},
\]
and the minimum value is \(1+1/(2n)\).
\end{lemma}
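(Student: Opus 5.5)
The plan is to compute the integral value of \(Q(S)\) directly on the scalarization side, as in the warm-up of Section~\ref{sec:warmup2d}. I read the "two-dimensional integral value" of \(Q(S)\) as \(\Rtwo(Q(S))\), with each element \((1/s,1/(1-s))\) treated as a loss vector. The key structural fact is that every point of \(Q(S)\) has its kink at the same height. For \(p=(1/s,1/(1-s))\) we have \(g_p(w)=\max\{w/s,(1-w)/(1-s)\}\), and the two pieces are equal exactly at \(w=s\), where \(g_p(s)=1\). So all kinks lie on the line \(t=1\). In reciprocal coordinates this says \(b(p)=(s,1-s)\) lies on the antidiagonal \(x_1+x_2=1\). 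One could run the same argument through Theorem~\ref{thm:nd-absolute}, but the \(w\)-side is more transparent.

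\textbf{Step 1: identify the envelope.} Order \(S=\{s_1<\cdots<s_m\}\) with \(m\le n\), and set \(s_0=0\), \(s_{m+1}=1\). On \([s_i,s_{i+1}]\), only the increasing branch \(w/s_i\) of \(p(s_i)\) and the decreasing branch \((1-w)/(1-s_{i+1})\) of \(p(s_{i+1})\) matter. Any point with \(s_j\le s_i\) has increasing branch \(w/s_j\ge w/s_i\), and symmetrically on the other side. On the end intervals only a single branch is present. Hence
\[
  \tau(w)=\min\Bigl\{\frac{w}{s_i},\frac{1-w}{1-s_{i+1}}\Bigr\}
  \quad\text{on }[s_i,s_{i+1}],
\]
with the obvious single-branch version on \([0,s_1]\) and \([s_m,1]\). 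Since \(\tau\ge 1\) everywhere and \(\tau=1\) at every \(s_i\), I will write \(\Rtwo=1+\int_0^1(\tau-1)\dd w\). This splits the excess into tents, one over each gap \(d_i=s_{i+1}-s_i\).

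\textbf{Step 2: compute each tent.} On an interior gap, \(\tau-1\) rises from \(0\) with slope \(1/s_i\) and falls back to \(0\) with slope \(1/(1-s_{i+1})\). If \(h\) is the peak height, then \(h s_i+h(1-s_{i+1})=d_i\), so \(h=d_i/(1-d_i)\) and the tent area is \(d_i^2/(2(1-d_i))\). The end gaps are half-tents: for example \((s_1-w)/(1-s_1)\) on \([0,s_1]\) has area \(s_1^2/(2(1-s_1))\), which is the same formula \(f(d)=d^2/(2(1-d))\) with \(d=s_1\). Therefore
\[
  \Rtwo(Q(S))=1+\sum_{j=0}^{m} f(d_j),\qquad \sum_j d_j=1,\quad d_j>0 .
\]

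\textbf{Step 3: optimize.} The function \(f\) is strictly convex on \((0,1)\), since \(f(d)=\tfrac12\bigl(-d-1+\tfrac{1}{1-d}\bigr)\). By Jensen's inequality, for fixed \(m\) the sum is uniquely minimized by equal gaps \(d_j=1/(m+1)\). The minimum value is \((m+1)\cdot\frac{1/(m+1)^2}{2m/(m+1)}=\frac{1}{2m}\), which is strictly decreasing in \(m\). So over \(|S|\le n\) the unique minimizer is \(m=n\) with equal gaps, namely \(S^*\), and the minimum is \(1+1/(2n)\).

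The main point needing care is the envelope identification in Step~1, i.e.\ that no non-adjacent point ever attains the minimum. This follows from the monotonicity of the slopes \(1/s_j\) and \(1/(1-s_j)\) in \(j\). Uniqueness then comes from strict convexity together with the strict decrease in \(m\).
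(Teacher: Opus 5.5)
Your proof is correct, but it takes a genuinely different route from the paper's.

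The paper cuts $[0,1]$ at the crossing points $\alpha(s_i,s_{i+1})=s_i/(1-s_{i+1}+s_i)$ into the $m$ envelope cells of lengths $\ell_i$. It then relaxes the problem by letting each cell pick its own best reciprocal-diagonal point. This one-interval problem is solved by differentiating $F_{a,b}$, giving minimum $\ell+\ell^2/2$ at $s^*=b/(1+\ell)$. Cauchy's inequality then gives $1+\tfrac12\sum_i\ell_i^2\geq 1+1/(2m)$. Finally, the paper recovers $S^*$ by tracing equality through all three inequalities, including a check that the cells must be $[(i-1)/n,i/n]$.

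You instead cut at the kinks, using the fact that all of them lie on $t=1$. This gives the exact identity $R_2(Q(S))=1+\sum_{j=0}^{m} d_j^2/(2(1-d_j))$ in the gaps of $0<s_1<\cdots<s_m<1$. One strict Jensen step then yields both the minimal excess $1/(2m)$ and uniqueness, with no local calculus and no equality bookkeeping. Your route has two further advantages:
\begin{itemize}
\item Your equality condition $d_j=1/(n+1)$ is literally the yes-condition of the uniform-gap problem used in Theorem~\ref{thm:value-lower-bound2}.
\item Your formula gives the exact value of every configuration, not only a lower bound.
\end{itemize}

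The paper's approach has its own merit: optimizing each envelope cell independently and then applying convexity in the cell lengths is a generic template for optimal-placement lemmas. It does not require the envelope's excess to be computed in closed form. Your argument, by contrast, relies on the special common-height structure of the reciprocal-diagonal kinks.
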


\begin{proof}[Proof sketch]
For an ordered set of parameters, the lower envelope of the Tchebycheff shadows partitions the weight interval into cells of lengths \(\ell_i\).  The best reciprocal-diagonal point assigned to one interval of length \(\ell\) contributes \(\ell+\ell^2/2\).  Hence the total value is at least \(1+\frac12\sum_i\ell_i^2\), which is minimized uniquely when there are \(n\) cells of equal length.  The one-interval minimizer then gives \(s_i=i/(n+1)\).  The full calculation is given in Appendix~\ref{app:value-lower-bound-proofs}.
\end{proof}

\begin{theorem}[Two-dimensional exact value lower bound]\label{thm:value-lower-bound2}
In the algebraic decision-tree model, exact computation of the continuous integral \(R_2\) indicator in two objectives requires \(\Omega(n\log n)\) time in the worst case.  The lower bound holds even for inputs on the reciprocal-diagonal curve.
\end{theorem}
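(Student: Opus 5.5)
We reduce from \textsc{Uniform Gap}: given reals \(s_1,\ldots,s_n\in(0,1)\) (unsorted), decide whether, after sorting, they are exactly \(\{1/(n+1),\ldots,n/(n+1)\}\). Equivalently, decide whether consecutive gaps including the endpoints \(0\) and \(1\) all equal \(1/(n+1)\). This problem has an \(\Omega(n\log n)\) lower bound in the algebraic decision-tree model by the Ben-Or argument, as used for hypervolume by Beume et al. It also remains hard when the inputs are restricted to \((0,1)\).

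The reduction is as follows. Map each \(s_j\) to the point \(q_j=(1/s_j,1/(1-s_j))\) on the reciprocal-diagonal curve. This takes \(O(n)\) arithmetic operations, each a constant-degree rational operation, so it is admissible in the algebraic decision-tree model. Then compute \(\Rtwo(Q(S))\) exactly with the hypothetical algorithm, and compare the result with \(1+1/(2n)\). Two cases need care.

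\begin{itemize}
\item If some \(s_j\) coincide, the set \(S\) has fewer than \(n\) elements. Lemma~\ref{lem:reciprocal-optimum} still applies, since it allows \(|S|\le n\), and uniqueness forces a value strictly greater than \(1+1/(2n)\). This correctly reports ``no''.
\item The points \(q_j\) are mutually non-dominated, because \(1/s\) decreases while \(1/(1-s)\) increases in \(s\). So the input satisfies the standing assumption of Section~\ref{sec:setting}.
\end{itemize}

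By Lemma~\ref{lem:reciprocal-optimum}, \(\Rtwo(Q(S))=1+1/(2n)\) holds iff \(S=S^*\), which is exactly the \textsc{Uniform Gap} ``yes'' condition. Hence an \(o(n\log n)\) algorithm for exact two-objective integral \(R_2\) evaluation would give an \(o(n\log n)\) decision procedure for \textsc{Uniform Gap}, which is a contradiction. Since all constructed inputs lie on the reciprocal-diagonal curve, the bound holds for that restricted class as well.

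The main obstacle is rigor about the computational model. The final equality test must be a single algebraic comparison. The problem being reduced from must keep its \(\Omega(n\log n)\) bound on the exact input class produced here, which requires points in \((0,1)\), possibly with repetitions. I would handle this by citing the standard \textsc{Uniform Gap} bound, which concerns inputs in a bounded interval and counts connected components of the ``yes'' set (\(n!\) permutations of \(S^*\)). I would then observe that the transformation \(s\mapsto q(s)\) is a composition of rational maps defined on \((0,1)\), so the algebraic decision tree for \(\Rtwo\) pulls back to an algebraic decision tree of the same depth up to \(O(n)\).

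A secondary point is that Lemma~\ref{lem:reciprocal-optimum} is only sketched here. I would rely on its full proof in the appendix, in particular on the uniqueness claim, which is essential for the ``no'' direction.
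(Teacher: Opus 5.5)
Your proposal is correct and matches the paper's proof: it uses the same reduction from normalized uniform gap via $s\mapsto(1/s,1/(1-s))$, a single comparison of the exact value with $1+1/(2n)$, and the uniqueness part of Lemma~\ref{lem:reciprocal-optimum} for the ``no'' direction, with duplicates treated exactly as in the appendix. Your extra remarks are sound refinements that the paper leaves implicit: pulling the decision tree back through the rational map, restricting inputs to $(0,1)$, and noting that the constructed points are non-dominated.
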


\begin{proof}[Proof sketch]
Reduce the normalized uniform-gap problem to exact \(R_2\) value computation by mapping each input scalar \(s\in(0,1)\) to \((1/s,1/(1-s))\).  By Lemma~\ref{lem:reciprocal-optimum}, the resulting value equals \(1+1/(2n)\) if and only if the input has uniform gaps.  Uniform gap requires \(\Omega(n\log n)\) decisions in the algebraic decision-tree model \cite{beume2009complexity,benor1983lower,preparata1985computational}.  Details are in Appendix~\ref{app:value-lower-bound-proofs}.
\end{proof}

\begin{proposition}[Fixed-dimensional value lower bound by zero padding]\label{prop:fixed-N-padding}
For every fixed \(N\geq2\), exact computation of the normalized continuous integral \(R_2\) value \(\Rtwoavg^{(N)}\) in \(N\) objectives requires \(\Omega(n\log n)\) time in the algebraic decision-tree model, provided nonnegative loss vectors are admitted.
\end{proposition}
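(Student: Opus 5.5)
The plan is to reduce the two-dimensional instance of Theorem~\ref{thm:value-lower-bound2} to the $N$-objective problem by zero padding. Given a two-objective input $P=\{p^1,\ldots,p^n\}\subset\RR^2_{>0}$, we form
\[
  \tilde P=\{(p_1,p_2,0,\ldots,0):p\in P\}\subset\RR^N_{\ge0}.
\]
This map uses $O(Nn)=O(n)$ operations for fixed $N$, which is why the statement explicitly admits nonnegative loss vectors. We then need to show that $\Rtwoavg^{(N)}(\tilde P)$ is a fixed, explicitly known affine function of $\Rtwo(P)$. The coefficients must depend only on $N$, and the slope must be nonzero. Once that holds, an algebraic decision tree computing $\Rtwoavg^{(N)}$ yields one computing the two-dimensional value with $O(n)$ extra operations. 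The $\Omega(n\log n)$ bound then transfers, including on the reciprocal-diagonal family, where the test against $1+1/(2n)$ becomes a test against the corresponding affine image.

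The core computation proceeds as follows. For $w\in\DeltaN$, the padded shadow is $g_{\tilde p}(w)=\max(w_1p_1,w_2p_2)$, because the padded coordinates contribute $0$. Hence $\tau_{\tilde P}(w)=\tau_P^{(2)}(w_1,w_2)$ is the two-dimensional envelope applied to the unnormalized pair $(w_1,w_2)$. Write $\sigma=w_1+w_2$ and $(w_1,w_2)=\sigma(v,1-v)$. Positive homogeneity of degree one of $\max$ and $\min$ gives $\tau_{\tilde P}(w)=\sigma\,\tau^{(2)}_P(v)$. We then parametrize $\DeltaN$ by $(\sigma,v,w_3,\ldots,w_N)$, where $(w_3,\ldots,w_N)$ ranges over a scaled $(N-3)$-simplex of total mass $1-\sigma$. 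The integral factorizes as
\[
  \Rtwo(\tilde P)=\Bigl(\int_0^1 \sigma\cdot\sigma\cdot\frac{(1-\sigma)^{N-3}}{(N-3)!}\dd\sigma\Bigr)\int_0^1\tau^{(2)}_P(v)\dd v = c_N\,\Rtwo(P).
\]
Here one factor $\sigma$ comes from the homogeneity and the other from the Jacobian of $(w_1,w_2)=\sigma(v,1-v)$. The $\sigma$-integral is a Beta integral, so $c_N>0$ is an explicit rational number depending only on $N$. The factorization is therefore purely multiplicative, even simpler than affine. For $N=2$ the statement is Theorem~\ref{thm:value-lower-bound2} itself, and for $N=3$ the simplex factor is trivial.

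The main obstacle is care with the measure normalization. The constant $c_N$ must be computed against the correct $(N-1)$-dimensional measure on $\DeltaN$, meaning the same convention used in the definition of $\Rtwo$ via coordinates $(w_1,\ldots,w_{N-1})$. It must also be independent of $P$. The first attempt is to do the change of variables in the coordinates $(w_1,\ldots,w_{N-1})$ directly and check the Jacobian $\sigma$. A second, lesser point is that zero coordinates break the standing strict-positivity assumption, so the perspective machinery of Lemma~\ref{lem:nd-pointwise} is not used. The argument works directly with the defining integral, where $g_{\tilde p}$ remains well defined and continuous, and this is exactly the reason for the proviso in the proposition.
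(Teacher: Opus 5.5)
Your proposal is correct and follows essentially the same route as the paper. You use zero padding, the parametrization by $\sigma=w_1+w_2$, the split $v$ and the scaled remaining simplex (Jacobian $\sigma(1-\sigma)^{N-3}$), one extra factor $\sigma$ from positive homogeneity, and a Beta integral; this gives $c_N=B(3,N-2)/(N-3)!=2/N!$, i.e.\ $\Rtwoavg^{(N)}(\tilde P)=\frac{2}{N}\Rtwoavg^{(2)}(P)$, which is exactly the paper's identity and then transfers the uniform-gap bound on the closed nonnegative domain.
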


\begin{proof}[Proof sketch]
Embed a two-dimensional instance \(P\) as
\[
  \widehat P=\{(p_1,p_2,0,\ldots,0):(p_1,p_2)\in P\}\subset\RR^N_{\geq0}.
\]
For \(N=2\) there is nothing to prove.  For \(N>2\), set \(u=w_1+w_2\), \(\lambda=w_1/u\), and write the remaining coordinates as \((w_3,\ldots,w_N)=(1-u)y\), where \(y\in\Delta_{N-3}\).  The simplex measure transforms as
\[
  \dd w=u(1-u)^{N-3}\,\dd\lambda\,\dd u\,\dd y .
\]
Positive homogeneity contributes another factor \(u\).  Hence the \(u\)-part of the normalized integral is the elementary beta integral
\[
  (N-1)!\operatorname{vol}(\Delta_{N-3})\int_0^1 u^2(1-u)^{N-3}\,\dd u=\frac2N.
\]
Therefore
\[
  \Rtwoavg^{(N)}(\widehat P)=\frac2N\,\Rtwoavg^{(2)}(P).
\]
Thus a faster exact algorithm in fixed dimension \(N\) would imply a faster exact algorithm in two objectives.  See Appendix~\ref{app:value-lower-bound-proofs} for the full integral calculation and the strictly positive-domain caveat.
\end{proof}

\begin{theorem}[Variable-dimensional \texorpdfstring{$\#P$}{sharp-P}-hardness]\label{thm:variable-dim-sharp}
Exact computation of the normalized continuous integral \(R_2\) indicator is \(\#P\)-hard when the number of objectives is part of the input.  The hardness holds under polynomial-time Turing reductions, even for rational strictly positive point coordinates.
\end{theorem}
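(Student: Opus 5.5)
We reduce from the Bringmann--Friedrich result that computing the volume of a union of anchored axis-aligned boxes, $\operatorname{vol}(U(Q))$ for rational $Q\subset\RR^N_{>0}$, is $\#P$-hard when $N$ is part of the input. The route is to turn that volume into finitely many integral-$R_2$ values that are computed by oracle calls. The bridge is Proposition~\ref{prop:bidirectional}. It writes the ordinary anchored hypervolume $\HV_u(Q)$ as $\frac1N\int_{\DeltaN}(\tau_P(w)^{-N}-g_a(w)^{-N})\dd w$, where $P$ consists of the coordinatewise reciprocals of $Q$. This is an integral of $\tau_P^{-N}$, whereas the $R_2$ oracle only returns $\int\tau_P$. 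So we need a ``perspective-weighted'' version of the hardness reduction rather than a direct transfer.

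\emph{Step 1 (hard weighted problem).} Here we work in reciprocal space and use Theorem~\ref{thm:nd-absolute} and Corollary~\ref{cor:nd-improvement}. These say that $I_a(P)=\int_{U(Q)\setminus U(u)}(x_1+\cdots+x_N)^{-(N+1)}\dd x$ with $Q=b(P)$. We would redo the Bringmann--Friedrich reduction, which counts satisfying assignments of a monotone 2-CNF or counts independent sets through cube unions in $[0,1]^N$, but with the density $\rho(x)=s^{-(N+1)}$. The plan is to place the whole instance inside a tiny box $c+\varepsilon[0,1]^N$ far from the origin, with $c$ a rational point such as $c=(1,\ldots,1)$, and to shift the lower anchor $u$ to $c$. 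The anchored boxes of the original instance then become boxes anchored at $c$. These are realizable as anchored-at-$0$ boxes minus $U(u)$ only if the boxes are $[0,c+\varepsilon q]$, and this works because $U(\{c+\varepsilon q\})\setminus U(c)$ restricted to $[c,\infty)$ reproduces the original union. The extra slabs outside $[c,\infty)$ have to be accounted for. They are unions of lower-dimensional-type slabs whose weighted measure is again an instance of the same problem in a face. Instead of fighting them, we subtract them by inclusion--exclusion over a polynomial number of oracle calls with modified point sets. Alternatively, we choose the lifted points so that the slabs cancel between two oracle queries $P$ and $P\cup\{\text{extra points}\}$.

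\emph{Step 2 (removing the weight).} On $c+\varepsilon[0,1]^N$ the density equals $(N+\varepsilon\sigma)^{-(N+1)}$ with $\sigma=\sum_i y_i$. This is a smooth function of the single linear form $\sigma$. To recover the unweighted count we plan a polynomial-interpolation argument. We vary a scaling parameter, either $\varepsilon$ or the translation $c=\lambda\mathbf 1$, over $\mathrm{poly}(N,n)$ rational values. Each query gives a linear functional of the distribution of $\sigma$ on the union. The unknowns are the counts of the combinatorial cells (Bringmann--Friedrich give a union whose volume is a polynomial in the number of satisfying assignments with known structure). We would then solve the resulting Vandermonde-type or moment system for the target count. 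This is how a Turing reduction arises, rather than a many-one reduction.

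\emph{Step 3 (rationality and size).} We check that all query points $p=1/q$ are rational and strictly positive, and that their bit size is polynomial. We also check that the weighted box integrals $\mu_N$ of each cell are rational, which holds because the corner formula has terms $1/\text{sum of rationals}$. The normalization $(N-1)!$ in $\Rtwoavg^{(N)}$ only rescales by a known integer.

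The main obstacle is Step 2. A non-constant density can destroy the clean counting structure. What must be shown is that the linear system relating oracle outputs to the hidden count is nonsingular, with entries computable in polynomial time. Our first attempt is to make all cells of the Bringmann--Friedrich construction congruent up to translation, so that each weighted cell measure depends only on the cell's offset. We would then choose the instance so that the count appears as the coefficient of a distinguished monomial in $\lambda$ after expanding $(N\lambda+\varepsilon\sigma)^{-1}$ in the corner formula.
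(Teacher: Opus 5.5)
Your proposal has a genuine gap. You correctly see that the density $(x_1+\cdots+x_N)^{-(N+1)}$ destroys the unit-cell counting of Bringmann--Friedrich, but you never supply the idea that overcomes this. Step~2, the nonsingularity of an unspecified interpolation system, is left open by your own account. Step~1 also fails as described. With boxes $[0,c+\varepsilon q]$ and lower anchor $c$, the region $U\setminus U(c)$ splits into one piece for each nonempty coordinate set $S$ with $x_i>c_i$. Each such piece is $\prod_{i\notin S}[0,c_i]$ times the projection of the shifted union onto the coordinates in $S$. Isolating the piece $S=\{1,\ldots,N\}$ by inclusion--exclusion therefore needs $2^N-2$ correction terms, not polynomially many oracle calls. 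The alternative ``cancellation between two queries'' is never specified.

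The missing observation is that these slabs need not be removed at all. Take the monotone-CNF boxes directly in reciprocal space, with side $1$ on clause coordinates and $1+\tau$ elsewhere, so that everything lies in $[0,1+\tau]^d$; this is your construction with $c=\mathbf 1$ and $\varepsilon=\tau$.

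\begin{itemize}
\item A Boolean cell $C_z=\prod_i I_{z_i}$, with $I_0=[0,1]$ and $I_1=[1,1+\tau]$, is uncovered iff $z\models F$, whatever its Hamming weight. Your slabs are simply the lower-weight cells.
\item The weighted complement outside the cube does not depend on $F$. It is the value of the single point $((1+\tau)^{-1},\ldots,(1+\tau)^{-1})$, namely $H_d/(d(1+\tau))$ after normalization.
\item The density is invariant under coordinate permutations, so every cell of Hamming weight $h$ has the same weighted measure $\gamma_h(\tau)$. The oracle value minus the constant is therefore $\sum_{h=1}^d N_h\gamma_h(\tau)$, with only $d$ unknowns.
\item $\gamma_h(\tau)$ is $\tau^h$ up to a factor between $(d-1)!/(2d)^{d+1}$ and $(d-1)!$, and $N_h\le 2^d$. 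Choosing the rational $\tau=1/(8\cdot 2^d(2d)^{d+1})$ separates the scales, so you can peel off $N_1,N_2,\ldots$ successively from a single exact oracle value.
\end{itemize}

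No Vandermonde system is needed. A multi-query interpolation in $\tau$ could also work once you have reduced to these $d$ unknowns, but you would then still have to prove that $(\gamma_h(\tau_j))_{h,j}$ is nonsingular. For polynomial time, note also that the $2^d$-corner formula for $\gamma_h$ collapses to $O(d^2)$ distinct terms. A corner sum depends only on how many coordinates in each of the two blocks take their upper value.
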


\begin{proof}[Proof sketch]
We adapt the anchored-box construction of Bringmann and Friedrich~\cite{bringmann2010volume}.  A monotone CNF formula on \(d\) variables is represented by anchored boxes in \([0,1+\tau]^d\), so that uncovered Boolean cells are exactly satisfying assignments.  Reciprocal box corners define a \(d\)-objective integral \(R_2\) instance.  Starting from the scalarization integral over \(\Delta_{d-1}\), we apply the perspective change of variables
\[
  x_i=\frac{w_i}{t},\qquad
  w_i=\frac{x_i}{x_1+\cdots+x_d},\qquad
  t=\frac{1}{x_1+\cdots+x_d}.
\]
The Jacobian gives
\[
  \dd w\,\dd t=(x_1+\cdots+x_d)^{-(d+1)}\,\dd x,
\]
and the subgraph of the lower Tchebycheff envelope becomes the complement of the anchored-box union.  Thus the \(R_2\) value is an ordinary integral with a known perspective weight, not a probabilistic expectation.  Since Boolean cells have Hamming-weight-dependent weighted integrals, a small rational scale parameter \(\tau\) separates the contributions by Hamming weight; exact postprocessing recovers all Hamming-weight counts and hence the number of satisfying assignments.  The full reduction is given in Appendix~\ref{app:hashp-proof}.
\end{proof}

\begin{remark}[Scope of the lower bounds]
Theorem~\ref{thm:value-lower-bound2}, Proposition~\ref{prop:fixed-N-padding}, and Theorem~\ref{thm:variable-dim-sharp} are lower bounds for exact value computation, not only for algorithms that first construct a prescribed decomposition.  They do not rule out faster approximation algorithms, floating-point algorithms under numerical tolerances, or special input models where additional order information is supplied.  Proposition~\ref{prop:fixed-N-padding} uses zero padding and is therefore stated on the closed nonnegative loss domain; the two-dimensional and variable-dimensional constructions use strictly positive coordinates.
\end{remark}

\section{Relation to existing work}
\label{sec:relatedwork}
Shang et al.~\cite{shang2018r2} propose an approximate mapping between a finite-weight $R_2$ indicator and hypervolume. In our mapping, however, the absolute-integral $R_2$ above is not a standard hypervolume: the reciprocal-space density $(x_1+\cdots+x_N)^{-(N+1)}$ is crucial. What we inherit algorithmically from hypervolume is the geometry and box decomposition, while the value-computation lower bounds use two additional ingredients: the uniform-gap lower-bound framework used by Beume et al.~\cite{beume2009complexity} and the algebraic decision-tree theory of Ben-Or~\cite{benor1983lower} and Preparata--Shamos~\cite{preparata1985computational}; in variable dimension, the anchored-box construction of Bringmann and Friedrich~\cite{bringmann2010volume} is adapted to the perspective-weighted density. Conversely, Proposition~\ref{prop:bidirectional} shows that ordinary hypervolume is recovered on the scalarization side by using the inverse-perspective density $t^{-(N+1)}$ over the same Tchebycheff interval. 

In their original work on the biobjective $R_2$ indicator, Sch\"apermeier and Kerschke~\cite{schaepermeier2024r2} emphasise the continuous, integrated variant and Pareto compliance, and propose an efficient $O(n \log n)$ integration algorithm for the biobjective case. Jaszkiewicz and Zielniewicz~\cite{jaszkiewicz2025exact} extend these results (including Pareto compliance) to $N \geq 2$ objectives and introduce Quick R2 (QR2), an exact method that adapts quick hypervolume ideas to $R_2$ computation; QR2 is fast in practice but exponential in $n$ in the worst case. This note presents a complementary approach: use the perspective map to convert the Tchebycheff-shadow integral into weighted integration over a reciprocal anchored-box region. This yields polynomial-time algorithms for fixed dimensions and enables the transfer of many algorithmic results based on unions of anchored boxes from the hypervolume literature. The practical performance of these new algorithms, compared to methods such as QR2, remains to be evaluated.

The 3-D perspective transformation used here was first exploited for the special case of the three-objective integral \(R_2\) indicator in the proof of NP-hardness of the three-objective integral \(R_2\) subset-selection problem~\cite{emmerich2026threeobjectiveintegralr2subset}.  The present note uses the same geometric idea in a different direction and generalizes it to $N\geq 3$ dimensions: instead of a hardness construction for subset selection, it develops an exact computation formula for the integral value by reducing the Tchebycheff-shadow subgraph to weighted integration over a reciprocal box decomposition.The same work proposed a 3-D subdivision algorithm with time complexity $O(n^5)$, which we used as a reference implementation to validate our 3-D implementation.

\section{Conclusion and outlook}\label{sec:conclusion}

This article has shown that the integral $R_2$ indicator can be computed through a perspective transformation of the Tchebycheff-shadow subgraph. Under this transformation, the relevant region becomes the complement, in reciprocal space, of an anchored-box union, and the value $R_2(P)$ is obtained as a weighted measure with density $(x_1+\cdots+x_N)^{-(N+1)}$. A dominated anchor turns the unbounded absolute integral into a single-anchor term minus a bounded weighted hypervolume difference. Thus, the construction yields the absolute integral $R_2$ value itself, and also can be used for computing anchor-based improvements.

In three objectives, this reduction is particularly useful. The fast dimension sweep hypervolume computation that produces a disjoint decomposition of the reciprocal dominated region can serve as a box emitter. Here one can simply replace each box's ordinary volume contribution with its closed-form weighted box integral. Combined with an $O(n\log n)$ three-dimensional box emitter that generates $O(n)$ boxes, this gives an exact $O(n\log n)$-time, $O(n)$-space algorithm for the integral $R_2$ indicator.

The higher-dimensional results have two complementary aspects. Algorithmically, once an $M$-box decomposition is available, the additional weighted-integration overhead is $O(2^N M)$ and is therefore linear in the output size for fixed $N$. The leading term is thus determined by the selected box-decomposition back end. In particular, the available bounds give $O(n^2)$ for $N=4$ using either a four-dimensional hypervolume sweep or nonincremental HBDA \cite{Lacour17}, $O(n^3)$ for $N=5,6$ with nonincremental HBDA, and, more generally, $O(n^{\lfloor (N-1)/2\rfloor+1})$ for fixed $N$ with that back end. On the lower-bound side, exact value computation already has an $\Omega(n\log n)$ lower bound in the algebraic decision-tree model in two objectives, and this lower bound lifts to every fixed $N\geq2$ on the nonnegative loss domain. When $N$ is part of the input, exact normalized integral $R_2$ computation is $\#P$-hard by a perspective-weighted adaptation of the Bringmann--Friedrich anchored-box reduction. These results concern exact value computation; approximation and floating-point computation remain separate questions.

Beyond worst-case complexity, empirical comparisons would also be desirable as a sequel to this work. In particular, the algorithms of Schaepermeier and Kerschke~\cite{schaepermeier2024r2,schaepermeier2025ecj}, the Quick $R_2$ approach of Jaszkiewicz and Zielniewicz~\cite{jaszkiewicz2025exact}, and the new perspective-based box-decomposition algorithms should be compared on representative data sets to assess their practical runtime behavior, memory use, and robustness across dimensions and point-set structures.

\newpage
Our approach also gives also a direct way to compute individual point contributions. Existing contribution algorithms decompose, explicitly or implicitly, the exclusive dominated region of a point into disjoint axis-aligned boxes. In particular, the dimension-sweep low-dimensional contribution algorithm of Emmerich and Fonseca~\cite{emmerich2011contributions} and the up-to-four-dimensional update algorithm of Guerreiro and Fonseca~\cite{guerreiro2017contributions} compute such box decompositions. Consequently, the exclusive boxes emitted by these methods can be reused for individual contributions to compute integral $R_2$ indicator contributions with the perspective mapping approach. 

More generally, any hypervolume-indicator algorithm whose geometric core is a box decomposition can now be transferred to the integral $R_2$ setting by replacing ordinary box volumes with the corresponding weighted box integrals through the perspective transformation. Beyond individual point contributions, this suggests analogous transfers to gradient- and Newton-type methods~\cite{Emmerich2014,Wang2023}: hypervolume-gradient computations are based on lower-dimensional box decompositions of exposed facets, and Newton-type methods build on the same differential structure. Likewise, expected hypervolume improvement methods based on box decompositions~\cite{yang2019expected} can be transported box by box, because the outer expectation is a linear integration operator and the partition cells can be treated separately. A careful numerical implementation, including stable formulas for lower-dimensional and expected-value terms, suitable data structures, and a detailed accounting of constants and degeneracies, remains beyond the scope of this note.

\appendix

\section{Appendix: Proof of value-computation lower bound}\label{app:value-lower-bound-proofs}

This appendix gives the detailed proofs behind Lemma~\ref{lem:reciprocal-optimum}, Theorem~\ref{thm:value-lower-bound2}, and Proposition~\ref{prop:fixed-N-padding}.  Throughout this appendix, the two-objective value is
\[
  \Rtwo(Q)=\int_0^1\min_{q\in Q}\max\{\lambda q_1,(1-\lambda)q_2\}\,d\lambda,
\]
and \(\Rtwoavg^{(N)}\) denotes the simplex-normalized average over the simplex.  Multiplication by the known factor \((N-1)!\) converts between \(\Rtwoavg^{(N)}\) and the unnormalised convention used in the main text.

\subsection{Proof of the reciprocal-diagonal optimum}
Let \(S=\{s_1,\ldots,s_m\}\), where \(m\leq n\), and order the distinct parameters as
\[
  0<s_1<\cdots<s_m<1.
\]
Write
\[
  g_s(\lambda)=\max\left\{\frac{\lambda}{s},\frac{1-\lambda}{1-s}\right\}.
\]
For \(s<t\), the graphs of \(g_s\) and \(g_t\) cross exactly once, at
\[
  \alpha(s,t)=\frac{s}{1-t+s},
\]
and this crossing lies strictly between \(s\) and \(t\).  Moreover, \(g_s(\lambda)<g_t(\lambda)\) for \(\lambda<\alpha(s,t)\) and \(g_t(\lambda)<g_s(\lambda)\) for \(\lambda>\alpha(s,t)\).  Hence the lower envelope of the ordered family \(g_{s_1},\ldots,g_{s_m}\) consists of ordered intervals.  Removing zero-length intervals, we obtain intervals
\[
  I_i=[a_i,b_i],\qquad i=1,\ldots,m,
\]
whose interiors are disjoint, whose union is \([0,1]\) up to endpoints, and whose lengths \(\ell_i=b_i-a_i\) satisfy \(\sum_i\ell_i=1\).

We now solve the local one-interval problem.  Fix \(0\leq a<b\leq1\), write \(\ell=b-a\), and consider
\[
  \min_{0<s<1}\int_a^b g_s(\lambda)\,d\lambda.
\]
The minimizer must lie in \([a,b]\): if \(s<a\), increasing \(s\) decreases \(\lambda/s\) throughout \([a,b]\), and if \(s>b\), decreasing \(s\) decreases \((1-\lambda)/(1-s)\) throughout \([a,b]\).  For \(a\leq s\leq b\), set
\[
  F_{a,b}(s)=
  \int_a^s \frac{1-\lambda}{1-s}\,d\lambda
  +\int_s^b \frac{\lambda}{s}\,d\lambda .
\]
A direct differentiation gives
\[
  F'_{a,b}(s)
  =\frac{\bigl(b-s(1+b-a)\bigr)\bigl(s(a+b-1)-b\bigr)}{2s^2(s-1)^2}.
\]
The second factor has no zero in \([a,b]\), whereas the first factor has the unique zero
\[
  s^*=\frac{b}{1+b-a}=\frac{b}{1+\ell}.
\]
The sign of \(F'_{a,b}\) changes from negative to positive at this point, so \(s^*\) is the unique minimizer.  Substitution gives
\[
  \min_{0<s<1}\int_a^b g_s(\lambda)\,d\lambda
  =F_{a,b}(s^*)
  =\ell+\frac{\ell^2}{2}.
\]
Therefore every set with at most \(n\) distinct reciprocal-diagonal points satisfies
\[
  \Rtwo(Q(S))
  \geq\sum_{i=1}^m\left(\ell_i+\frac{\ell_i^2}{2}\right)
  =1+\frac12\sum_{i=1}^m\ell_i^2.
\]
By Cauchy's inequality,
\[
  \sum_{i=1}^m\ell_i^2\geq\frac1m\geq\frac1n.
\]
Thus \(\Rtwo(Q(S))\geq1+1/(2n)\).
Equality is possible only if \(m=n\), all interval lengths are equal, \(\ell_i=1/n\), and each point \(s_i\) is the unique one-interval minimizer for its envelope interval.  Since the intervals are ordered and cover \([0,1]\), they must be
\[
  I_i=\left[\frac{i-1}{n},\frac{i}{n}\right],
  \qquad i=1,\ldots,n.
\]
For such an interval, the unique one-interval minimizer is
\[
  s_i=\frac{b_i}{1+b_i-a_i}
      =\frac{i/n}{1+1/n}
      =\frac{i}{n+1}.
\]
This proves the stated optimal value and uniqueness.

\subsection{Uniform gap and the two-dimensional lower bound}
The normalized uniform-gap problem takes as input an unordered \(n\)-tuple \((s_1,\ldots,s_n)\in(0,1)^n\).  The answer is yes if the numbers are distinct and, after sorting them as \(0<s_{(1)}<\cdots<s_{(n)}<1\), the augmented sequence with fixed endpoints \(s_{(0)}=0\) and \(s_{(n+1)}=1\) has equal gaps,
\[
  s_{(i)}-s_{(i-1)}=\frac1{n+1},\qquad i=1,\ldots,n+1.
\]
Inputs with repeated values are no-instances.  In the algebraic decision-tree model, this problem requires \(\Omega(n\log n)\) decisions in the worst case.  This is the same ingredient used by Beume et al.~\cite{beume2009complexity} for the two-dimensional hypervolume lower bound; the general algebraic computation-tree framework is due to Ben-Or~\cite{benor1983lower}, and the uniform-gap problem is a standard one-dimensional lower-bound problem discussed by Preparata and Shamos~\cite{preparata1985computational}.

Given an unordered uniform-gap input, construct the reciprocal-diagonal point multiset
\[
  Q(S)=\left\{
  \left(\frac1{s_j},\frac1{1-s_j}\right):j=1,\ldots,n
  \right\}.
\]
This uses only rational operations on the input reals and takes linear time.  Duplicate parameter values give duplicate points and do not change the lower envelope.  By Lemma~\ref{lem:reciprocal-optimum},
\[
  \Rtwo(Q(S))\geq1+\frac1{2n},
\]
with equality if and only if the distinct parameter set is \(\{i/(n+1):i=1,\ldots,n\}\).  This is exactly the yes-condition for uniform gap.  Hence any exact algorithm for two-dimensional integral \(R_2\) value computation running in \(o(n\log n)\) time would decide uniform gap in \(o(n\log n)\) time, a contradiction.

\subsection{Fixed-dimensional zero padding}
Let \(P\subset\RR_{\geq0}^2\) be a two-dimensional instance and construct
\[
  \widehat P
  =\{(p_1,p_2,0,\ldots,0):(p_1,p_2)\in P\}
  \subset\RR_{\geq0}^N .
\]
For \(N=2\), this is the original instance.  Assume therefore that \(N>2\).
For \(w\in\Delta_{N-1}\) and \(\widehat p=(p_1,p_2,0,\ldots,0)\),
\[
  \max_{1\leq j\leq N}w_j\widehat p_j
  =\max\{w_1p_1,w_2p_2\},
\]
because all added coordinates are zero and all coordinates are nonnegative.
Set
\[
  u=w_1+w_2,\qquad \lambda=\frac{w_1}{w_1+w_2}
\]
whenever \(u>0\).  Thus \(w_1=u\lambda\) and \(w_2=u(1-\lambda)\).  Write the remaining coordinates as
\[
  (w_3,\ldots,w_N)=(1-u)y,
  \qquad y\in\Delta_{N-3}.
\]
This decomposes the simplex into the total mass \(u\) assigned to the first two coordinates, the split \(\lambda\) of that mass, and the normalized remaining coordinates \(y\).  In these variables the simplex measure is
\[
  \dd w = u(1-u)^{N-3}\,\dd\lambda\,\dd u\,\dd y .
\]
Here the factor \(u\) comes from splitting the mass \(u\) between \(w_1\) and \(w_2\), and the factor \((1-u)^{N-3}\) comes from scaling the \((N-3)\)-dimensional simplex of the remaining coordinates by \(1-u\).

Positive homogeneity of the weighted Tchebycheff scalarizing value gives
\[
  \min_{\widehat p\in\widehat P}\max_j w_j\widehat p_j
  =u\min_{p\in P}\max\{\lambda p_1,(1-\lambda)p_2\}.
\]
Using the simplex-normalized convention,
\[
  \Rtwoavg^{(N)}(Q)=(N-1)!\int_{\Delta_{N-1}}\min_{q\in Q}\max_j w_jq_j\,\dd w,
\]
we obtain
\[
\begin{aligned}
  \Rtwoavg^{(N)}(\widehat P)
  &=(N-1)!\int_0^1\int_0^1\int_{\Delta_{N-3}}
  u\min_{p\in P}\max\{\lambda p_1,(1-\lambda)p_2\} \\
  &\hspace{4.0cm}\cdot u(1-u)^{N-3}\,\dd y\,\dd\lambda\,\dd u .
\end{aligned}
\]
The integration over \(y\) contributes
\[
  \operatorname{vol}(\Delta_{N-3})=\frac{1}{(N-3)!},
\]
and the \(u\)-integral is the elementary beta integral
\[
  \int_0^1 u^2(1-u)^{N-3}\,\dd u
  =B(3,N-2)
  =\frac{2!(N-3)!}{N!}.
\]
Consequently,
\[
\begin{aligned}
  \Rtwoavg^{(N)}(\widehat P)
  &=(N-1)!\frac{1}{(N-3)!}\frac{2!(N-3)!}{N!}
    \int_0^1\min_{p\in P}\max\{\lambda p_1,(1-\lambda)p_2\}\,\dd\lambda \\
  &=\frac2N\Rtwoavg^{(2)}(P).
\end{aligned}
\]
Equivalently, the factor \(2/N\) is the ratio between the value integral
\[
  B(3,N-2)=\int_0^1u^2(1-u)^{N-3}\,\dd u
\]
and the geometric normalization integral
\[
  B(2,N-2)=\int_0^1u(1-u)^{N-3}\,\dd u,
\]
which is the one-dimensional integral arising from the decomposition by the total mass \(u=w_1+w_2\), before the additional homogeneity factor \(u\) is applied.

A faster exact algorithm for \(\Rtwoavg^{(N)}\) in any fixed dimension \(N>2\) would therefore give a faster exact algorithm in two objectives by padding and multiplying by \(N/2\).  The padding proof uses zero coordinates and is therefore stated for the closed nonnegative loss domain.  If an input model insists on strictly positive loss vectors only, a separate perturbation argument would be needed to remove the zero coordinates while preserving an exact value-reduction statement.

\section{Appendix: Proof of Theorem \ref{thm:variable-dim-sharp} - \texorpdfstring{$\#P$}{sharp-P}-hardness}\label{app:hashp-proof}

We reduce from counting satisfying assignments of a monotone CNF formula, denoted \(\#\mathrm{MON}\text{-}\mathrm{CNF}\).  This is the same counting problem used in the high-dimensional anchored-box volume reduction of Bringmann and Friedrich~\cite{bringmann2010volume}.  Let
\[
  F=\bigwedge_{k=1}^m \bigvee_{i\in C_k} y_i
\]
be a monotone CNF formula on \(d\) variables.  Trivial cases, such as a formula with no clauses or with an empty clause, can be handled directly, so assume that all clauses are nonempty and that \(m\geq1\).

Let \(\tau>0\) be a rational parameter to be chosen below.  For each clause \(C_k\), define an anchored box
\[
  B_k(\tau)=\prod_{i=1}^d [0,q_i^{(k)}(\tau)]
\]
where
\[
  q_i^{(k)}(\tau)=
  \begin{cases}
  1, & i\in C_k,\\
  1+\tau, & i\notin C_k.
  \end{cases}
\]
The cube \([0,1+\tau]^d\) is partitioned, up to boundaries of measure zero, into Boolean cells
\[
  C_z(\tau)=\prod_{i=1}^d I_{z_i}(\tau),
  \qquad z\in\{0,1\}^d,
\]
with \(I_0(\tau)=[0,1]\) and \(I_1(\tau)=[1,1+\tau]\).  A cell \(C_z(\tau)\) is contained in \(B_k(\tau)\) if and only if no variable of the clause \(C_k\) is set to one by \(z\), that is, if and only if clause \(C_k\) is false under \(z\).  Hence the cells of \([0,1+\tau]^d\) not covered by \(\bigcup_kB_k(\tau)\) are exactly the cells corresponding to satisfying assignments of \(F\).

Now form the \(d\)-objective integral \(R_2\) instance
\[
  P_F(\tau)=\left\{
  \left(\frac{1}{q_1^{(k)}(\tau)},\ldots,
        \frac{1}{q_d^{(k)}(\tau)}\right):
  k=1,\ldots,m
  \right\}.
\]
All coordinates are rational and strictly positive.  We now derive the weighted complement integral directly from the scalarization integral.  Let
\[
  \tau_{P_F}(w)=\min_{p\in P_F(\tau)}\max_{1\leq i\leq d}w_i p_i
\]
be the lower weighted Tchebycheff envelope.  Then
\[
  \Rtwoavg^{(d)}(P_F(\tau))
  =(d-1)!\int_{\Delta_{d-1}}\tau_{P_F}(w)\,\dd w
  =(d-1)!\int_{\{(w,t):w\in\Delta_{d-1},\,0<t<\tau_{P_F}(w)\}}\dd w\,\dd t .
\]
Apply the perspective change of variables
\[
  x_i=\frac{w_i}{t},\qquad
  w_i=\frac{x_i}{x_1+\cdots+x_d},\qquad
  t=\frac{1}{x_1+\cdots+x_d}.
\]
The differential transforms as
\[
  \dd w\,\dd t=(x_1+\cdots+x_d)^{-(d+1)}\,\dd x .
\]
For the point corresponding to clause \(C_k\), we have \(p_i=1/q_i^{(k)}(\tau)\).  Hence
\[
  t<\max_i w_i p_i
  \quad\Longleftrightarrow\quad
  \exists i:\; \frac{w_i}{t}>q_i^{(k)}(\tau)
  \quad\Longleftrightarrow\quad
  x\notin B_k(\tau).
\]
Since \(0<t<\tau_{P_F}(w)\) requires this inequality for every clause point, the whole subgraph maps, up to boundary sets of measure zero, to
\[
  \RR_{\geq0}^d\setminus U_F(\tau),
  \qquad U_F(\tau)=\bigcup_{k=1}^m B_k(\tau).
\]
Consequently,
\[
  \Rtwoavg^{(d)}(P_F(\tau))
  =(d-1)!\int_{\RR_{\geq0}^d\setminus U_F(\tau)}
  \frac{\dd x}{(x_1+\cdots+x_d)^{d+1}}.
\]
This derivation uses only the change of variables and its Jacobian.  No probability distribution over weights is introduced.  The factor \((d-1)!\) appears because \(\Rtwoavg^{(d)}\) is simplex-normalized.  The part of this complement outside the cube \([0,1+\tau]^d\) is independent of \(F\).  Denote its normalized contribution by \(K_d(\tau)\).  Equivalently, it is the \(R_2\) value of the single symmetric point \(((1+\tau)^{-1},\ldots,(1+\tau)^{-1})\).  Hence
\[
  K_d(\tau)=\frac{(d-1)!}{1+\tau}
  \int_{\Delta_{d-1}}\max_{1\leq i\leq d} w_i\,\dd w.
\]
This simplex integral has a closed form.  Let \(M(w)=\max_i w_i\).  Using the layer-cake identity for ordinary integrals,
\[
  \int_{\Delta_{d-1}} M(w)\,\dd w
  =\int_0^1 \operatorname{vol}_{d-1}
  \{w\in\Delta_{d-1}:M(w)\geq a\}\,\dd a .
\]
The volume of the clipped simplex \(\{w\in\Delta_{d-1}:w_i\geq a\text{ for all }i\in S\}\) is \((1-|S|a)^{d-1}/(d-1)!\) when \(|S|a\leq1\), and zero otherwise.  Inclusion--exclusion therefore gives the finite expression
\[
  (d-1)!\int_{\Delta_{d-1}}M(w)\,\dd w
  =\int_0^1
  \sum_{r=1}^d(-1)^{r+1}{d\choose r}(1-ra)_+^{d-1}\,\dd a
  =\frac1d\sum_{r=1}^d\frac1r
  =\frac{H_d}{d},
\]
where \(H_d=1+1/2+\cdots+1/d\).  Consequently
\[
  K_d(\tau)=\frac{H_d}{d(1+\tau)}.
\]
This is a known correction term depending only on \(d\) and \(\tau\), not on the formula \(F\).

For \(h=1,\ldots,d\), define the normalized weighted measure of a Boolean cell with Hamming weight \(h\) by
\[
  \gamma_h(\tau)
  =(d-1)!\int_{[1,1+\tau]^h\times[0,1]^{d-h}}
  \frac{dx}{(x_1+\cdots+x_d)^{d+1}}.
\]
For rational \(\tau\), these numbers are exactly computable.  Repeated integration gives a signed sum over the vertices of the box,
\[
  \int_{\prod_i[a_i,b_i]}
  \frac{dx}{(x_1+\cdots+x_d)^{d+1}}
  =\frac{1}{d!}\sum_{\epsilon\in\{0,1\}^d}
  \frac{(-1)^{|\epsilon|}}{\sum_i c_i(\epsilon_i)},
\]
where \(c_i(0)=a_i\) and \(c_i(1)=b_i\), whenever all vertex sums are positive.  This applies here because \(h\geq1\).  The all-zero assignment is not satisfying because all clauses are nonempty.  If
\[
  N_h=\#\{z\models F: |z|=h\},
\]
then the exact oracle value satisfies
\[
  T_F(\tau):=\Rtwoavg^{(d)}(P_F(\tau))-K_d(\tau)
  =\sum_{h=1}^d N_h\gamma_h(\tau).
\]
Thus the \(R_2\) value encodes the numbers of satisfying assignments of each Hamming weight, but with nonuniform weights \(\gamma_h(\tau)\).

It remains to show that these coefficients can be recovered exactly.  Choose
\[
  \tau=\frac{1}{8\,2^d(2d)^{d+1}}.
\]
This rational number has polynomially many bits.  For \(0<\tau\leq1\) and \(h\geq1\), the cell defining \(\gamma_h\) has volume \(\tau^h\), and on this cell the sum \(x_1+\cdots+x_d\) lies between \(1\) and \(2d\).  Consequently,
\[
  \frac{(d-1)!}{(2d)^{d+1}}\tau^h
  \leq \gamma_h(\tau)
  \leq (d-1)!\tau^h .
\]
For any \(r<d\), this implies
\[
  \frac{\sum_{h=r+1}^d N_h\gamma_h(\tau)}{\gamma_r(\tau)}
  \leq
  2^d(2d)^{d+1}\sum_{j\geq1}\tau^j
  <\frac12.
\]
Starting from the exact value \(T_F(\tau)\), recover \(N_1,N_2,\ldots,N_d\) successively.  Suppose \(N_1,\ldots,N_{r-1}\) are already known and set
\[
  R_r=T_F(\tau)-\sum_{h=1}^{r-1}N_h\gamma_h(\tau).
\]
Then
\[
  R_r=N_r\gamma_r(\tau)+\sum_{h=r+1}^dN_h\gamma_h(\tau),
\]
and the tail is strictly smaller than \(\gamma_r(\tau)/2\).  Hence \(N_r\) is the unique integer satisfying
\[
  N_r\leq \frac{R_r}{\gamma_r(\tau)}<N_r+\frac12.
\]
Because all quantities are exact and the functions \(\gamma_h(\tau)\) are computable by the closed-form weighted box integral, this extraction is polynomial-time exact postprocessing.  After recovering all \(N_h\), output \(\#F=\sum_{h=1}^d N_h\).  Thus one exact call to an integral \(R_2\) oracle in dimension \(d\), together with polynomial-time rational preprocessing and postprocessing, solves \(\#\mathrm{MON}\text{-}\mathrm{CNF}\).  Since \(\#\mathrm{MON}\text{-}\mathrm{CNF}\) is \(\#P\)-hard, exact computation of the normalized continuous integral \(R_2\) indicator is \(\#P\)-hard when the number of objectives is part of the input.

\section{Appendix: Implementations}

\subsection{Computation of three dimesional integral R2 indicator}\label{sec:verification}
The accompanying Python script \texttt{integral\_r2\_perspective.py} implements the following functions:
\begin{itemize}
\item \texttt{perspective\_r2\_value(points, anchor)}, which computes the absolute integral $R_2(P)$ by reconstructing it from a dominated anchor;
\item \texttt{perspective\_r2\_improvement(points, anchor)}, which computes $R_2(\{a\})-R_2(P)$ from reciprocal boxes;
\item \texttt{single\_point\_r2\_perspective(anchor)}, which evaluates $R_2(\{a\})$ from the semi-infinite complement decomposition;
\item \texttt{weighted\_box\_integral(box)}, the eight-corner formula in three objectives;
\item \texttt{monte\_carlo\_r2\_value(points)}, a scalarization-side Monte Carlo check;
\item \texttt{decompose\_union\_boxes\_sweep(corners)}, a transparent sweep-based box emitter used for verification.  The weighted summation routine can consume boxes from any hypervolume decomposition routine, including an $O(n\log n)$ tree-free emitter.
\end{itemize}

The verification compares the perspective formula with a subdivision-based exact evaluator over the weight simplex and with Monte Carlo integration.  The subdivision evaluator constructs the arrangement of all affine Tchebycheff pieces on $\Delta_2$ and integrates the active lower envelope cell by cell.

\begin{table}[h]
\centering
\caption{Verification of the absolute $R_2$ perspective implementation. Values are unnormalised simplex integrals.}
\small
\begin{tabular}{lrrrrr}
\toprule
instance & $n$ & perspective $R_2$ & subdivision $R_2$ & abs. diff. & MC half-width\\
\midrule
three\_point\_front & 3 & 0.158359774791 & 0.158359774791 & $2.78\times10^{-17}$ & $2.03\times10^{-4}$\\
four\_point\_front & 4 & 0.141937185975 & 0.141937185975 & $1.11\times10^{-16}$ & $1.72\times10^{-4}$\\
five\_point\_front & 5 & 0.147175934894 & 0.147175934894 & $5.55\times10^{-17}$ & $1.85\times10^{-4}$\\
\bottomrule
\end{tabular}
\end{table}

The exact perspective and subdivision values agree to floating-point precision.  The Monte Carlo estimates fall within the reported sampling uncertainty.  The same CSV file also reports the anchor-normalised improvement values.

\subsection{Numerical checks for the reduction gadgets}\label{app:sanity-checks}

The accompanying archive contains a small Python script, \texttt{sanity\_checks.py}, and a short \texttt{README.md}.  The script is not part of the proof but should serve reviewers for proof verification and help the reader gain intuition on the exact rational examples of how the three gadgets used in the reductions behave as predicted by the theory. The script performs the following checks.
\begin{enumerate}
\item It evaluates the reciprocal-diagonal construction for a small value of \(n\) and checks that the uniformly spaced parameters \(s_i=i/(n+1)\) attain the value \(1+1/(2n)\), while a perturbed instance gives a strictly larger value.
\item It checks the fixed-dimensional padding identity
\[
  R_2^{(N)}(\widehat P)=\frac2N R_2^{(2)}(P)
\]
for one explicit rational two-dimensional instance.
\item It constructs a small monotone CNF formula, builds the Bringmann--Friedrich-style anchored boxes, enumerates the Boolean cells, and verifies that uncovered cells are exactly the satisfying assignments.  It then computes the perspective-weighted cell coefficients \(\gamma_h(\tau)\) exactly and recovers the Hamming-weight counts from the weighted sum by the scale-separation extraction used in Theorem~\ref{thm:variable-dim-sharp}.
\end{enumerate}

A typical run prints output of the following form.
\begin{verbatim}
Reciprocal diagonal check
  n = 5
  uniform R2 = 11/10  expected = 11/10
  perturbed R2 > expected? True

Fixed-dimensional padding check
  N = 5
  R2_N(predicted) = (2/N) R2_2: verified exactly

#P gadget check
  uncovered cells equal satisfying assignments? True
  true Hamming-weight counts:      [0, 1, 3, 1]
  recovered Hamming-weight counts: [0, 1, 3, 1]
\end{verbatim}
The checks use only the Python standard library and exact rational arithmetic via \texttt{fractions.Fraction}.  They are therefore reproducible without numerical quadrature or floating-point tolerances.

\section*{Code availability}
The reference implementation, verification scripts, and verification data are available from the GitHub repository
\href{https://github.com/emmerichmtm/IntegralR2ByPerspectiveMapping}{\texttt{github.com/emmerichmtm/IntegralR2ByPerspectiveMapping}}.
\section*{Acknowledgment}
This research is related to the thematic research area Decision Analytics utilizing Causal Models and Multiobjective Optimization (DEMO, \url{https://www.jyu.fi/demo}) of the University of Jyv\"askyl\"a. The author thanks Bhupinder Saini for a discussion that led to the initial idea on mapping between the integration regions of the Integral R2 Indicator and that of the Hypervolume Indicator.
\section*{Declaration on the Use of Generative AI.}
Generative AI tools were used solely for coding assistance and language refinement. The author assumes full responsibility for all artefacts, datasets, report content, and results.

\end{document}